\def\rlwd{.4pt}
\def\rlht{1.1pt}
\def\shatvrule{\rule{\rlwd}{\rlht}}
\def\shat#1{%
 \ThisStyle{%
  \setbox0=\hbox{$\SavedStyle#1$}%
  \stackon[0pt]{\stackon[1pt]{\ensuremath{\SavedStyle#1}}{%
    \shatvrule\kern\wd0\kern-\rlwd\kern-\rlwd\shatvrule}}%
    {\rule{\wd0}{\rlwd}}%
 }%
}
\newcommand{\Prob}[1]{\operatorname{Pr}(#1)}
\newcommand{\Var}[1]{\operatorname{Var}\left[{ #1 }\right]}
\newtheorem{theorem}{Theorem}
\newtheorem{proposition}{Proposition}
\newcommand{\Tr}{\mathrm{Tr}}
\newcommand{\norm}[1]{\Vert #1 \Vert}
\newcommand{\absLR}[1]{\left\vert #1 \right\vert}
\newcommand{\ket}[1]{\vert{ #1 }\rangle}
\newcommand{\ketbra}[2]{\vert #1 \rangle \langle #2 \vert}
\newlength{\ketketwidth}
\newlength{\ketwidth}
\newcommand{\kettstylesep}[3]{
    \settowidth{\ketwidth}{$#2\left|#1\right\rangle$}
    \settowidth{\ketketwidth}{$#2\left.\left|#1\right\rangle\right\rangle$}
    \left|#1\right\rangle#3\hspace{\ketwidth}\hspace{-\ketketwidth}
}
\newcommand{\dket}[1]{
    \left.\mathchoice
        {\kettstylesep{#1}{\displaystyle}{\hspace{0.3em}}}
        {\kettstylesep{#1}{\textstyle}{\hspace{0.3em}}}
        {\kettstylesep{#1}{\scriptstyle}{\hspace{0.3em}}}
        {\kettstylesep{#1}{\scriptscriptstyle}{\hspace{0.25em}}}
    \right\rangle
}
\newcommand{\bbrastylesep}[3]{
    \settowidth{\ketwidth}{$#2\left\langle#1\right|$}
    \settowidth{\ketketwidth}{$#2\left\langle\left\langle#1\right|\right.$}
    #3\hspace{\ketwidth}\hspace{-\ketketwidth}\left\langle#1\right|
}
\newcommand{\dbra}[1]{
    \left\langle\mathchoice
        {\bbrastylesep{#1}{\displaystyle}{\hspace{0.3em}}}
        {\bbrastylesep{#1}{\textstyle}{\hspace{0.3em}}}
        {\bbrastylesep{#1}{\scriptstyle}{\hspace{0.3em}}}
        {\bbrastylesep{#1}{\scriptscriptstyle}{\hspace{0.25em}}}
    \right.
}
\newcommand{\dketbra}[2]{\dket{#1}\hspace{-0.2em}\dbra{#2}}
\newcommand{\bbrakettstylesep}[4]{
    \settowidth{\ketwidth}{$#3\left\langle#1\middle|#2\right\rangle$}
    \settowidth{\ketketwidth}{$#3\left\langle\left\langle#1\middle|#2\right\rangle\right.$}
    #4\hspace{\ketwidth}\hspace{-\ketketwidth}\left\langle#1\middle|#2\right\rangle#4\hspace{\ketwidth}\hspace{-\ketketwidth}
}
\newcommand{\dbraket}[2]{
    \left\langle\mathchoice
        {\bbrakettstylesep{#1}{#2}{\displaystyle}{\hspace{0.3em}}}
        {\bbrakettstylesep{#1}{#2}{\textstyle}{\hspace{0.3em}}}
        {\bbrakettstylesep{#1}{#2}{\scriptstyle}{\hspace{0.3em}}}
        {\bbrakettstylesep{#1}{#2}{\scriptscriptstyle}{\hspace{0.25em}}}
    \right\rangle
}
\newcommand{\id}{\mathbb{I}}
\newcommand{\bbI}{\mathbb{I}}
\newcommand{\bbU}{\mathbb{U}}
\newcommand{\bbZ}{\mathbb{Z}}
\newcommand{\bbP}{\mathbb{P}}
\newcommand{\bbE}{\mathbb{E}}
\newcommand{\bbB}{\mathbb{B}}
\newcommand{\bbL}{\mathbb{L}}
\newcommand{\calP}{\mathcal{P}}
\newcommand{\calB}{\mathcal{B}}
\newcommand{\calE}{\mathcal{E}}
\newcommand{\calEeff}{\calE'}
\newcommand{\calI}{\mathcal{I}}
\newcommand{\calX}{\mathcal{X}}
\newcommand{\calY}{\mathcal{Y}}
\newcommand{\calZ}{\mathcal{Z}}
\newcommand{\calU}{\mathcal{U}}
\newcommand{\calC}{\mathcal{C}}
\newcommand{\calH}{\mathcal{H}}
\newcommand{\calM}{\mathcal{M}}
\newcommand{\calS}{\mathcal{S}}
\newcommand{\calD}{\mathcal{D}}
\newcommand{\calN}{\mathcal{N}}
\newcommand{\calA}{\mathcal{A}}
\newcommand{\calG}{\mathcal{G}}
\newcommand{\BcalE}{\bar{\calE}}
\newcommand{\MZ}{\calM_{\bbZ}}
\newcommand{\bsa}{\boldsymbol{a}}
\newcommand{\comments}[1]{}
\begin{document}

\title{Classical Noise Inversion for Error-Propagatable Circuits with Minimal Overhead under General Gate-Dependent Noise} 
\author{Dayue Qin}
\email{dyqin@fudan.edu.cn}
\affiliation{Key Laboratory for Information Science of Electromagnetic Waves (Ministry of Education), Fudan University, Shanghai 200433, China}

\author{Ying Li}
\email{yli@gscaep.ac.cn}
\affiliation{Graduate School of China Academy of Engineering Physics, Beijing 100193, China}

\author{You Zhou}
\email{you\_zhou@fudan.edu.cn}
\affiliation{Key Laboratory for Information Science of Electromagnetic Waves (Ministry of Education), Fudan University, Shanghai 200433, China}

\begin{abstract}

Quantum error mitigation (QEM) is critical for extracting reliable computations from noisy quantum processors, proving itself essential not only in the near term but also as a valuable supplement to fully fault-tolerant systems in the future. 
Despite the necessity, practical QEM deployment still confronts challenges, including the excessive cost of sampling quantum circuits and reliance on unrealistic assumptions such as gate-independent noise.
{In this paper, we propose Classical Noise Inversion (CNI), which shifts the QEM from sampling diverse quantum circuits to repeated single-circuit measurement, a drastic cost reduction given that the former incurs far heavier time overhead than the latter on realistic quantum hardware. We target general noise models and establish critical conditions for their classical tractability, under which CNI remains effective. For noise models that violate the condition, we propose partial CNI, which mitigates the classically tractable factor of noise via CNI and mitigates the other part via probabilistic error cancellation tailored to gate-dependent noise. }
To minimize the sampling overhead, we introduce noise compression, which groups noise components with equivalent effects on measurement outcomes, thereby attaining theoretically optimal error-mitigation overhead. 
{The proposed protocols are particularly efficient for error-propagatable circuits, including adaptive universal quantum computing represented by Clifford+T fault-tolerant circuits, and generalized measurement represented by classical shadows. }
To demonstrate the practical merits of CNI, we integrate it with thrifty classical shadow, and use analysis and numerical simulations to show its advantages in both efficiency and accuracy over existing approaches.
By shifting the dominant overhead of QEM from costly quantum sampling to lightweight classical post-processing while achieving theoretically minimal overhead, this work substantially elevates the practical usability of noisy quantum devices. 

\end{abstract}

\maketitle

\section{Introduction}
\label{sec: intro}

Addressing noise is essential for the utility of quantum computers. Quantum error mitigation (QEM)~\cite{Temme2017,Li2017,Endo2018,Cai2023a}, which tackles noise with minimal qubit overhead, is necessary for the reliability of near-term quantum devices~\cite{Kim2023a}. In the meanwhile, QEM complements fault-tolerant architectures~\cite{Preskill2025,Aharonov2025} by suppressing the non-correctable logical errors~\cite{Suzuki2022,Wahl2023,Zhang2025a}, enhancing imperfect magic distillations~\cite{Lostaglio2021a, Piveteau2021b}, circumventing decoding bottlenecks~\cite{Smith2024} and mitigating circuit compilation imperfections~\cite{Suzuki2022}. 

\begin{figure*}
  \begin{center}
    \includegraphics[width=\linewidth]{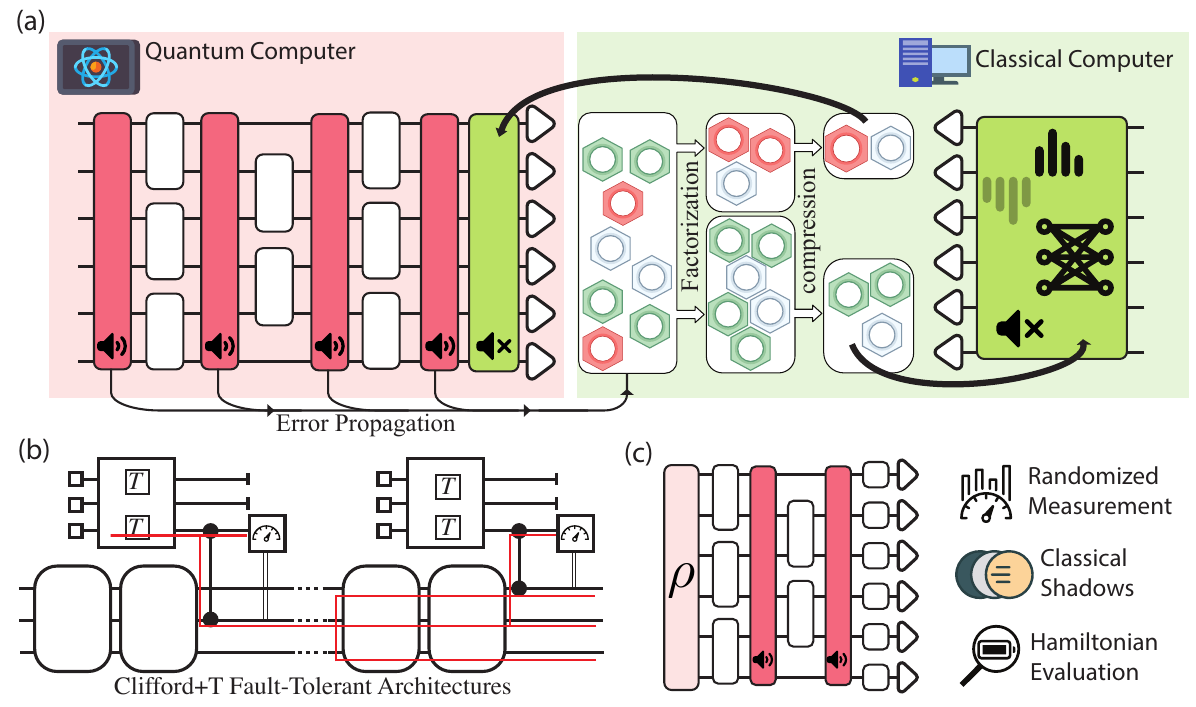}
    \caption{Illustration of the partial classical noise inversion framework and its applications. (a) The complete workflow for mitigating general noise models. The noise is propagated through the circuit and the dominant noise is canceled by classically simulating the noise inversion. Noise compression groups equivalent error components to minimize the sampling overhead. Noise factorization isolates the classically intractable components, which are mitigated by a refined probabilistic error cancellation tailored to gate-dependent noise. (b) and (c) Representative error-propagatable circuits amenable to the CNI approach. (b) As indicated by the red wire, in a Clifford+T fault-tolerant quantum circuit, any noise only passes through a limited number of non-Clifford gates, therefore noise propagation can be efficiently simulated regardless of the circuit size. (c) Quantum circuits having a classically-simulable part before measurement and the corresponding applications.
    }
    \label{fig: intro}
  \end{center}
\end{figure*} 

Despite the necessity and utility of QEM, its practical application faces {two major challenges}. {The first challenge is the excessive overhead of executing varied quantum circuits.} This overhead originates from two distinct sources. One source is a fundamental, worst-case exponential scaling of the required number of circuit samples with circuit size~\cite{Takagi2023,Tsubouchi2023,Takagi2022a,Quek2024}. 
{The other source is a practical wall-clock bottleneck: sampling a large ensemble of distinct circuits dominates the total runtime, far exceeding the quantum execution time itself. Concretely, recompiling each digital circuit into control pulses and transferring them to the device incurs a cost of hundreds of milliseconds~\cite{Karalekas2020, Fruitwala}, which is one to two orders of magnitude higher than the execution time of quantum circuits consisting of thousands of gate layers. This gap is consistent across state-of-the-art platforms: a full surface-code cycle takes only $\sim1~\mu$s on superconducting processors~\cite{GoogleQuantumAIAndCollaborators2025}, two-qubit gates are realized in $\sim270$~ns on neutral-atom processors~\cite{Bluvstein2024} and $\sim 70~\mu$s on trapped-ion systems~\cite{Ransford2025}.} 

The second challenge is that the efficacy of existing QEM techniques relies on restrictive assumptions regarding the noise properties. These methods often require the noise model to remain invariant under specific circuit modifications, a requirement that is difficult to meet under realistic, gate-dependent noise. 
For instance, probabilistic error cancellation~\cite{Temme2017,Endo2018} is derived under the assumption that the auxiliary gates inserted for mitigation do not alter the underlying noise model. Similarly, error extrapolation~\cite{Temme2017,Li2017} requires that the noise model in intentionally error-boosted circuits differs from the original only in severity, not in the fundamental structure of the error model itself. Virtual distillation, meanwhile, demands that the additional entangling gates it introduces are effectively error-free~\cite{Koczor2021,Huggins2021a,Huo2022}.

{To navigate these practical obstacles, we propose a framework that minimizes both the theoretical and practical quantum error mitigation cost. As illustrated in Fig.~\ref{fig: intro}(a), we first introduce \textit{classical noise inversion} (CNI), which mitigates propagated noise entirely in classical post-processing, thereby eliminating the massive quantum circuit-sampling cost and ensuring resilience against gate-dependent noise. Second, to address noise components that defy pure classical inversion, we establish conditions for classical tractability and develop a \textit{partial CNI} scheme based on noise factorization. Third, we introduce a \textit{noise compression} technique that groups equivalent error components to attain theoretically optimal sampling costs. This powerful framework has a wide range of applications. In the remainder of this section, we elaborate on these proposed protocols, then introduce the concept of \textit{error-propagatable circuits} where our methods can be efficiently applied, and finally demonstrate their integration with classical shadow estimation.}

CNI is a QEM framework that fundamentally shifts the mitigation overhead from quantum to classical resources. It works by classically simulating the propagation of a noise model through the gates and projective measurement, then inverting its effect entirely in post-processing. 
This approach offers two immediate advantages. First, it completely eliminates the cost of sampling distinct quantum circuits, lowering the overhead to merely sampling measurement shots from a fixed circuit, where the latter procedure is substantially faster than the former. Second, {as it operates without modifying the circuits, it remains effective under gate-dependent noise as long as the noise model is well-characterized.}

{We prove that a classical linear operation that inverts the noise effect does not exist in general---a problem that previous works have left unaddressed. Measurement error mitigation~\cite{Chen2019, Maciejewski2020, Bravyi2021b} suppresses readout errors using purely classical post-processing, by assuming that the noise is stochastic bit flips, which is not true for all quantum noise channels. While the IC-POVM incorporated with tensor-network post-processing~\cite{Filippov,Fischer2026, Farias2024} circumvents this problem, its effectiveness is limited to shallow quantum circuits. Additionally, as POVMs are implemented by inserting random gates ahead of projective measurements, mitigating general noise intrinsic to such projective measurements remains an open problem. 

For noise that cannot be completely eliminated classically, we propose partial CNI, a method that factorizes the whole propagated noise into two components: the classically tractable and intractable components. We refer to this procedure as quantum-classical factorization. The classically tractable component is suppressed via CNI in the classical post-processing stage.  Meanwhile, the classically intractable component is suppressed using a refined probabilistic error cancellation protocol. This refined scheme incorporates gate-dependent noise characteristics into the noise quasi-probability distribution, preserving robust performance even in the presence of arbitrary gate-dependent noise. }

While (partial) CNI has alleviated the costly circuit sampling, we further propose \textit{noise compression} to reach the theoretical optimality of practical quantum error mitigation. Noise compression essentially produces a QEM formula by grouping distinct error components that exhibit identical effects on the measurement results, and it is proven to achieve the optimal QEM cost. 

{We propose practical methods to efficiently implement both CNI and noise compression. This practical method leverages two key observations. First, the quasi-probability decomposition of any noise model can be expressed as a linear combination of universal basis operations~\cite{Endo2018}, which are either Clifford operations or projectors and thus permits efficient classical simulation. Second, for real quantum hardware, local gate noise or spatiotemporal noise models with a limited number of decomposition terms are sufficiently accurate to capture noise effects. Leveraging these two points, CNI operates by inverting noise models that admit an efficient classical representation. In terms of noise compression, it is efficiently realized through a subroutine we refer to as $\bbZ$-rows comparison for distinct noise components, which leverages the efficient classical simulation of Clifford gates~\cite{Aaronson2004}.}

{The proposed framework can be efficiently applied to a specific yet highly practical type of circuits termed \textit{error-propagatable circuits}. Full classical simulation of such circuits themselves becomes computationally intractable when the number of qubits is large; in contrast, simulating noise propagation within error-propagatable circuits can be accomplished efficiently. Crucially, error-propagatable circuits include not only classically simulable circuits, but also classically non-simulable circuits with important applications. 
We identify two major categories of such circuits. The first encompasses Clifford+T fault-tolerant architectures~\cite{Bravyi2005,Fowler2012} where non-Clifford operations are implemented via T-state injection. As shown in Fig.~\ref{fig: intro} (b), in such circuits, noise propagating forwards passes through only a finite number of $T$ gates, independent of the circuit size. The second category covers the part of a circuit that can be efficiently simulated classically before measurement. As shown in Fig.~\ref{fig: intro} (c), these protocols start with an unknown or classically non-simulable quantum state, which is then processed by a classically simulable circuit such as Clifford~\cite{Aaronson2004} or matchgates~\cite{Terhal2002,Jozsa2008}, as well as shallow-depth circuits~\cite{Napp2022,Bravyi2024}, then measured in the computational basis. Examples include the circuits used in randomized measurement techniques~\cite{Elben2022} (e.g., classical shadows~\cite{Huang2020a}), as well as the simultaneous estimation of commuting observables adopted in many-body Hamiltonian learning~\cite{Gokhale2020a,Miller2024,Reggio2024}. 
}

We demonstrate the efficacy of CNI by integrating it with classical shadow estimation~\cite{Aaronson2020, Huang2020a} to establish a broadly applicable framework for robust quantum state learning.
As shown in Fig.~\ref{fig: intro}(c), by applying CNI to every random measurement, CNI is integrated with thrifty shadow estimation~\cite{Helsen2023,Zhou2023}, a method that improves resource efficiency through circuit reuse. The CNI-based robust shadow framework is universally applicable to shadow estimation with all kinds of unitary ensembles~\cite{Nguyen2022,Hu2022b,Zhang2024a,Wan2023}, circuit structures~\cite{Hu2023,qingyue2025,You2025} and shallow-circuit variants~\cite{Bertoni2024a,Schuster2025,Hu2025}. Meanwhile, the framework inherits the advantages of CNI: it achieves unbiased estimation even under the challenging gate-dependent noise while maintaining low variance. By comparison, as demonstrated by our numerical simulations, previous methods are either biased \cite{Chen2021} or introduce greater variance \cite{Jnane2024}.

The remainder of this paper is organized as follows. In Sec.~\ref{sec: CNI}, we introduce the theoretical framework of CNI, demonstrating its effectiveness without circuit-sampling cost and its resilience against gate-dependent noise. Subsequently in Sec.~\ref{sec: partial CNI}, we derive the necessary and sufficient conditions for the validity of CNI, and put forward the partial CNI scheme refined to gate-dependent noise to address scenarios where the conditions fail to hold. We then propose noise compression in Sec.~\ref{sec: optimality} and prove that it attains the minimal-cost mitigation formula. Next in Sec.~\ref{sec: CNI-based shadow}, we develop the theory of CNI-based robust shadow estimation, demonstrating how its variance relates to the ideal shadow norm and the noise model. Finally in Sec.~\ref{sec: numerical }, we provide numerical simulations illustrating the effectiveness of our methods and show various advantages over existing techniques. In the Methods section, 
we provide an efficient subroutine for noise compression and provide details on the numerical simulation. 

\section{Classical noise inversion}
\label{sec: CNI}
We first introduce the Classical Noise Inversion (CNI) method and analyze its cost. CNI mitigates the effects of noise using only classical post-processing for the following general measurement setup: An unknown quantum state (density operator) $\rho$ evolves under a unitary operation $\calU:\rho\rightarrow U\rho U^\dagger$, followed by a measurement in the computational basis. We remark that when the unitary operator $U$ is sampled randomly from an ensemble, the CNI method can also be applied to randomized measurements and shadow estimation. 

In this work, we primarily consider an $n$-qubit system with Hilbert space $\mathcal{H}_d = \mathcal{H}_2^{\otimes n}$, and denote the computational basis by ${\ket{b}}$, where $b$ is an $n$-bit string. However, our method is general and can also be extended to qudit or continuous-variable systems. The unitary operations $U$ we consider are chosen to be Clifford circuits. We remark, however, that the method may also apply to matchgate circuits or other shallow circuits, as long as they can be efficiently simulated classically. 

Generally, the expectation of a linear function $F$ of the computational basis states, e.g., $F=\sum_b f(b)\ketbra{b}{b}$, is estimated using the measurement outcomes. The noiseless expectation value of $f$ is  
\begin{equation}\label{eq:f}
   f= \dbra{F}  \calM_{\bbZ} \calU \dket{\rho}. 
\end{equation}
Here, we clarify the definition and notation used in Eq.~\eqref{eq:f} and throughout the paper. $\dket{\rho}$ is the vectorization of $\rho$ in the Pauli basis: $\dket{\rho} = \sum_{\sigma} \Tr (\sigma^\dagger \rho ) \dket{\sigma}$, where  $\sigma = P / \sqrt{d}$ with $P\in \bbP =  \{I, Z, X, Y\}^{\otimes n }$. $\{\dket{\sigma}\}$ are orthonormal bases in terms of the Hilbert-Schmidt inner product: $\dbraket{\sigma}{\sigma^\prime} = \Tr (\sigma^\dagger \sigma^\prime) = \delta_{\sigma, \sigma^\prime}$. Correspondingly, $\calU$ denotes the unitary channel of $U$, represented by its Pauli transfer matrix (PTM), which is a $4^n \times 4^n$ matrix in the Pauli basis. The PTM of a general quantum channel $\calE$ is written as $\calE = \sum_{\sigma, \sigma^\prime}\lambda_{\sigma, \sigma^\prime} \dketbra{\sigma}{\sigma^\prime}$, where the coefficients are $\lambda_{\sigma, \sigma^\prime} = \Tr \left[\sigma^\dagger \calE (\sigma^\prime)\right]$. PTM allows us to express the channel concatenation as matrix multiplication: $\calE_2 \circ \calE_1 (\rho) = \calE_2 \calE_1 \dket{\rho}$. $\calM_{\bbZ} = \sum_{b\in\{0,1\}^n} \dketbra{b}{b}$ is the channel of computational basis measurement, where $\bbZ = \{I, Z\}^{\otimes n}$ denotes the set of $n$-qubit sign-free Pauli-Z operators. One can find that $\calM_{\bbZ}$ is a completely dephasing channel $ \calM_{\bbZ} = \sum_{\sigma\in \bbZ} \dketbra{\sigma}{\sigma}$. $\dbra{F}$ is the conjugate of $\dket{F}$. 

In real experiments, quantum operations and measurements are noisy, and the intended quantum process $\calU$ is replaced by $\tilde{\calU} := \calE\calU$, where $\calE = \tilde{\calU}\calU^\dagger$ captures the net effect of both circuit noise and computational-basis measurement noise. 
To mitigate the effects of noise, conventional probabilistic error cancellation techniques~\cite{Temme2017,Endo2018} involve inserting random channels $\calB$ into the quantum circuit. This approach is rooted in the quasi-probability decomposition of noise inversion, which is mathematically expressed as:
\begin{equation}\label{eq:oriInv}
\calE^{-1} = \sum_{\calB\in \bbB} Q_{\calB} \calB,
\end{equation}
where $Q_{\calB}$ are real-valued coefficients, and $\bbB$ is a set of complete basis channels that are classically simulable. Examples include the universal basis channels introduced in \cite{Endo2018}, which comprise Clifford gates and projectors onto the eigenbases of the Pauli operators $X$, $Y$ and $Z$. 
The factor $ \gamma = \sum_{\calB}|Q_{\calB}| $, known as total negativity, quantifies the difficulty of classically simulating quantum channels~\cite{Veitch2012, Pashayan2015, Howard2017, Bennink2017}. In the context of quantum error mitigation, $\gamma$ corresponds to the quantum circuit sampling overhead needed to attain a specific estimation precision.
As elaborated in the Introduction, this conventional approach encounters two key challenges: the huge time consumption of sampling quantum circuits and instability under gate-dependent noise. 

CNI effectively addresses both aforementioned challenges by eliminating the need for quantum-circuit sampling. We assume that the noise can propagate through the final measurement channel, i.e., that there exists $\calE^\prime$ such that $\calE^{\prime} \calM_{\bbZ} = \calM_{\bbZ}\calE$. {We will present the necessary and sufficient conditions for this assumption in the next section, and propose effective approaches to address cases where the assumption fails to hold.} CNI mitigates the noise effect by classically simulating $\calE^{\prime -1}$ in classical post-processing and applying it to the measurement outcomes of the noisy circuit. This yields an unbiased estimator of the ideal expectation value since $\calE^{\prime -1}\calM_{\bbZ} \tilde{\calU} = \calM_{\bbZ} \calU$. 

The technique of simulating $\calE^{\prime -1}$ is versatile, as it can be implemented using either Monte Carlo methods or tensor-network-based approaches etc. In this paper, we focus on the Monte Carlo method, which is grounded in the quasi-probability decomposition: $\calE^{\prime -1} = \sum_{\calB\in \bbB} Q^{\prime}_{\calB} \calB$. Notably, the coefficients $Q^{\prime}_{\calB}$ can be efficiently determined if the errors in the circuit are well-characterized (e.g. via gate-set tomography~\cite{Greenbaum2015, Nielsen2021}). 
In the Monte Carlo simulation, upon recording a measurement outcome $b$, we sample instances of $\calB$ from the distribution $\Prob{\calB} = |Q_{\calB}^{_\prime}|/\gamma^\prime$ with $\gamma^\prime = \sum_{\calB\in \bbB} |Q^{\prime}_{\calB}|$. In practice, we can measure the circuit for multiple shots and sample multiple basis channels for each shot. We present the unbiased estimator given by CNI and its performance in the following theorem, with the full proof left in Appendix~\ref{app: proof of variance of CNI}. 

\begin{theorem}[Performance guarantee of CNI] 
\label{th: perf CNI}
With one-shot noisy measurement outcome $b$ and one instance of a random basis channel $\calB$, the unbiased estimator of the expectation value $f$ in Eq.~\eqref{eq:f} is given by
\begin{equation}\label{eq:hatf}
  \hat{f}_{b, \calB} = \gamma' \frac{Q'_{\calB}}{|Q'_{\calB}|} \dbra{F}\calB\dket{b}. 
\end{equation}

Suppose that the noisy circuit is measured for $K$ shots. For each shot, $L$ instances of $\calB$ are generated, yielding $K\times L$ estimators of $f$ constructed through \eqref{eq:hatf} in total. Let $\hat{f}_{\mathrm{tot}}$ denote the mean of these estimators, and the variance satisfies
\begin{equation}
  \Var{\hat{f}_{\mathrm{tot}}} \leq \frac{\gamma^{\prime 2}}{K} \left(\frac{1}{L} \norm{F}_{\star}^2 + \left(1 - \frac{1}{L}\right)\norm{F}_{\circ}^2\right),
  \label{eq: variance of CNI}
\end{equation}
where $\norm{\cdot}_{\star}$ and $\norm{\cdot}_{\circ}$ are semi-norms defined in Eq.~\eqref{eq: norm F star} and Eq.~\eqref{eq: norm F o} in Appendix~\ref{app: proof of variance of CNI}, and satisfy $\norm{\cdot}_{\circ}\leq \norm{\cdot}_{\star}\leq \norm{\cdot}_{\infty}$ with $\norm{\cdot}_\infty$ denoting the Schatten $\infty$-norm (largest absolute singular value). 
\end{theorem}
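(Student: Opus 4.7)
The plan is to establish the two parts of the theorem—unbiasedness of $\hat{f}_{b,\calB}$ and the variance bound on $\hat{f}_{\mathrm{tot}}$—by direct expectation computation and the law of total variance. For unbiasedness, I would compute $\bbE[\hat{f}_{b,\calB}]$ over $b \sim P(b) = \dbra{b}\tilde{\calU}\dket{\rho}$ and $\calB \sim |Q'_\calB|/\gamma'$. The sampling weight of $\calB$ cancels both the $\gamma'$ prefactor and the sign $Q'_\calB/|Q'_\calB|$ in the estimator, so summing over $\calB$ reassembles $\sum_\calB Q'_\calB \calB = \calE'^{-1}$, and summing over $b$ with weight $\dbra{b}\tilde{\calU}\dket{\rho}$ reassembles $\sum_b \dketbra{b}{b}\tilde{\calU} = \calM_{\bbZ}\tilde{\calU}$. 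Using $\tilde{\calU} = \calE\calU$ together with the propagation identity $\calM_{\bbZ}\calE = \calE'\calM_{\bbZ}$, everything collapses to $\dbra{F}\calM_{\bbZ}\calU\dket{\rho} = f$.

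For the variance, the $K$ shots are i.i.d. so $\Var{\hat{f}_{\mathrm{tot}}} = (1/K)\Var{\hat{g}}$, where $\hat{g} = (1/L)\sum_{l=1}^L \hat{f}_{b,\calB_l}$ averages the $L$ $\calB$-samples attached to one noisy outcome $b$. Since $\calB_1,\ldots,\calB_L$ are conditionally i.i.d.\ given $b$, the law of total variance yields
\begin{equation*}
\Var{\hat{g}} = \frac{1}{L}\,\bbE_b\bigl[\Var{\hat{f}_{b,\calB}\mid b}\bigr] + \Var{\mu(b)},
\end{equation*}
with $\mu(b) := \bbE_\calB[\hat{f}_{b,\calB}\mid b] = \dbra{F}\calE'^{-1}\dket{b}$. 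Expanding both variances and discarding the non-negative $f^2$ term, the bound reorganizes into $\Var{\hat{g}} \le (1/L)\,A + (1-1/L)\,B$, where $A := \bbE_{b,\calB}[\hat{f}_{b,\calB}^2]$ captures the single-sample fluctuation of the quasi-probability draw and $B := \bbE_b[\mu(b)^2]$ captures the residual shot fluctuation after the $\calB$-average.

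The final step is to identify $A/\gamma'^{2}$ and $B/\gamma'^{2}$ with the semi-norms $\norm{F}_\star^2$ and $\norm{F}_\circ^2$ defined in the appendix, which entails taking appropriate suprema over the unknown state $\rho$ to produce a state-independent bound. The ordering $\norm{F}_\circ \le \norm{F}_\star$ then follows from Jensen's inequality applied to the conditional expectation given $b$, $\mu(b)^2 \le \bbE_\calB[\hat{f}_{b,\calB}^2\mid b]$, which gives $B \le A$ after averaging over $b$. The outer bound $\norm{F}_\star \le \norm{F}_\infty$ follows because each $\calB \in \bbB$ is a channel mapping the pure state $\ketbra{b}{b}$ to a density matrix, so $|\dbra{F}\calB\dket{b}| \le \norm{F}_\infty$ for every $\calB$ and $b$. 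The main obstacle I anticipate is the bookkeeping around the semi-norms: their definitions must be crafted so that the worst-case bounds on $A$ and $B$ over admissible $\rho$ reproduce the right-hand side of \eqref{eq: variance of CNI}, and so that the two inequalities $\norm{F}_\circ \le \norm{F}_\star \le \norm{F}_\infty$ emerge cleanly. Once those definitions are in place, the variance inequality is purely algebraic and the $1/L$ and $1-1/L$ weights arise transparently from the law-of-total-variance decomposition.
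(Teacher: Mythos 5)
Your proposal is correct and follows essentially the same route as the paper: the unbiasedness computation is identical, and your law-of-total-variance decomposition (with the $-f^2$ term discarded) produces exactly the same $\frac{1}{L}\bbE_{b,\calB}[\hat{f}^2] + (1-\frac{1}{L})\bbE_b[\mu(b)^2]$ split that the paper obtains by expanding the squared $L$-average into diagonal and cross terms, after which the semi-norms are read off as suprema over $\rho$ in both cases. The only minor refinement worth noting is that some basis channels in $\bbB$ (e.g.\ Pauli-eigenbasis projectors) are trace-nonincreasing rather than trace-preserving, so the bound $|\dbra{F}\calB\dket{b}|\leq\norm{F}_{\infty}$ uses $\Tr(\calB(\ketbra{b}{b}))\leq 1$ rather than equality.
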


As established in Theorem~\ref{th: perf CNI}, 
CNI achieves an unbiased estimate of the ideal expectation value with precision $\epsilon$ using $O(\gamma^{\prime 2}/\epsilon^2)$ measurement shots. In comparison, conventional quantum error mitigation requires $O(\gamma^{2}/\epsilon^2)$ quantum circuit samples. Since the time consumed by sampling random quantum circuits is far greater than that of sampling measurement results of a fixed circuit~\cite{Karalekas2020}, CNI significantly improves practical efficiency. Furthermore, conventional methods rely on the assumption that noise models remain identical across sampled variant circuits, an assumption that is never perfectly satisfied by realistic quantum systems. By contrast, our approach maintains effectiveness even under gate-dependent noise models, as it operates entirely in classical postprocessing without modifying the quantum circuit itself. 
Finally, due to $\norm{F}_{\circ} \leq \norm{F}_{\star}$, CNI allows  further reduction of the variance by increasing $L$, which is purely classical postprocessing, without the need of measuring the quantum circuit. 

In addition to the theoretical advantages of CNI itself, our further innovation ensures that the overhead $\gamma^\prime$ of CNI is smaller than the overhead $\gamma$ of conventional methods.  We recognize that only noise directly affecting measurement outcomes requires inversion, not the entire noise channel. Building on this insight, we develop an efficient procedure called noise compression to achieve the optimal value of $\gamma^\prime$ while ensuring it remains lower than that of conventional methods. 

\section{{Partial CNI for classically intractable noise}}  
\label{sec: partial CNI}
{In this section, we establish the condition under which the CNI method can cancel the noise completely, and provide tangible solutions in the case where the condition is violated.}

As previously mentioned, CNI is valid if and only if the overall noise can propagate through the computational-basis measurement channel. Formally, this requires the existence of a noise channel $\calE^\prime$ such that the following commutation relation holds: $\calE^{\prime} \calM_{\bbZ} = \calM_{\bbZ}\calE$. Note that $\calE^\prime$ may not be unique.   
To characterize precisely when this holds, we establish in the following proposition the necessary and sufficient condition for CNI validity, with the proof deferred to Appendix~\ref{app: channel propagation}. Intuitively, CNI is valid if and only if the noise channel does not map any non-$\bbZ$ Pauli operator into a Pauli operator in $\bbZ$.
\begin{proposition}\label{prop:valid}
    CNI is valid if and only if 
  \begin{equation}
    \dbra{\sigma}\calE \dket{\nu} = 0, \; \forall\; \sigma \in \bbZ, \nu \notin \bbZ, \label{eq: condition}
  \end{equation} 
  where $\bbZ = \{I, Z\}^{\otimes n}$ denotes the set of $n$-qubit sign-free Pauli-Z operators. 
\end{proposition}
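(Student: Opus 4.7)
The plan is to work entirely in the Pauli transfer matrix (PTM) representation and exploit the fact that $\calM_\bbZ = \sum_{\sigma\in\bbZ}\dketbra{\sigma}{\sigma}$ is the orthogonal projector onto the ``Z-subspace'' $\mathcal{H}_Z := \mathrm{span}\{\dket{\sigma}:\sigma\in\bbZ\}$. In particular $\calM_\bbZ^2 = \calM_\bbZ$, with $\calM_\bbZ\dket{\sigma} = \dket{\sigma}$ for $\sigma\in\bbZ$ and $\calM_\bbZ\dket{\nu} = 0$ for $\nu\notin\bbZ$. Validity of CNI is, by definition, the existence of a linear map $\calE'$ with $\calE' \calM_\bbZ = \calM_\bbZ \calE$, so the task is to show that this existence is equivalent to~\eqref{eq: condition}.

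For necessity, I would test the hypothesized commutation identity on each non-$\bbZ$ Pauli basis vector $\dket{\nu}$. The left-hand side vanishes automatically because $\calM_\bbZ\dket{\nu} = 0$, so the identity forces $\calM_\bbZ\calE\dket{\nu} = 0$. Expanding $\calM_\bbZ$ in Paulis and using orthonormality of $\{\dket{\sigma}\}$, this is equivalent to $\dbra{\sigma}\calE\dket{\nu}=0$ for every $\sigma\in\bbZ$, which is exactly~\eqref{eq: condition}. This direction collapses to a one-line calculation once the PTM picture is in place.

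For sufficiency, I would first repackage~\eqref{eq: condition} as the operator identity $\calM_\bbZ\calE = \calM_\bbZ\calE\calM_\bbZ$: the condition kills every matrix element from non-$\bbZ$ Paulis (input) to $\bbZ$ Paulis (output), so $\calE$ is block-triangular in the Pauli basis with respect to the decomposition $\mathcal{H}_Z \oplus \mathcal{H}_Z^{\perp}$. It then suffices to exhibit an explicit $\calE'$ reproducing $\calM_\bbZ\calE$ after post-composition with $\calM_\bbZ$. A natural candidate is $\calE' := \calM_\bbZ\calE\calM_\bbZ + (\id-\calM_\bbZ)$; a direct check using $\calM_\bbZ^2=\calM_\bbZ$ and $(\id-\calM_\bbZ)\calM_\bbZ = 0$ gives $\calE'\calM_\bbZ = \calM_\bbZ\calE\calM_\bbZ = \calM_\bbZ\calE$, closing the equivalence.

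I do not anticipate a serious obstacle here, since the proposition is fundamentally a block-triangularity statement about $\calE$ in the Pauli basis. The only subtlety worth flagging is that $\calE'$ is not uniquely determined: only its restriction to $\mathcal{H}_Z$ is pinned down, leaving substantial freedom on the complement. That freedom is actually useful downstream, when $\calE'$ is additionally required to be invertible (or a stochastic channel on computational-basis outcomes) so that the quasi-probability expansion $\calE'^{-1} = \sum_{\calB} Q'_\calB \calB$ can be implemented in Monte Carlo post-processing; one can then pick a convenient representative, for instance the identity extension on $\mathcal{H}_Z^{\perp}$ used above.
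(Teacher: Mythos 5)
Your proposal is correct and follows essentially the same route as the paper's proof in Appendix~\ref{app: channel propagation}: both work in the PTM block decomposition induced by the projector $\calM_{\bbZ}$, identify Eq.~\eqref{eq: condition} as the vanishing of the off-diagonal block $\calE_{01}$, and establish sufficiency by exhibiting an explicit $\calE'$ (your choice $\calM_{\bbZ}\calE\calM_{\bbZ}+(\id-\calM_{\bbZ})$ is the special case $A=0$, $B=\id$ of the paper's general form in Eq.~\eqref{eq: propagate channel m}). Your closing remark on the non-uniqueness of $\calE'$ matches the paper's characterization of all admissible propagated channels.
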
 

{In the $\bbZ\oplus\bbZ^\perp$ partition of the PTM space with projectors $\Pi_\bbZ$ onto $\bbZ$ and $\Pi_\perp=\id-\Pi_\bbZ$, Eq.~\eqref{eq: condition} can be expressed as $\Pi_\bbZ \calE \Pi_{\perp} = 0$. As shown in Fig.~\ref{fig: validity and optimality}(a), Eq.~\eqref{eq: condition} requires the upper-right block $\Pi_\bbZ \calE \Pi_{\perp}$ of the noise PTM to vanish. We term this block the \textit{classically intractable} block of the noise, as any non-zero element in this block cannot be undone by classical post-processing. The dominant sources of noise, including depolarizing, dephasing, and amplitude-damping, satisfy Proposition~\ref{prop:valid}, while components that violate this condition exist in general~\cite{Fruitwala}.
} 

{CNI alone mitigates the noise only partially when this classically intractable part persists, and the residual must be removed inside of the circuit. To that purpose, we propose partial CNI, which mitigates the classically tractable part via CNI while mitigating the classically intractable part by inserting error-mitigating gates into the circuit. The latter resembles conventional probabilistic error cancellation~\cite{Temme2017, Endo2018}, except that the quasi-probabilities are refined to account for the noise on the error-mitigating gates, which enables unbiased error mitigation even in the presence of gate-dependent noise. 
This quantum-classical hybrid approach adopts partial circuit sampling in place of full measurement shots. Nevertheless, when the classically intractable component is limited in scale, its circuit sampling overhead is substantially lower than that of traditional methods. 

To make this precise, we write the PTM of the noise in blocks with respect to the $\bbZ\oplus\bbZ^\perp$ partition, $\calE_{00}=\Pi_\bbZ\calE\Pi_\bbZ$, $\calE_{01}=\Pi_\bbZ\calE\Pi_\perp$, $\calE_{10}=\Pi_\perp\calE\Pi_\bbZ$, and $\calE_{11}=\Pi_\perp\calE\Pi_\perp$. The noise channel then factorizes into
\begin{equation}
  \calE = \calE_c \calE_q,
  \label{eq: qc factorization}
\end{equation}
where $\calE_c$ is the classically tractable factor and $\calE_q$ is the classically intractable factor. The explicit expression of the factors is stated in the following proposition, which is proved in Appendix~\ref{app: tractable intractable decomposition}.

\begin{proposition}[Quantum-classical factorization]
\label{prop: Quantum-classical factorization}
Let $\calE$ be a trace-preserving map whose $\calE_{00}$ is invertible on the $\bbZ^\perp$-subspace. Then the factors in Eq.~\eqref{eq: qc factorization} are 
\begin{eqnarray}
  \calE_c&=&\begin{pmatrix}\calE_{00}&0\\[2pt]\calE_{10}&\calE_{11}-\calE_{10}\calE_{00}^{-1}\calE_{01}\end{pmatrix},\\
  \calE_q&=&\begin{pmatrix}\id&\calE_{00}^{-1}\calE_{01}\\[2pt]0&\id\end{pmatrix},
\end{eqnarray}
in which both factors are trace-preserving, the factor $\calE_c$ is classically tractable, satisfying $\Pi_{\bbZ}\calE_c\Pi_{\perp} = 0$, and $\calE_q=\calI+\calN$ deviates from the identity only through the classically intractable shift $\calN=\Pi_\bbZ\,(\calE_{00}^{-1}\calE_{01})\,\Pi_\perp$, which maps $\bbZ^\perp$ into $\bbZ$.
\end{proposition}

We eliminate $\calE_c$ via classical post-processing with CNI  according to $\calE_c^{-1} = \sum_{\calB\in \bbB}Q_{\calB, c}\calB$: a classically simulable  channel $\calB^\prime$ satisfying $\calB^\prime \calM_{\bbZ} = \calM_{\bbZ}\calB$ is sampled and applied to the measurement outcomes. Meanwhile, we suppress $\calE_q$ by sampling random circuits based on the decomposition $\calE_q^{-1} = \sum_{\calB\in\bbB }Q_{\calB, q} \calE_\calB\calB$, where $\calE_\calB$ is the gate-dependent noise associated with $\calB$. The quasi-probabilities $\{Q_{\calB,q}\}$ accounting for both $\calE_q$ and $\calE_\calB$ for all $\calB\in \bbB$ are obtained by solving the linear system}
\begin{equation} 
  \sum_{\calB\in\bbB} Q_{\calB, q}\, \calE_{\calB}\, \calB\, \calE_q = \calI.
\end{equation} 
This solution exists as long as the noisy operations $\{\calE_\calB\calB\}_{\calB\in\bbB}$ span the relevant space, and is unique when they are linearly independent; both hold at small error rates. 

\begin{figure}
  \begin{center}
    \includegraphics[width=1\linewidth]{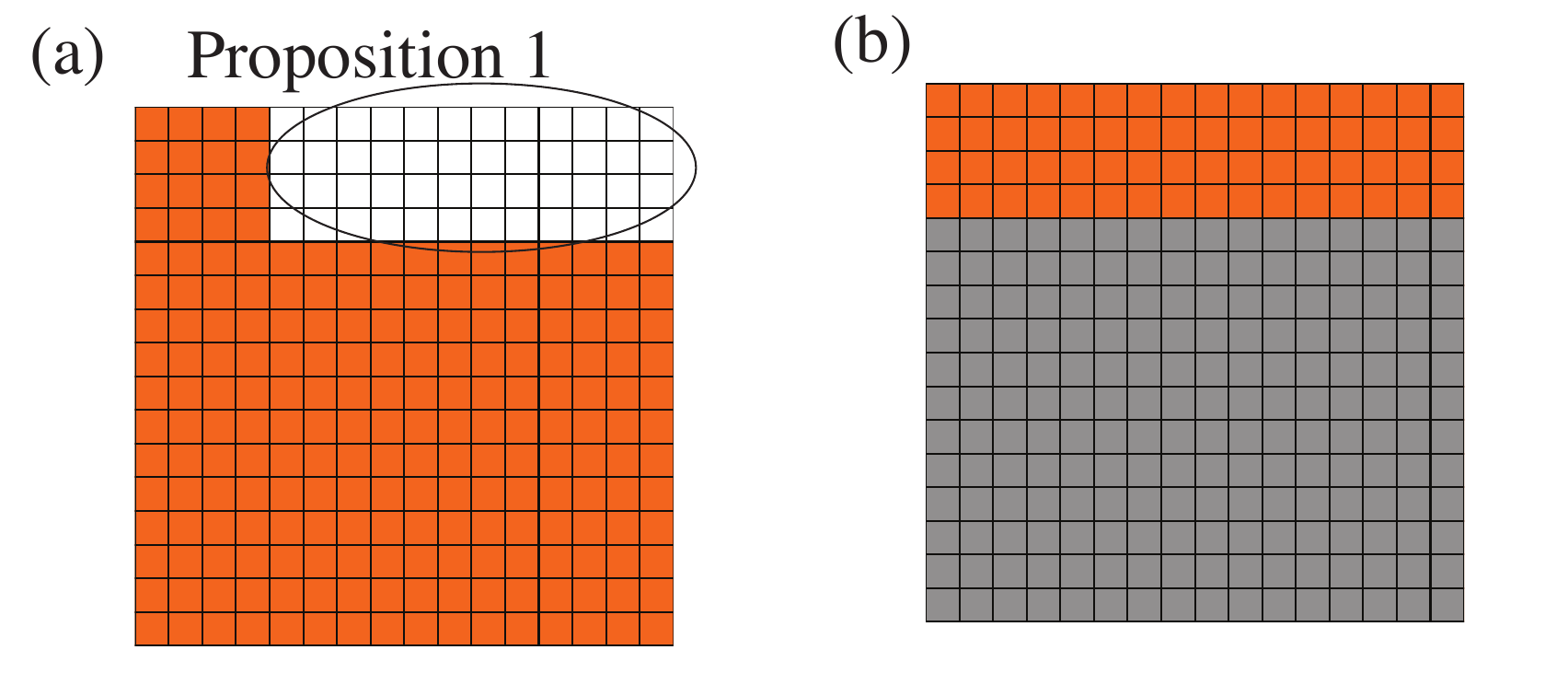} 
    \caption{{A two-qubit example illustrating the validity and optimality of CNI. (a) The upper-right block of the Pauli transfer matrix (PTM) being zero (empty plaquettes within the ellipse) is the necessary and sufficient condition for CNI validity (Proposition~\ref{prop:valid}). (b) Noise compression: we only need to consider the $\bbZ$-rows (plaquettes filled with orange), and neglect the non-$\bbZ$-rows elements (plaquettes filled with gray). }}
    \label{fig: validity and optimality}
  \end{center}
\end{figure}

{
\begin{theorem}[Coarse-grained variance bound of hybrid estimator]
\label{th: coarse grained variance}
Consider the partial quantum sampling scheme that cancels the classically intractable factor $\calE_q$ by sampling random noisy circuits and the classically tractable factor $\calE_c$ by CNI, drawing $M$ random circuits, $K$ measurement shots per circuit, and $L$ classical basis-channel instances per shot. The resulting estimator $\hat f_{\mathrm{tot}}$ is unbiased, $\bbE[\hat f_{\mathrm{tot}}]=f$, and its variance satisfies 
\begin{equation}
  \Var{\hat f_{\mathrm{tot}}}\ \le\ \frac{\gamma_q^2}{M}\Big(1+\frac{\gamma_c^2}{K}\Big)\norm F_\infty^2, \label{eq: hybrid variance main}
\end{equation}
where $\gamma_q=\sum_{\calB}\absLR{Q_{\calB,q}}$ and $\gamma_c=\sum_{\calB}\absLR{Q_{\calB,c}}$ are the quantum and classical sampling overheads, and $\norm F_\infty$ is the Schatten $\infty$-norm of $F$.
\end{theorem}
}

{The bound's irreducible floor $\gamma_q^2\norm F_\infty^2/M$ carries only the quantum sampling overhead $\gamma_q$ of the intractable factor and is suppressed solely by drawing more random circuits $M$; the potentially large classical overhead $\gamma_c$ enters only the cheaper $\gamma_q^2\gamma_c^2\norm F_\infty^2/(MK)$ term. The bound is coarse-grained, as $L$ does not appear explicitly here. Nevertheless, the variance can be reduced further by increasing $L$, exactly as in pure CNI: in the $1/(MK)$ term $\norm F_\infty^2$ may be replaced by a tighter term $\tfrac1L\norm F_{\star,\prime}^2+(1-\tfrac1L)\norm F_{\circ,\prime}^2$, where $F_{\star,\prime}$ and $F_{\circ,\prime}$ are semi-norms satisfying $ \norm F_{\circ,\prime} \leq \norm F_{\star,\prime} \leq \norm F_\infty$. 
Finally, when the classically intractable part vanishes ($\calE_q=\calI$, $\gamma_q=1$), the quantum sampling cost disappears and the bound reduces to the pure-CNI result of Theorem~\ref{th: perf CNI}: the cost reverting entirely to the measurement shots $K$.}

{Distributing the inversion over the two factors is generally sub-optimal, but the penalty is tightly bounded. Theoretically, if $\gamma$ ($\gamma_c$ and $\gamma_q$) saturates the information-theoretic optimal cost of simulating $\calE$ ($\calE_c$ and $\calE_q$), the following holds
\begin{equation}
\begin{aligned}
  \gamma_q&\ \le\ 1+\norm{\calN}_\diamond,\\
   \gamma_q\gamma_c&\ \le\ \big(1+\norm{\calN}_\diamond\big)^2\,\gamma,
\end{aligned}\label{eq: gamma_q}
\end{equation}
as proved by Proposition~\ref{prop: near optimal factorization} in Appendix~\ref{app: penalty}. The quantum sampling overhead $\gamma_q$ and the penalty $c\le(1+\norm{\calN}_\diamond)^2$ are governed solely by the diamond norm of the classically intractable shift $\calN$. When the intractable part is small ($\norm{\calN}_\diamond\ll1$), $c\gtrsim1$, $\gamma_q\gtrsim1$ and $\gamma_c\approx\gamma$, therefore isolating the intractable contribution comes essentially for free.
In practical implementations, the cost is actually larger than the information-theoretic optimum, since the noise is decomposed by the universal basis channels instead of two CPTP maps that minimize the cost.
When the universal bases are aligned with the noise~\cite{takagiOptimalResourceCost2021}, i.e., the bases are expressive enough to realize the optimal decomposition, the tight bound in Eq.~\ref{eq: gamma_q} is achieved; when they are misaligned, $\gamma_q$ can exceed $1+\norm{\calN}_\diamond$, and diverges when $\calN$ lies outside the span of the noisy basis set. }

{While this section focuses on the definite quantum-classical factorization in Eq.~\eqref{eq: qc factorization}, the cost can be prohibitive since the size of the PTM of $\calE$ grows exponentially with the number of qubits. To solve this problem, we propose a semi-definite program to efficiently obtain a decomposition while minimizing the quantum sampling overhead $\gamma_q$. The program exploits that the noise is typically either a propagated local noise channel or a global noise model with a polynomial-sized quasi-probability decomposition. For this regime, only a small basis set $\bbB$ is involved, and the program outputs the decomposition coefficients $\{Q_{\calB, c}\}$ and $\{Q_{\calB, q}\}$ by minimizing $\gamma_q$ under the constraint $\calE_c^{-1}\calE_q^{-1}\calE = \calI$, achieving a quantum sampling cost no larger than (generically below) that of the explicit block-$UL$ factorization of Eq.~\eqref{eq: qc factorization}. }

\section{Noise compression}
\label{sec: optimality}
{Having established the condition of validity for CNI and proposed a method with partial quantum sampling to remove the residual at a controlled circuit sampling cost when the condition is violated}, we now turn to the optimality, which concerns choosing the inverse channels $\calE_c^{\prime -1}$ and $\calE_q^{\prime - 1}$ that perfectly cancel the noise while minimizing the sampling cost. The optimal performance is achieved through \textit{noise compression}, which groups distinct basis channels with identical measurement effects.

{To illustrate the idea of noise compression, we start with the inversion of a general noise model $\calE^{-1}$, which incurs an overhead of $\gamma = \sum_{\calB} |Q_{\calB}|$ as given by Eq.~\eqref{eq:oriInv}. As shown in Fig.~\ref{fig: validity and optimality} (b), only the $\bbZ$-rows of the PTM affect the measurement outcomes: $\calM_Z\calE = \Pi_{\bbZ}\calE\Pi_{\bbZ} + \Pi_{\bbZ}\calE\Pi_{\perp}$. }
To reduce the sampling overhead, we divide the basis channel set $\bbB$ into subsets: $\{\bbB_1,\bbB_2, \cdots,\bbB_T\}$, where basis channels within the same subset share the same $\bbZ$-rows. Formally, any $\calB$ and $\calB^\prime$ within the same subset satisfy
\begin{equation}
  \Pi_{\bbZ}\calB\Pi_{\bbZ} + \Pi_{\bbZ}\calB\Pi_{\perp} = \Pi_{\bbZ}\calB^\prime\Pi_{\bbZ} + \Pi_{\bbZ}\calB^\prime\Pi_{\perp}. 
\end{equation}
Denote the set of representative basis channels as $\bbB^\prime = \{\calB_1, \calB_2, \cdots,\calB_T\}$, where $\calB_t$ is an arbitrarily chosen channel from $\bbB_t$, the optimal noise inversion shows 
\begin{equation}\label{eq:comInv}
  \calE^{\prime -1} = \sum_{t=1}^{T} Q^\prime_{\calB_t} \calB_t, 
\end{equation}
where the coefficients are updated by summing over the subset: $Q^\prime_{\mathcal{B}_t} = \sum_{\mathcal{B}\in \mathbb{B}_t} Q_{\mathcal{B}}$. We refer to the inverse in Eq.~\eqref{eq:comInv} as the compressed inverse, which perfectly cancels the error because $\calM_Z\calE^{\prime -1} = \calM_Z\calE^{-1}$ since $\calE^{\prime -1}$ and $\calE^{-1}$ share the same $\bbZ$-rows. The optimality can be characterized by the following proposition. 

\begin{proposition}
   The compressed inverse in Eq.~\eqref{eq:comInv} satisfies $\calM_{\bbZ}\calE^{\prime -1}\calE = \calM_{\bbZ}$, i.e. the noise $\calE$ is completely canceled. The sampling cost $\calE^{\prime -1}$ is always less than that of the original inverse $\calE^{-1}$ in \eqref{eq:oriInv}, i.e. $\gamma^\prime \leq \gamma$ where $\gamma^\prime = \sum_{t=1}^T |Q^\prime_{\calB_t}|$ and $\gamma = \sum_{\calB\in \bbB} |Q_{\calB}|$.
\end{proposition}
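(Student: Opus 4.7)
The plan is to prove the two claims separately: the noise-cancellation identity and the inequality $\gamma^{\prime}\le\gamma$. Both reduce to elementary linear algebra once I invoke the block-diagonal structure of $\calE$ guaranteed by Proposition~\ref{prop:valid}.

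For noise cancellation, I would first observe that the validity condition $\dbra{\sigma}\calE\dket{\nu}=0$ for $\sigma\in\bbZ,\,\nu\notin\bbZ$ implies $\calM_{\bbZ}\calE=\calM_{\bbZ}\calE\calM_{\bbZ}$, so only the $\bbZ\to\bbZ$ block $A$ of $\calE$ survives after the measurement channel. Since $\calE$ is then block-diagonal on the $\bbZ$ versus non-$\bbZ$ split, its inverse is too, and the $\bbZ$-block of $\calE^{-1}$ is $A^{-1}$. The key step is to show the compressed inverse shares this same $\bbZ$-block: for any $\nu,\sigma\in\bbZ$,
\begin{align*}
\dbra{\nu}\calE^{\prime -1}\dket{\sigma}
&=\sum_{t}Q^{\prime}_{\calB_t}\dbra{\nu}\calB_t\dket{\sigma}\\
&=\sum_{t}\Bigl(\sum_{\calB\in\bbB_t}Q_{\calB}\Bigr)\dbra{\nu}\calB_t\dket{\sigma}\\
&=\sum_{t}\sum_{\calB\in\bbB_t}Q_{\calB}\dbra{\nu}\calB\dket{\sigma}=\dbra{\nu}\calE^{-1}\dket{\sigma},
\end{align*}
where the third equality uses Eq.~\eqref{eq: same Z block} (basis channels within a subset share their $\bbZ$-block, regardless of which representative is picked). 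Composing with $\calM_{\bbZ}\calE$ then yields $A^{-1}A=I$ on the $\bbZ$-subspace, i.e., $\calE^{\prime -1}\calM_{\bbZ}\calE=\calM_{\bbZ}$ in the sense needed for CNI.

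For the cost bound, I would apply the triangle inequality within each subset and sum:
\begin{equation*}
\gamma^{\prime}=\sum_{t=1}^{T}\Bigl|\sum_{\calB\in\bbB_t}Q_{\calB}\Bigr|\le\sum_{t=1}^{T}\sum_{\calB\in\bbB_t}|Q_{\calB}|=\gamma,
\end{equation*}
with equality precisely when every $Q_{\calB}$ in a given subset shares the same sign (in particular when each subset is a singleton), which already signals that grouping strictly helps whenever the original decomposition contains cancellations.

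The only genuine subtlety—and the spot to pause for precision—is the literal reading of $\calE^{\prime -1}\calM_{\bbZ}\calE=\calM_{\bbZ}$. By construction the compressed inverse matches $\calE^{-1}$ only on the $\bbZ\to\bbZ$ block; its off-block entries depend on the arbitrary choice of representatives $\calB_t$ and will in general differ from those of the true inverse. The stated identity therefore holds as an equality of $\bbZ$-blocks, or equivalently as an operator identity after left-multiplying by $\calM_{\bbZ}$. This is precisely the sense relevant to CNI, where the target observables $F=\sum_b f(b)\ketbra{b}{b}$ are diagonal and satisfy $\dbra{F}\calM_{\bbZ}=\dbra{F}$; I would flag this interpretation at the outset so that the block-wise algebra above settles the proposition without ambiguity.
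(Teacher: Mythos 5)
Your proof is correct and follows essentially the same route as the paper's one-line justification: the $\bbZ$-blocks of $\calE^{\prime-1}$ and $\calE^{-1}$ coincide by Eq.~\eqref{eq: same Z block}, $\calM_{\bbZ}$ acts as the projection onto that block, and $\gamma^{\prime}\le\gamma$ is the triangle inequality applied subset by subset. Your closing caveat is also a legitimate sharpening the paper glosses over: since the off-$\bbZ$-block entries of $\calE^{\prime-1}$ depend on the choice of representatives, the cancellation identity should indeed be read after left-multiplication by $\calM_{\bbZ}$ (equivalently, contracted against diagonal observables $\dbra{F}=\dbra{F}\calM_{\bbZ}$), which is exactly the form used in the CNI estimator.
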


We finally summarize the compression procedure in Algorithm~\ref{alg: compression}, which loops over the basis channels $\calB\in \bbB$ and categorizes them by their $\bbZ$-rows. The key step of this algorithm involves verifying whether the $\bbZ$-rows of two basis quantum channels are identical. While the $\bbZ$-rows take the form of a $2^n \times 4^n$ matrix which scales exponentially with the number of qubits, we leverage the fact that stabilizer circuits admit efficient classical simulation to make comparisons of $\bbZ$-rows computationally tractable. As detailed in Sec.~\ref{sec: Z-rows pair propagation}, our comparison algorithm features a computational complexity that is polynomial in $n$.  

Another consideration is that the size of $\bbB$ may give rise to scalability concerns. However, practical implementations detailed in Sec.~\ref{sec: practical noise inversion} circumvent this limitation: the net effect of noise can be efficiently captured either via propagated gate noise channels or by a channel set $\bbB$ of polynomial size in $n$.

\begin{algorithm}[H] 
  \caption{Noise compression}
  \label{alg: compression}
  \begin{algorithmic}[1]
  \Require{Basis set $\bbB$, and coefficient set $\mathbb{Q} = \{Q_{{\calB}}\vert \forall \calB \in \bbB\}$} 
  \Ensure{Compressed basis set $\bbB^\prime$, and coefficients $\mathbb{Q}^\prime = \{Q^\prime_{{\calB}}\vert\forall \calB\in \bbB^\prime \}$}
  \Statex
  Initialize $\bbB^\prime= \{\}, \mathbb{Q}^\prime = \{\}$ as empty sets.
  \For{each $\calB\in \bbB$}
    \State Set Marker = 0.
    \For{each $\calB^\prime\in \bbB^\prime$}
      \If{$\calB$ and $\calB^\prime$ have the same $\bbZ$-rows}
        \State $Q^\prime_{{\calB^\prime}}\longleftarrow Q^\prime_{{\calB^\prime}} + Q_{{\calB}}$, 
        Marker $\longleftarrow 1$.
        \State \textbf{break}. 
      \EndIf
    \EndFor
    \If{Marker = 0}
        \State $\bbB^\prime_i\longleftarrow \bbB^\prime_i \cup \{\calB\},\; \mathbb{Q}^\prime_i\longleftarrow \mathbb{Q}^\prime_i \cup \{Q_{{\calB}}\}$
      \EndIf
  \EndFor
  \end{algorithmic} 
  \end{algorithm}

\section{CNI-based robust quantum learning}
\label{sec: CNI-based shadow}
First, let us have a brief review of shadow estimation and then develop the CNI-based robust shadow estimation.  
Shadow estimation has found wide applications like in machine learning~\cite{Huang2022a,Haug2023,Schreiber2023,Jerbi2024}, quantum simulation~\cite{Huggins2022,Hadfield2022,Boyd2024,Wu2023,Somma2025,Gresch2025,li2024nearly}, and state purification~\cite{Seif2023,Zhou2024,Yang2024a,Grier30Jun--03Jul2024,liu2024auxiliary}. 
This technique constructs classical state representations through randomized measurements with a unitary ensemble $\bbU$~\cite{Huang2020a}. 
To be specific, in a single-run of the protocol, a quantum state $\dket{\rho}$ is evolved by the unitary channel $\calU$ sampled from some ensemble $\bbU$ and measured in the computational basis to get an outcome $b$; one prepares $\calU^\dagger \dket{b}$ on the classical computer. The expectation of the whole quantum-classical process effectively imposes on $\dket{\rho}$ a channel reading $\calM = \bbE_{\calU \in \bbU} \left[\calU^\dagger \calM_{\bbZ} \calU\right]$, and  $\calM$ usually has a known expression. The classical snapshots or called shadow ${\dket{\hat{\rho}}}=\calM^{-1}\calU^{\dag}\dket{b}$ is an unbiased estimator of $\dket{\rho}$. A collection of shadows can be obtained by repeating this process, and used to predict many properties. 

In the noisy case, the channel is changed to $\tilde{\calM} = \bbE_{\calU \in \bbU} \left[\calU^\dagger \calM_{\bbZ} \tilde{\calU}\right]$, where $\tilde{\calU}$ is the real noisy experiment. Suppose one still uses the original inverse $\calM^{-1}$, it will introduce considerable biases. Existing methods for handling noise in classical shadows have limitations. Standard robust shadow estimation mitigates the effect of noise by using the updated inverse channel $\tilde{\calM}^{-1}$~\cite{Chen2021, Koh2022, Onorati2024, Wu2024, Zhao2024}. However, it assumes the noise is independent of the random unitary, i.e., $\tilde{\calU}= \calE\calU$ with $\calE$ being identical across different $\calU$; otherwise the mitigation still introduces bias~\cite{Brieger2025}. 
Meanwhile, both robust shallow shadows~\cite{Hu2025, Farias2024} relax this assumption; however, they still assume that the noise is independent of the single-qubit gates. Specifically, the method in Ref.~\cite{Farias2024} accounts for gate-dependent noise in the majority of gates within shallow-shadow circuits; however, it still relies on the assumption that noise in the layer of the single-qubit gates before measurement is gate-independent. The direct application of QEM to classical shadow~\cite{Jnane2024} also relies on the assumption that the noise model is independent of single-qubit gates, and it may introduce large circuit-sampling overhead.

Here, we develop the CNI-based robust shadow estimation method by implementing the CNI protocol for each sampled unitary. This new robust method not only inherits all advantages of CNI but also introduces tremendous variance reduction when integrated with the thrifty shadow estimation framework~\cite{Helsen2023,Zhou2023}. 

\begin{figure}
  \begin{center}
    \includegraphics[width=1.0\linewidth]{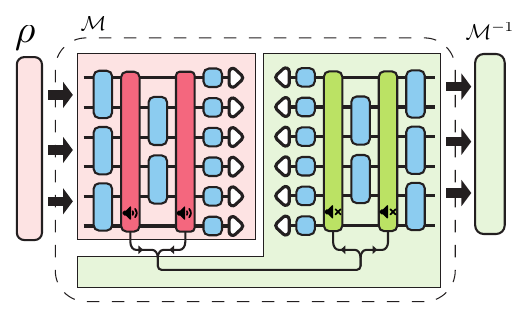}
    \caption{Illustration of CNI-based robust shadow estimation. The CNI protocol is implemented for every noisy random circuit, yielding an ideal randomized measurement map with noisy measurements.}
    \label{fig: CNI-shadow}
  \end{center}
\end{figure}

The CNI-based robust shadow estimation is illustrated in Fig.~\ref{fig: CNI-shadow}. Consider a noisy random unitary $\tilde{\calU} = \calE_{\calU}\calU$ with the overall noise $\calE_\calU = \tilde{\calU}\calU^\dagger$ depending on $\calU$. Suppose there exists $\calE^{\prime}_{\calU}$ satisfying $\calE^{\prime}_{\calU}\calM_{\bbZ} = \calM_{\bbZ} \calE_{\calU}$. We can let $\tilde{\calU}^{\prime - 1} = \calU^\dagger \calE^{\prime -1}_{\calU}$ and get $\tilde{\calU}^{\prime - 1}\calM_{\bbZ}\tilde{\calU}  = \calU^\dagger \calM_{\bbZ} \calU$, such that by taking expectation on $\calU$, one recovers the ideal randomized measurement: 
\begin{equation}
    \bbE_{\calU \in \bbU} \left[ \tilde{\calU}^{\prime - 1}\calM_{\bbZ}\tilde{\calU} \right] = \bbE_{\calU \in \bbU} \left[\calU^\dagger \calM_{\bbZ} \calU\right] = \calM, 
\end{equation}
as the error is mitigated in every random unitary. 

Our protocol executes random noisy circuit $\tilde{\calU}\dket{\rho}$ on a quantum computer, and realizes $\tilde{\calU}^{\prime - 1}$ by simulating its actions on the measurement outcomes $b$ of the noisy circuit. Since $\tilde{\calU}^{\prime - 1} = \calU^\dagger \calE^{\prime -1}_{\calU}$ and $\calU$ is selected to be classically simulable by shadow estimation protocol, the core lies in simulating $\calE^{\prime -1}_{\calU}$. This simulation is achieved by sampling basis channel $\calB_{\calU}$ with probability $\Prob{\calB\vert\calU} = |Q^\prime_{\calU, \calB}|/\gamma^\prime_{\calU}$, which is derived from the quasi-probability decomposition of $\calE^{\prime -1}_{\calU}$. Specifically, the decomposition takes the form  $\calE^{\prime -1}_{\calU} = \sum_{\calB_\calU\in\bbB}Q^\prime_{\calU, \calB}\calB_\calU$, with $\gamma^\prime_{{\calU}} = \sum_{\calB_\calU\in\bbB}|Q^\prime_{\calU, \calB}|$. In practice, we can sample a number of random circuits, followed by measuring each circuit over a large number of shots. Additionally, we generate multiple instances of the basis channel to simulate noisy inversion. Leveraging this sampling strategy, we introduce an unbiased estimator along with an analysis of its performance in the theorem below. The detailed proof of this theorem is deferred to Appendix~\ref{app: proof of robust shadow with CNI}.

\begin{theorem}[Performance guarantee of CNI-based robust shadow estimation]
\label{th:perf_robust_shadow}
With one sample of random unitary $\calU$ from an ensemble $\bbU$, one-shot random noisy measurement outcome $b$, and one instance of random basis channel $\calB$, the 
robust shadow estimator of the ideal expectation value $o=\Tr(O\rho)$ reads 
\begin{equation}
  \hat{o} = \gamma^\prime_{\calU} \frac{Q^\prime_{\calU, \calB}}{|Q^\prime_{\calU, \calB}| } 
  \dbra{O}\calM^{-1}\calU^\dagger \calB_{\calU} \dket{b}. 
\end{equation}

Suppose $M$ random circuits are sampled, and each is measured for $K$ shots. For each shot, $L$ random instances of $\calB$ are generated, yielding $M \times K \times L$ unbiased estimators of $o$ in total. 
Let $\hat{o}_{\mathrm{tot}}$ denote the mean of the estimators. The variance of the total estimator satisfies  
\begin{eqnarray}
  \Var{\hat{o}_{\operatorname{tot}}} &\leq& \frac{1}{M}\left\{
    \frac{\gamma^{\prime 2}_{\operatorname{max}}}{K}\left[\frac{1}{L} \norm{O}_{\operatorname{NS1} }^2 + \left(1 - \frac{1}{L}\right) \norm{O}_{\operatorname{NS2} }^2\right] \right. \nonumber\\
    && \left. \quad + \left(1 - \frac{1}{K}\right)\norm{O}_{\operatorname{XS}}^2
  \right\}. \label{eq: var CNI-based RS}
\end{eqnarray}
where $\gamma^{\prime}_{\operatorname{max}} = \operatorname{max}_{\calU}\gamma^\prime_\calU $ and  $\norm{O}_{\operatorname{NS2}} \leq \norm{O}_{\operatorname{NS1}}$ are the noisy shadow norms we defined in Eq.~\eqref{eq: NS1} and Eq.~\eqref{eq: NS2} in Appendix~\ref{app: proof of robust shadow with CNI}, and $\norm{O}_{\operatorname{XS}}$ is the (noiseless) cross shadow norm~\cite{Zhou2023}. 
\end{theorem}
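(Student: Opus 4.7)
The plan is first to verify that $\hat{o}$ is unbiased via three nested expectations, taken in the order: over the random basis channel $\calB$ conditioned on $(\calU,b)$; over the measurement outcome $b$ conditioned on $\calU$; and finally over $\calU\sim\bbU$. Averaging the signed prefactor $\gamma'_\calU Q'_{\calU,\calB}/|Q'_{\calU,\calB}|$ against the sampling weights $|Q'_{\calU,\calB}|/\gamma'_\calU$ restores the quasi-probability decomposition, so that $\bbE_\calB$ maps $\hat{o}$ to $\dbra{O}\calM^{-1}\calU^\dagger\calE'^{-1}_\calU\dket{b}$. Using the commutation relation $\calE'_\calU\calM_{\bbZ}=\calM_{\bbZ}\calE_\calU$ established through Proposition~\ref{prop:valid}, the inner measurement collapses to $\calE'^{-1}_\calU\calM_{\bbZ}\calE_\calU\calU\dket{\rho}=\calM_{\bbZ}\calU\dket{\rho}$, and the final average over $\calU$ produces $\calM\dket{\rho}$, which cancels with $\calM^{-1}$ to yield $\dbra{O}\dket{\rho}=o$.

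\textbf{Variance decomposition.} Since the $M$ circuits are drawn i.i.d., $\Var{\hat{o}_{\operatorname{tot}}}=\tfrac{1}{M}\Var{\hat{o}_1}$, where $\hat{o}_1$ denotes the per-circuit mean over the $K$ shots and $L$ basis-channel instances. I will apply the law of total variance twice, first conditioning on $\calU$ and then, inside, on the shot outcomes $(b_1,\dots,b_K)$. Because $(b_k,\{\calB_{k,l}\}_l)$ are i.i.d.\ across $k$ given $\calU$, and $\{\calB_{k,l}\}_l$ are i.i.d.\ across $l$ given $(\calU,b_k)$, this decomposition produces, up to a $-o^2/M$ correction, exactly three non-negative second-moment pieces: $\bbE[\hat{o}^2]$ with prefactor $\tfrac{1}{MKL}$, $\bbE[(\bbE_{\calB}\hat{o})^2]$ with prefactor $\tfrac{1}{MK}(1-\tfrac{1}{L})$, and $\bbE[(\bbE_{b,\calB}\hat{o})^2]$ with prefactor $\tfrac{1}{M}(1-\tfrac{1}{K})$. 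Dropping the negative $-o^2/M$ term and upper-bounding $\gamma'_\calU\leq\gamma'_{\operatorname{max}}$ yields the structure of \eqref{eq: var CNI-based RS}.

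\textbf{Matching to the shadow norms.} By the unbiasedness verified above, $\bbE_{b,\calB}[\hat{o}\,\vert\,\calU]=\dbra{O}\calM^{-1}\calU^\dagger\calM_{\bbZ}\calU\dket{\rho}$ is exactly the shot-noise-free shadow estimator of $o$; its second moment over $\calU\sim\bbU$ is by definition the cross shadow norm $\norm{O}_{\operatorname{XS}}^2$ of Ref.~\cite{Zhou2023}, and this is what produces the last term of \eqref{eq: var CNI-based RS}. This piece encodes the key thrifty-shadow feature that reusing a circuit for many shots does not shrink the circuit-to-circuit variance below $\norm{O}_{\operatorname{XS}}^2$. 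The remaining second moments $\bbE[\hat{o}^2]$ and $\bbE_{\calU,b}[(\bbE_\calB\hat{o})^2]$ will then be absorbed into the definitions of $\norm{O}_{\operatorname{NS1}}^2$ and $\norm{O}_{\operatorname{NS2}}^2$ in Eqs.~\eqref{eq: NS1}--\eqref{eq: NS2} after pulling out the uniform prefactor $\gamma'^2_{\operatorname{max}}$ from the random sign factor; the ordering $\norm{O}_{\operatorname{NS2}}\leq\norm{O}_{\operatorname{NS1}}$ will follow from Jensen's inequality applied to the conditional average over $\calB$.

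\textbf{Anticipated difficulties.} The main bookkeeping hurdle is the layered correlation structure: shots sharing $\calU$ are not independent, and basis-channel samples sharing both $\calU$ and $b$ are correlated only through the random sign $Q'_{\calU,\calB}/|Q'_{\calU,\calB}|$. Tracking these dependencies so that the variance decomposes into precisely three non-negative pieces with the claimed prefactors is the delicate combinatorial step. The other non-trivial task is crafting $\norm{\cdot}_{\operatorname{NS1}}$ and $\norm{\cdot}_{\operatorname{NS2}}$ so that they simultaneously (i) bound the relevant second moments uniformly in $\calU$, (ii) reduce to the familiar (noiseless) shadow norm when $\calE_\calU=\mathrm{id}$, and (iii) satisfy $\norm{O}_{\operatorname{NS2}}\leq\norm{O}_{\operatorname{NS1}}$ so that increasing $L$ monotonically tightens the bound. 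Because these semi-norms depend on the noise channel through $\calE'^{-1}_\calU$, their explicit construction is deferred to Appendix~\ref{app: proof of robust shadow with CNI}.
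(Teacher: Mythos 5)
Your proposal is correct and follows essentially the same route as the paper's Appendix proof: the same nested-expectation unbiasedness argument via $\calE^{\prime-1}_{\calU}\calM_{\bbZ}\calE_{\calU}=\calM_{\bbZ}$, the same three-term second-moment decomposition with prefactors $\tfrac{1}{MKL}$, $\tfrac{1}{MK}(1-\tfrac1L)$, $\tfrac{1}{M}(1-\tfrac1K)$ (the paper obtains it by directly expanding the square into diagonal and cross terms, which is equivalent to your double law of total variance), the same identification of the $k\neq k'$ cross term with the noiseless $\norm{O}_{\operatorname{XS}}^2$ after the noise inverse cancels, and the same Jensen argument for $\norm{O}_{\operatorname{NS2}}\leq\norm{O}_{\operatorname{NS1}}$. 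The only piece you defer—the explicit forms of $\norm{\cdot}_{\operatorname{NS1}}$ and $\norm{\cdot}_{\operatorname{NS2}}$—is in the paper simply the corresponding second moments maximized over $\rho$ with $\gamma^{\prime 2}_{\calU}$ replaced by $\gamma^{\prime 2}_{\operatorname{max}}$, exactly as you anticipate.
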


Theorem~\ref{th:perf_robust_shadow} establishes CNI-based robust shadow estimation as a resource-efficient and noise-robust framework for learning quantum properties under general gate-dependent noise, while also highlighting significant advantages of our method over existing methods. First, the noise-induced cost is reduced to $O(\gamma^{\prime 2}_{\operatorname{max}}/{MK})$, whereas conventional probabilistic error cancellation~\cite{Temme2017,Endo2018,Nguyen2022} saturates at $O(\gamma^{2}_{\operatorname{max}}/{M})$ (see Appendix~\ref{app: variance of robust shadow with cPEC} for full derivation). Since sampling random circuits is far slower than sampling measurement outcomes~\cite{Karalekas2020}, $K$ is set to $K\gg1$ in practice; the $1/K$ suppression therefore yields a dramatic improvement. Second, in contrast to existing methods~\cite{Nguyen2022, Chen2021}, particularly standard robust shadow estimation~\cite{Chen2021} which is only reliable under gate-independent noise~\cite{Brieger2025}, our method enables unbiased estimation even in the presence of gate-dependent noise. These aforementioned advantages are numerically demonstrated in the next section, with key results presented in Fig.~\ref{fig: numerical main-MS}. 

Apart from the advantages mentioned above, the CNI-based robust shadow estimation protocol offers additional benefits unattainable with existing approaches. These include: reduced sampling complexity via noise compression ($\gamma^\prime_{\operatorname{max}}\leq \gamma_{\operatorname{max}}$); and variance reduction using only classical resources, which is achieved by increasing $L$, given that $\norm{O}_{\operatorname{NS2}} \leq \norm{O}_{\operatorname{NS1}}$. Numerical simulations illustrating these advantages are presented in Fig.~\ref{fig: numerical main}.

Theorem~\ref{th:perf_robust_shadow} provides a rigorous characterization of how the noise-induced factor $\gamma^\prime$ affects the variance of shadow estimators. Beyond characterizing the variance response to noise, it is equally important to investigate the upper bounds on the noisy shadow norms, as these bounds delineate how the cost of robust shadow estimation scales with system size and observable structure. 

\begin{proposition}[Upper bound of noisy shadow norm]
  \label{prop:NS_bound}

Let $g = \min_{\calU} \Prob{\calI|\calU}$ denote the minimum conditional probability of being noise-free, and 
\begin{equation}
  h=\max \left\{\max _{\calU, \calB}\left\{\frac{\operatorname{Pr}(\calB \mid \calU)}{\operatorname{Pr}(\calB)}\right\}, \max_{\calU}\left\{\frac{\operatorname{Pr}(\calI \mid \calU)-g}{\operatorname{Pr}(\calI)-g}\right\}\right\},  
\end{equation}
where $\Prob{\calB} = \bbE_{\calU}\Prob{\calB\vert\calU} $ is the marginal distribution of $\calB$.
The noisy shadow norm $\norm{O}_{\operatorname{NS1}}$ appearing in Eq.~\eqref{eq: var CNI-based RS}
is upper bounded by 
\begin{eqnarray}
    \norm{O}_{\operatorname{NS1}}^2 &\leq& g \norm{O}_{\operatorname{S}}^2 + (1 - g) h \max_{\bar{\calB}} \norm{\bar{\calB}^\dagger(O)}_{\operatorname{S}}^2,
\end{eqnarray} 
where $\bar{\calB} = \calM^{-1}\calU^\dagger \calB \calU \calM$ and $\norm{\cdot}_{\operatorname{S}}$ is the ideal shadow norm~\cite{Huang2020a}. 
\end{proposition}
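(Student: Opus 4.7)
The plan is to begin from the definition of the noisy shadow norm $\norm{O}_{\operatorname{NS1}}^2$ given in Appendix D. Modulo the $\gamma_{\operatorname{max}}^{\prime 2}$ factor already extracted in Eq.~\eqref{eq: var CNI-based RS}, this noisy shadow norm has the generic form
\begin{equation*}
\norm{O}_{\operatorname{NS1}}^2 = \max_\rho\,\bbE_{\calU}\sum_{\calB}\Prob{\calB|\calU}\sum_b\Prob{b|\tilde{\calU}(\rho)}\absLR{\dbra{O}\calM^{-1}\calU^\dagger\calB\dket{b}}^2.
\end{equation*}
The key algebraic observation is the commutation identity $\calM^{-1}\calU^\dagger\calB\dket{b}=\bar{\calB}\,\calM^{-1}\calU^\dagger\dket{b}$, obtained by inserting $\calU\calM\cdot\calM^{-1}\calU^\dagger=\mathds{1}$. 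This rewrites the noisy snapshot as the action of $\bar{\calB}$ on the ideal snapshot and recasts the inner product as $\dbra{\bar{\calB}^\dagger(O)}\calM^{-1}\calU^\dagger\dket{b}$, i.e., the ideal shadow estimator applied to the transformed observable $\bar{\calB}^\dagger(O)$.

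Next I would apply the peeling decomposition
\begin{equation*}
\Prob{\calB|\calU}=g\,\mathds{1}[\calB=\calI]+r(\calB,\calU),
\end{equation*}
where $r(\calB,\calU)\geq 0$ since $\Prob{\calI|\calU}\geq g$ by the definition of $g$, and $\sum_\calB r(\calB,\calU)=1-g$. The $g\,\mathds{1}[\calB=\calI]$ piece uses $\bar{\calI}=\mathds{1}$ and reproduces exactly the ideal shadow estimator for $O$, except that the measurement distribution is $\Prob{b|\tilde{\calU}(\rho)}$. Writing $\Prob{b|\tilde{\calU}(\rho)}=\Prob{b|\calU(\sigma_\calU)}$ with $\sigma_\calU=(\calU^\dagger\circ\calE_\calU\circ\calU)(\rho)$ a bona fide state, and invoking the supremum-over-states definition of the ideal shadow norm, bounds this contribution by $g\,\norm{O}_{\operatorname{S}}^2$.

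For the residual, I would use the two clauses in the definition of $h$: $r(\calI,\calU)\leq h(\Prob{\calI}-g)$ and $r(\calB,\calU)\leq h\,\Prob{\calB}$ for $\calB\neq\calI$. These convert conditional weights into marginal ones, allowing $\Prob{\calB}$ to be pulled outside of $\bbE_{\calU}$ and decoupling the $\calU$-dependence from the $\calB$-weight. Upper-bounding each inner term by $\norm{\bar{\calB}^\dagger(O)}_{\operatorname{S}}^2$ via the same state-substitution argument and taking the supremum over $\bar{\calB}$ yields a common factor $\max_{\bar{\calB}}\norm{\bar{\calB}^\dagger(O)}_{\operatorname{S}}^2$ that no longer depends on $\calB$. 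The remaining weight sums to $(\Prob{\calI}-g)+\sum_{\calB\neq\calI}\Prob{\calB}=1-g$, producing the second term $(1-g)\,h\,\max_{\bar{\calB}}\norm{\bar{\calB}^\dagger(O)}_{\operatorname{S}}^2$. Adding the two contributions gives the stated bound.

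The main obstacle is rigorously bridging the noisy measurement distribution $\Prob{b|\tilde{\calU}(\rho)}$ with the ideal shadow norm, particularly since the substituted state $\sigma_\calU$ inherits a $\calU$-dependence that is not automatically compatible with a supremum-over-states outside the expectation. This is handled either by exploiting the pointwise validity of $\sigma_\calU$ together with the standard supremum bound, or by verifying that the precise appendix definition of $\norm{\cdot}_{\operatorname{NS1}}^2$ already accommodates such $\calU$-dependent states. Beyond this bookkeeping, the remainder of the argument is essentially a two-step peel-and-union-bound familiar from the analysis of randomized sampling.
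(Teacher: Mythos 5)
Your proposal is correct and follows essentially the same route as the paper's own proof: the same commutation identity turning $\calM^{-1}\calU^\dagger\calB\dket{b}$ into $\bar{\calB}$ acting on the ideal snapshot, the same peeling of $\Prob{\calB|\calU}$ into $g\,\mathds{1}[\calB=\calI]$ plus a residual of total weight $1-g$, and the same use of the two clauses of $h$ to replace conditional by marginal weights before taking suprema. The $\calU$-dependence of the substituted state $\bar{\calE}(\rho)$ that you flag as the main obstacle is treated in the paper exactly as loosely as in your sketch (it simply takes "the maximum over $\bar{\calE}(\rho)$"), so your proposal matches the paper's level of rigor.
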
 

\begin{figure*}
  \begin{center}
    \includegraphics[width=1\linewidth]{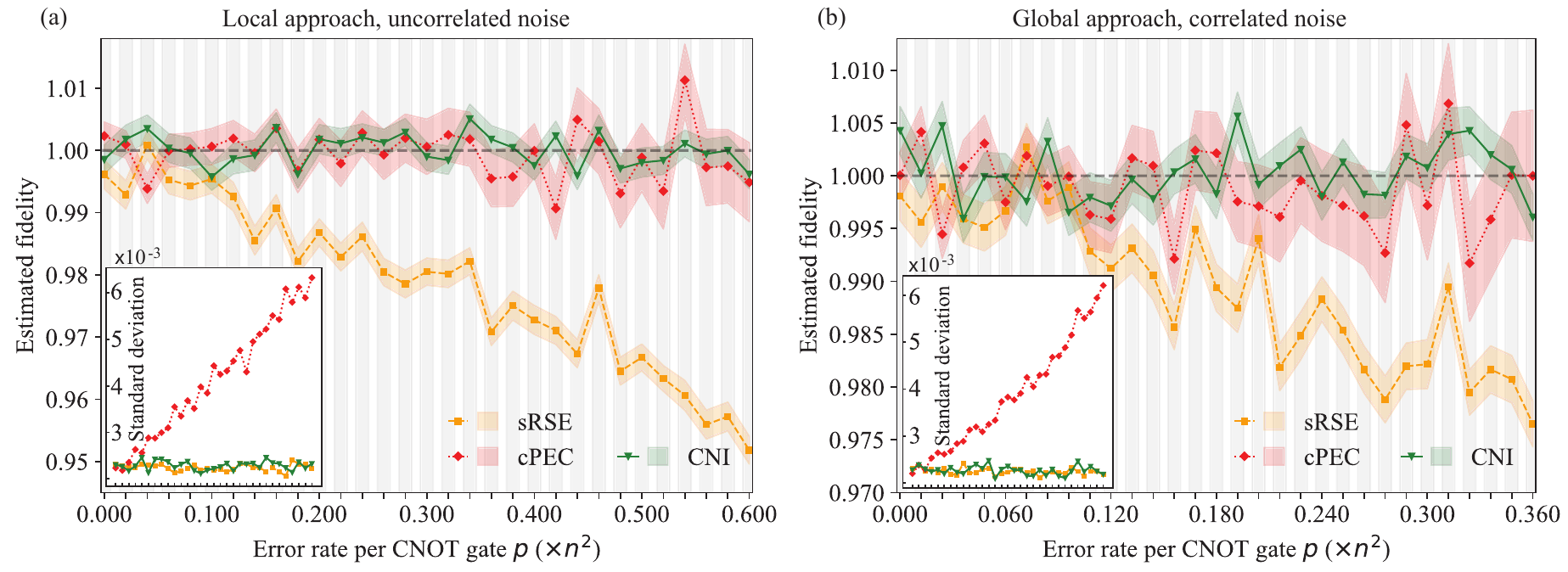}
    \caption{    
Numerical comparison of our method and existing methods under the multi-shot measurement scheme. In the figure, CNI stands for the proposed CNI-based robust shadow estimation, sRSE stands for standard robust shadow estimation and cPEC stands for conventional probabilistic error cancellation. The fidelity of a $4$-qubit GHZ state is estimated at a cost of $M=10^3,K=10^3,L=1$ across varying error rates; this estimation procedure is repeated 320 times. Data points represent the mean values of the repeated estimations, while the shaded areas indicate one standard deviation. The standard deviations are also plotted separately in the inset. The calibration cost for sRSE is $M=3.2\times 10^5, K=1,L=1$. (a) Local approach for uncorrelated noise.   (b) Global approach for correlated noise. } 
    \label{fig: numerical main-MS}
  \end{center}
\end{figure*}
Proposition~\ref{prop:NS_bound} rigorously establishes the impacts of noise on the shadow norm; the proof of this proposition can be found in Appendix~\ref{app: Comparison of noisy shadow norms and ideal shadow norm}. The factor $g$ represents the noise severity, $h$ represents the correlation between the noise model and the random unitary, and $\bar{\calB}^\dagger (O)$ is the deviated (by noise) observable. We explain this in detail in the following. 

When the total error rate $p$ is small, the factor $g$ satisfies $g \approx 1 - p + O(p^2)$. In the noiseless case, $g=1$ and the noisy shadow norm becomes the ideal shadow norm.

The factor $h$ quantifies the ceiling of the correlation between $\calU$ and $\calB$.  In fact, $\log(h)$ is an upper bound of the mutual information between $\calU$ and $\calB$.
$h=1$ if the noise model is independent of $\calU$ (which results in $\Prob{\calB\vert \calU} = \Prob{\calB}$ for all $\calU$ and $\calB$), and $h> 1$ if the noise model is correlated with $\calU$. In random quantum circuits, the overall noise model can be considered as global depolarizing noise with small perturbations~\cite{Qin2023, Dalzell2021}, therefore $h$ is a small number. 

As for the deviated observable $\bar{\calB}^\dagger (O)$, we find that the upper bound of the shadow norm of the deviated observable $\norm{\bar{\calB}^\dagger(O)}_{\text{S}}$ is the same as that of the original observable $\norm{O}_{\text{S}}$ in practical cases. Indeed, this holds true for both random Clifford measurements and random Pauli measurements under the  Pauli error model. We can show that $\norm{\bar{\calB}^\dagger(O)}_{\text{S}}^2 \leq \Tr(O^2)$ for random Clifford measurements and $\norm{\bar{\calB}^\dagger(O)}_{\text{S}}^2 \leq 4^k \norm{O}_{\infty}^2$ for random Pauli measurements and weight-$k$ observables $O$.

\section{Numerical simulations} 
\label{sec: numerical }

\begin{figure*}
  \begin{center}
    \includegraphics[width=1\linewidth]{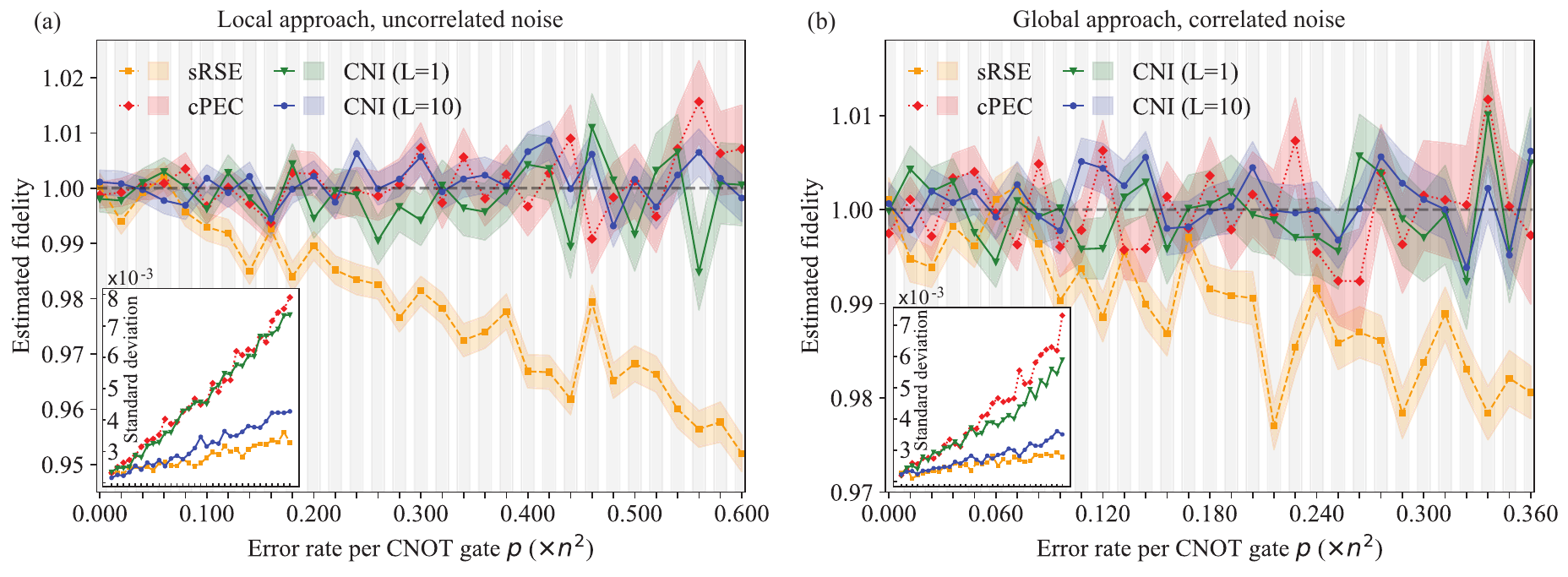}
    \caption{
Numerical comparison of our method and existing methods under the single-shot measurement scheme. The details of the plots are identical to those in Fig.~\ref{fig: numerical main-MS} except that the estimation cost is set to $M=10^3,K=1,L=1$. Additionally, the results of CNI with an estimation cost of $M=10^3,K=1,L=10$ are included.  
}
    \label{fig: numerical main}
  \end{center}
\end{figure*}

We demonstrate CNI-based robust shadow with numerical simulations. To comparatively illustrate the advantages of our new protocol, we also simulate previous protocols, including standard robust shadow estimation (sRSE)~\cite{Chen2021, Koh2022} and conventional probabilistic error cancellation (cPEC)~\cite{Temme2017, Endo2018, Jnane2024}, under the same noise conditions. 

The simulated task is the fidelity estimation of stabilizer states using noisy random Clifford measurements. Specifically, we take stabilizer states $\rho$ as input to the protocols, generate classical shadow estimators $\hat{\rho}$, and then compute the overlap $\hat{o} = \Tr(\hat{\rho}\rho)$. When noise is perfectly canceled, the theoretical expectation of $\hat{o}$ equals 1. The variance of $\hat{o}$ reflects the cost of the protocols.

The random Clifford unitaries are decomposed into circuits consisting of CNOT gates and single-qubit Clifford gates. In the numerical simulations, we only consider that CNOT gates are noisy while single-qubit gates are noise-free. Notice that in realistic situations, the noise can be uncorrelated or correlated, meaning that the noise on different gates may have inseparable joint (quasi)-distributions. We consider both correlated and uncorrelated noise in our numerical simulations. 
For uncorrelated noise, the local approach suffices to preserve an unbiased estimator. For correlated noise, the global approach is required for the same guarantee. We detail both local and global approaches in the Methods section.

For the uncorrelated gate-wise noise model and local approach, we consider the two-qubit bit-flip error on every CNOT gate: 
$\calE = (1-p) \calI^{\otimes 2} + p \calX^{\otimes 2}$, where $p$ is the error rate and $\calX(\cdot) = X\cdot X$. To demonstrate the CNI-based robust shadow estimation, we simulate the noise inverse $\mathcal{E}^{-1}=\frac{1}{1-2p}\left[(1-p)\mathcal{I}^{\otimes2}-p\mathcal{X}^{\otimes2}\right]$ using the introduced Monte-Carlo method at the positions of the corresponding gate during the post-processing phase. Noise compression is incorporated by checking whether the propagated XX gate is a tensor product of $I$ and $Z$  gates. If it is, we do not simulate the inverse of the noise. The above procedure produces unbiased estimates of the ideal shadow snapshots. Then, applying the ideal shadow inversion $\calM^{-1}(\cdot) = (2^n+1)(\cdot) - \bbI$, we complete the CNI-based robust shadow estimation procedure. 

For the correlated noise model and global approach, we consider the following noise model. For a random Clifford unitary  $\calU = \prod_{i=1}^{N} \calU_i$ composed of $N$ gates, the corresponding noisy circuit is 
\begin{equation}
  \tilde{\calU} =(1-Np)\calU + p \sum_{i=1}^{N}\tilde{\calU}^{(i)}, 
  \label{eq: noise global-1}
\end{equation}
where $\tilde{\calU}^{(i)}$ is subjected to a two-qubit depolarizing error $\frac{1}{15}\sum_{\calP\in \bbP^{\otimes 2}\setminus\{\calI^{\otimes 2}\} }\calP$ at the $i$-th gate while the other gates are noiseless. This noise model exhibits significant temporal error correlation, as the probability that two two-qubit gates are erroneous simultaneously is zero, which is not equal to the product of the error rates of the two gates. Moreover, the model approximately captures the effective noise arising from gate-wise depolarizing channels with small error rates (See Appendix~\ref{app: series expansion}). To obtain the compressed noise model, we input the propagated error operations $\calP$ from every noisy gate and their corresponding coefficients $p/15$ into Algorithm~\ref{alg: compression}. Then, we simulate the inverse of the compressed noise model using the global approach 
detailed in the Methods section. 

We briefly introduce the techniques of sRSE and cPEC in the following. The sRSE method does not perform any suppression on each noisy circuit; instead, it applies the inverse of the noisy randomized measurement map on the shadow snapshots. For random Clifford measurement, it reads $\tilde{\calM}^{-1}=\frac{1}{r} \calI - \frac{1-r}{r} \calD$, where $r$ depends on the noise and is calibrated by $\hat{r} = \frac{2^n\hat{R} - 1}{2^n-1}$, where $\hat{R} = \dbra{0}\calU\dket{b}$ with $b$ being a measurement outcome of $\tilde{\calU}\dket{\rho}$. The cPEC method mirrors CNI but differs in two key ways: it cancels noise by sampling random quantum circuits rather than classical simulation, and it inverts the full noise model instead of the compressed one. 

We illustrate the numerical results in Fig.~\ref{fig: numerical main-MS} and Fig.~\ref{fig: numerical main}. These results demonstrate that the CNI-based robust shadow yields unbiased estimates with low variance, whereas previous methods are constrained by a fundamental bias-variance trade-off. In comparison to sRSE, the CNI-based method maintains unbiased estimation even when noise is dependent on the random unitary—a scenario where sRSE fails to preserve unbiasedness. Additionally, relative to cPEC, the CNI-based method achieves substantially lower variance, highlighting its advantage in efficiency. 

Our method’s variance reduction advantages are clearly demonstrated across two figures, each highlighting a distinct, complementary mechanism. Fig.~\ref{fig: numerical main-MS} employs a multi-shot scheme to showcase our method’s strength in sampling efficiency driven by circuit reuse. As shown in Eq.~\eqref{eq: var CNI-based RS}, the noise-induced variance (quantified by $\gamma_{\operatorname{max}}^2$) is reduced by a factor of 
$1/(MK)$, whereas this reduction factor is $1/M$ for cPEC. 
In contrast, Fig.~\ref{fig: numerical main} uses a single-shot scheme to focus on variance reduction from two key strategies: noise compression and classical resource exploiting. Under this scheme, the numerical results demonstrate that our method still delivers superior variance reduction via implementing noise compression (transforming $\gamma$  to $\gamma^\prime$), and incorporating an additional classical cost of $L = 10$ to leverage classical computing resources. Neither of the two existing methods can reduce variance by adjusting $L$ to utilize classical resources, as their classical operations are fixed to ideal unitary circuits~\cite{Chen2021,Jnane2024}. 

Another advantage revealed by the numerical results is that our method achieves unbiased estimation—even when noise is dependent on the sampled random unitary—whereas existing methods cannot. As illustrated in both figures, our method enables unbiased fidelity estimation, while the sRSE method still introduces bias that increases with the error rate. Although the results indicate that cPEC also yields unbiased estimation, this outcome is specific to our numerical simulation, where single-qubit gates were assumed to be noise-free. In realistic scenarios, where noise is dependent on single-qubit gates, cPEC would still be expected to introduce bias.

\section{Methods}

{
  Building on the theoretical frameworks presented in the earlier sections, we now turn to devise practically efficient strategies to realize the key components of the proposed protocols. Specifically, we present methods for practical noise inversion, the semi-definite approach to factorize the noise into classically tractable and intractable components, and efficient $\bbZ$-rows comparison, which is an essential technique for noise compression.
}

\subsection{Practical noise inversion} 
\label{sec: practical noise inversion}

{

Inverting a general quantum channel is challenging, as the size of its PTM grows exponentially with the number of qubits. In practical scenarios, however, the total noise typically originates from local gate noise that propagates throughout the circuit (as shown in Fig.~\ref{fig: practical implementation} (a)), or conforms to a spacetime-distributed noise model that can be well approximated by a quantum channel whose quasi-probability decomposition consists of only a polynomial number of basis channels (as shown in Fig.~\ref{fig: practical implementation} (b)). In these two cases, noise inversion can be efficiently implemented without compromising its efficacy. We propose two approaches, namely the local and global approach, to implement the noise inversion in practice. 
}

\begin{figure}
  \begin{center}
    \includegraphics[width=1\linewidth]{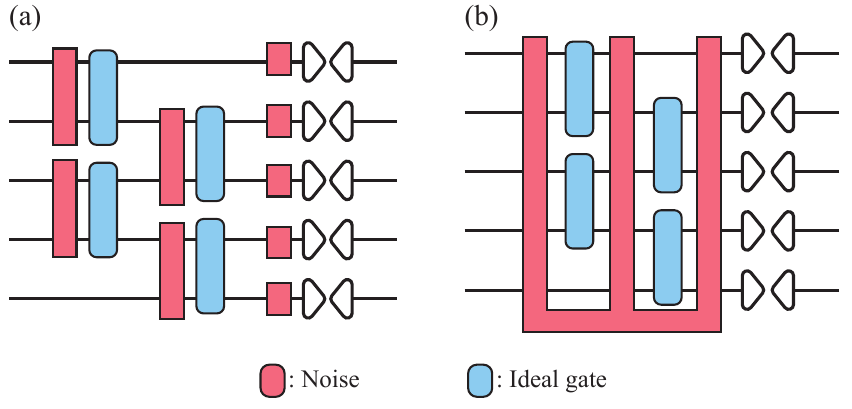}
    \caption{Illustration of efficiently invertible noise models for quantum circuits: (a) Local gate noise models. (b) Correlated noise models distributed across the circuit spacetime with a limited number of terms in the quasi-probability decomposition.}
    \label{fig: practical implementation}
  \end{center}
\end{figure}

\subsubsection{Local approach}

\label{sec: local approach}

The local approach applies the proposed protocol to each noisy gate and is suitable for well-characterized circuits with gate-level noise. Suppose the ideal circuit is $\calU$ and its noisy implementation is $\tilde{\calU}$. With the assumption of gate-level noise, we can write the noisy circuit as $\tilde{\calU} = \calE_{N+1} \tilde{\calU}_{N} \tilde{\calU}_{N-1}\cdots\tilde{\calU}_1$ with $\tilde{\calU}_{i} = \calU_{i} \calE_{i}$, where $\calE_{i}$ denotes the noise channel associated with the $i$-th gate, and $\calE_{N+1}$ represents the noise on the measurement. 

Noise compression and inversion are therefore implemented in a gate-wise manner. Denote the noise on the $j$-th gate as $\calE_j = \sum_{\calB\in \bbB^{(j)}} C_{j, \calB} \calB$, where $\bbB^{(j)}$ is the set of basis channels supported on a few qubits. It is possible that, due to cross-talk, the noise on this gate may affect not only these two qubits but also their physically neighboring qubits; however, the total number of qubits affected by the noise is constant and independent of the system size. Therefore, due to the locality of the noise model, the coefficients $C_{j, \calB}$ can be efficiently determined using gate-set tomography~\cite{Greenbaum2015, Nielsen2021} and the inverse of the noise $\calE_j^{-1} = \sum_{\calB\in \bbB^{(j)}} Q_{j, \calB} \calB$ can be computed efficiently using a classical computer. After propagation, the noise inversion becomes $\bar{\calE}_j^{-1} = \sum_{\calB\in \bbB^{(j)}} Q_{j, \calB} \bar{\calU}_j\calB \bar{\calU}^\dagger_j$, where $\bar{\calU}_j = \calU_{N+1}\calU_{N}\cdots\calU_{j+1}$ denotes the ideal circuit after the $j$-th gate. Finally, we take $\bar{\calE}_j^{-1}$ as the input to Algorithm~\ref{alg: compression} to obtain the compressed noise inversion, and then use the Monte-Carlo method to simulate the action of the noise inversion on the measurement outcomes. 

The local approach yields unbiased estimation of the ideal result,  provided that the gate-level noise is well-characterized. The limitation of this method lies in its reliance on the gate-level noise assumption. Additionally, this method has a relatively low noise compression ratio, as it only considers the compression of intra-gate noise and not that of inter-gate noise. To ensure our method is applicable under general noise models (without the local noise assumption) while further improving the compression ratio, we propose the global approach below. The discussion on why the global approach maintains high compression ratio even for large quantum circuits is deferred to Appendix~\ref{app: discussion on the compression ratio}. 

\subsubsection{Global approach}
\label{sec: global approach}

In a general setting, the noise in the circuit may exhibit significant spatial and temporal correlations, and the noisy circuit is no longer a concatenation of noisy gates. In this case, we need to employ the global approach. 


The subsequent steps of our protocol require knowledge of the noise model.  Although the circuit noise model is distributed over the space-time of the circuit, there are efficient methods to learn the net noise model $\calE = \tilde{\calU}\calU^\dagger$, where $\calU$ is the ideal circuit and $\tilde{\calU}$ is the noisy circuit. The efficient learning of $\calE$ stems from two aspects. First, there are efficient representations of the noise model. Specifically, since any quantum channel can be written as $\calE = \sum_{\calB\in \bbB}C_{\calB}\calB$ with $C_{\calB}$ being real coefficients, we can model a general noise channel as
\begin{equation} 
  {\calE} =  \eta\sum_{\calB\in \bbB} \Prob{\calB} \operatorname{sgn}(\calB) \calB, \label{eq: noise global}
\end{equation} where $\eta$ is a scalar and $\bbB$ is a set of basis channels. We can encode the probability distribution $\Prob{\calB}$ into an Ansatz, such as a neural network, a tensor network, or simply a list of probabilities defined over a polynomial-sized basis set $\bbB$, enabling efficient sampling of $\calB$ from the distribution. Second, utilizing the efficiency of the classical simulation of ${\calE}\calU$, we can learn the noise model by minimizing the difference between the actual distribution $\Prob{b\vert \rho} = \dbra{b}\tilde{\calU}\dket{\rho}$ and the model distribution  $\hat{\operatorname{Pr}}(b\vert \rho) = \dbra{b}{\calE}\calU\dket{\rho}$, where $\rho$ is sampled from a set of complete states, such as the tensor product of the eigenstates of the Pauli operators $X, Y$ and $Z$. This can be done by optimizing the loss function~\cite{Wang2021f,Strikis2021}, the cross entropy~\cite{Neill2018, Boixo2018}, or by employing Bayesian noise learning~\cite{Hu2025}. 

The core step of our protocol is simulating the inverse of the noise model $\calE$. Though it is intractable to find the explicit expression of $\calE^{-1}$ in general, the Neumann series expansion of $\calE^{-1}$ enables us to simulate it probabilistically~\cite{Xie2025}. The series expansion of $\calE^{-1}$ is 
\begin{equation}
  \calE^{-1} = \frac{1}{\eta \Prob{\calI}} \sum_{l=0}^\infty \left( - \frac{\Prob{\calN}}{\Prob{\calI}} \calN  \right)^l,
  \label{eq: inverse Neumann}
\end{equation} where $ \Prob{\calN} = \sum_{\calB\in \bbB\setminus \{\calI\}} \Prob{\calB}$ is the noise probability, and $\calN =  \sum_{\calB \in {\bbB}\setminus\{ \calI\}}\frac{\Prob{\calB}}{\Prob{\calN}} \operatorname{sgn}(\calB) \calB$ is the noise. By generating a random number $ l $ with probability $ \left[ 1 - \frac{\Prob{\calN}}{\Prob{\calI}}\right] \left[\frac{\Prob{\calN}}{\Prob{\calI}}\right]^l $, which can be achieved via inverse transform sampling, then generating $ l $ random signed basis channels $ \operatorname{sgn}(\calB_1)\calB_1, \operatorname{sgn}(\calB_2)\calB_2, \dots, \operatorname{sgn}(\calB_l)\calB_l $ with probability $ \frac{\Prob{\calB}}{\Prob{\calN}}$, we obtain the following unbiased estimator of the inverse:
\begin{equation}
  \hat{\calE}^{-1} = \frac{1}{\eta \left[\Prob{\calI} - \Prob{\calN}\right]}(-1)^l \prod_{i=1}^l \operatorname{sgn}(\calB_i) \calB_i. 
  \label{eq: estimator neumann inversion}
\end{equation} 
This estimator introduces a sample overhead of $\gamma = \frac{1}{\eta \left[\Prob{\calI} - \Prob{\calN}\right]}$. For detailed derivations of the estimator presented in Eq.~\eqref{eq: estimator neumann inversion}, refer to Appendix~\ref{app: Simulating the noise inversion without explicit inversion}. 

Noise compression can also be implemented to reduce the variance for the global approach. Since $\calE^{-1}$ in Eq.~\eqref{eq: inverse Neumann} is an infinite series, it is impossible to implement noise compression for $\calE^{-1}$ directly. Instead, we can implement noise compression for $\calE$, and the sample overhead can still be reduced to a lower value $\gamma^\prime$, which we prove in Appendix~\ref{app: Simulating the noise inversion without explicit inversion}.  
For the noise model defined in Eq.~\eqref{eq: noise global}, the approach to noise compression hinges on the encoding scheme of the coefficients. If these coefficients are represented as a sum over a polynomial-sized set $\bbB$, they can be directly fed into Algorithm~\ref{alg: compression} to achieve the desired noise compression.
However, when the probability distribution is encoded via a neural network or tensor network, the noise compression process becomes less straightforward; nevertheless, practical implementation strategies still exist. For instance, in the context of a Pauli error model, noise compression can be accomplished by tracing out the neurons (in the case of neural network encoding) or contracting the network nodes (in the case of tensor-network encoding) that correspond to the probabilities of Pauli-Z components.

\subsection{{Optimizing the noise factorization}}
\label{sec: optimal in-circuit}

{
The factorization $\calE=\calE_c\calE_q$ in Eq.~\eqref{eq: qc factorization} provides a constructive approach for compensating CNI via partial quantum sampling. However, obtaining this factorization requires computing the inverse $\calE_{00}^{-1}$ of the exponentially large block. To address this, we propose a semi-definite program to efficiently determine the optimal quantum-classical factorization that minimizes the quantum sampling overhead $\gamma_q$. For propagated local or truncated global noise models, which involve only a small basis set $\bbB$, the optimal quantum-classical factorization can be found directly, avoiding the need to explicitly factorize the large matrix as indicated by Proposition~\ref{prop: Quantum-classical factorization}.  
}

{
The Ansatz for the quantum factor is $\calE_q^{ -1}=\sum_{\calB\in\bbB}Q_{\calB,q}\calE_\calB\calB$, where $\{\calE_\calB\calB\}$ are the noisy basis operations; while the Ansatz for the classical factor is $\calE_c^{ -1}=\sum_k Q_{k,c}\calC_k$ where $\{\calC_k\}$ is a set of ideal and classically tractable operations. We note that such $\{\calC_k\}$ includes Pauli channels and the linear combinations of non-Pauli Clifford channels whose upper-right blocks cancel, e.g.,\ $\dketbra{I}{I}+\dketbra{X}{Z}=\tfrac12(\calH+\calS_y)$ with $\calH$ the Hadamard and $\calS_y$ the $\tfrac\pi2$ rotation about $Y$. 
The optimal factorization with minimal quantum sampling overhead is obtained by the semi-definite program 
}
\begin{equation}
  \gamma_q =\min_{\{Q_{\calB,q}\},\{Q_{k,c}\}}\ \sum_{\calB\in\bbB}\absLR{Q_{\calB,q}}\quad\mathrm{s.t.}\quad \calE_q^{ -1}\,\calE_c^{ -1}\,\calE=\calI ,
  \label{eq: semi-def main}
\end{equation}
{
which is posed entirely over the small basis set and never requires the dense inverse $\calE_{00}^{-1}$.
The factorization obtained by the semi-definite program in Eq.~\eqref{eq: semi-def main} improves on the explicit factorization given by Eq.~\eqref{eq: qc factorization} by obtaining a smaller quantum sampling overhead $\gamma_q$, as long as the optimum is achieved. The proofs regarding the semi-definite program are given in Appendix~\ref{app: minimum cost in-circuit}.
}

\subsection{{Efficient $\bbZ$-rows comparison}}
\label{sec: Z-rows pair propagation}

Comparing the $\bbZ$-rows of two channels is the central sub-routine for noise comparison. Although the size of the $\bbZ$-rows grows exponentially with the number of qubits, the comparison task can be executed efficiently for Clifford channels. We introduce the efficient algorithm here. 

We begin with the notations and preliminaries. 
For a general noise model distribution over the space-time of the circuit, the objective is to compare the $\bbZ$-rows of $\bar{\calB}$ and $\bar{\calB}^\prime$, where $\bar{\calB}^\dagger=\prod_{j=1}^{N+1}\bar{\calB}_j^\dagger$ with $\bar{\calB}_j^\dagger=\bar{\calU}_j^\dagger\calB_j^\dagger\bar{\calU}_j$, and likewise for $\bar{\calB}^{\prime\dagger}$. Here, $\bar{\calU}_j = \calU_{N+1}\calU_{N}\cdots\calU_{j+1}$ denotes the ideal circuit after the $j$-th gate, $\calB_j$ represents an basis Clifford channel at the $j$-th gate, and $\bar{\calU}_j \calB_j \bar{\calU}_j^\dagger$ corresponds to the propagated Clifford channel to the end of circuit.

In general, $\calB_j$ and $\calB^\prime_j$ are drawn from the universal basis channels~\cite{Endo2018}, which comprises not only unitary Clifford operations but also projection operations such as $\dketbra{0}{0}$ and $\dketbra{1}{1}$. This basis channel allows us to efficiently simulate the operation of $\bar{\calB}$ and $\bar{\calB}^\prime$ using the algorithm presented in Ref.~\cite{Aaronson2004}, which is also illustrated in Appendix~\ref{sec: Efficient simulation of universal basis channels}. The only modification required is the inclusion of an additional factor to track the trace of the resulting state since the projections are not trace-preserving. 

Comparing the $\bbZ$-rows of $\bar{\calB}$ and $\bar{\calB}^\prime$ is equivalent to comparing the $\bbZ$-columns of $\bar{\calB}^\dagger$ and $\bar{\calB}^{\prime \dagger}$.
First, we consider the case where $\calB_j$ and $\calB^\prime_j$ are unitary Clifford operations for all $j = 1, 2, ..., N+1$. In this case, it holds that 
\begin{equation}
    \bar{\calB}^\dagger (P_1 P_2) = \bar{\calB}^\dagger (P_1)\bar{\calB}^\dagger(P_2) \label{eq: Cliff on Pauli}
\end{equation} for arbitrary Pauli operators $P_1$ and $P_2$. 
Because $\{Z_i \vert i \in [n]\}$ generates all sign-free Pauli-Z operators, $\bar{\calB}^\dagger$ and $\bar{\calB}^{\prime \dagger}$ have the same $\bbZ$-columns if $\bar{\calB}^\dagger(Z_i) = \bar{\calB}^{\prime \dagger}(Z_i)$ for all $i\in [n]$. The computational cost of this $\bbZ$-rows comparison algorithm is $O(Nn^2)$ bit operations. 
This complexity arises because we must simulate the conjugation of Clifford operations on each of the $n$ Pauli-Z operators $Z_i$, and each such conjugation requires $O(Nn)$ bit operations. 

Second, we consider the general case, where $\calB_j$ and $\calB'_j$ can be an operation of either a Clifford gate or a projector. In this case, Eq.~\eqref{eq: Cliff on Pauli} is violated. For example, $\dketbra{0}{0}$ maps $I=XX$ to $\frac{1}{2} (I + Z)$ but maps $X$ to zero. Consequently, the method employed for the unitary case cannot be applied directly. Nevertheless, we propose an alternative efficient algorithm below.

We write the condition of $\bar{\calB}$ and $\bar{\calB}^\prime$ having the same $\bbZ$-rows as 
\begin{equation}
    \bar{\calB}^{\dagger} \dket{b} = \bar{\calB}^{\prime \dagger} \dket{b}, \; \forall \; b \in \{0, 1\}^n,  \label{eq: Z-blocck same}
\end{equation} 
whose necessary and sufficient condition is that, for every $b\in\{0,1\}^n$, the sub-normalized stabilizer states $\bar{\calB}^\dagger\dket{b}$ and $\bar{\calB}^{\prime \dagger}\dket{b}$ coincide, i.e., they have the same stabilizer generators.  Note that the sub-normalization is due to that $\calB_j$ can be projectors. 

We note some preliminaries on stabilizer state simulation before presenting the $\bbZ$-rows comparison algorithm. 
A sub-normalized stabilizer state can be efficiently represented by the triplet
\begin{eqnarray}
  \left[t, c, A\right],
\end{eqnarray}
where $t$ is the trace of stabilizer, $c$ is an $n$-element binary vector and $A$ is an $n\times 2n$ binary matrix. $c$ and $A$ specify the generators of the stabilizer group of the state, e.g., the $k$-th generator is
\begin{equation}
P_{k}=(-1)^{c^{(k)}} \prod_{r=1}^n(-i)^{A^{(k,r)} A^{(k, r+n)}} Z_r^{A^{(k,r)}} X_r^{A^{(k,r+n)}},  
\end{equation} where $Z_r = I^{\otimes (r-1)} \otimes Z \otimes I^{\otimes (n-r)}$ and $X_r = I^{\otimes (r-1)} \otimes X \otimes I^{\otimes (n-r)}$, $c^{(k)}$ is the $k$-th element of $c$ and $A^{(k,l)}$ denotes the element on the $k$-th row and $l$-th column. 

We denote the stabilizer representation of the evolved state at the $j$-th step, i.e., $\bar{\calB}_j^\dagger\bar{\calB}_{j-1}^\dagger\cdots\bar{\calB}_1^\dagger\dket{b}$, as $[t_j(b), c_j(b), A_j]$. The evolution of $A_j$ is independent of $b$, while $c_j(b)$ is affine in $b$, written as $c_j(b) = C_j b \oplus d_j$ with $C_j$ and $d_j$ being a binary matrix and vector of compatible dimensions, respectively, which can be determined algorithmically following the Gottesman-Knill theorem~\cite{Aaronson2004}. As for the evolution of $t_j(b)$, it can become $0$ if the input stabilizer state is orthogonal to the projector, and the subset $\mathbb{S}\subseteq \{0, 1\}^n$ of $b$ for which $t_j(b)\neq 0$ is an affine subspace. As an example, we consider the case that $\bar{\calB}_j$ is the projector $\dketbra{0}{0}$ on the $r$-th qubit. In this case, $t_j(b) = 0$ if $-Z_r$ is a stabilizer of the state evolved by previous $j-1$ steps. Given that $\pm Z_r$ is a stabilizer of the state evolved by previous $j-1$ steps and the projector is $\dketbra{0}{0}$, the affine subspace $\mathbb{S}\subseteq \{0, 1\}^n$ is determined by 
\begin{equation}
  \mathbf{1}^{(r)} \cdot (D_j b \oplus h_j) = 0,
  \label{eq: projector constraint}
\end{equation}
where $\mathbf{1}^{(r)}$ is the vector whose $r$-th element is $1$ and the other elements are $0$, and $D_j$ and $h_j$ are a fixed binary matrix and vector obtained from the affine sign map $c_j(b)=C_j b\oplus d_j$. The constraint extracts the sign of the $Z_r$ stabilizer and requires it to be $+$, so that the deterministic projection preserves the state rather than annihilating it.

{
We now detail the efficient $\bbZ$-rows comparison algorithm, which verifies if Eq.~\eqref{eq: Z-blocck same} holds for all $b \in \{0, 1\}^n$. 
The final states $\bar{\calB}^\dagger \dket{b}$ and $\bar{\calB}^{\prime \dagger} \dket{b}$ are represented by $[t_{N+1}(b), c_{N+1}(b), A_{N+1}]$ and $[t'_{N+1}(b), c'_{N+1}(b), A'_{N+1}]$, respectively. For any fixed $b$, if $t_{N+1}(b) = t'_{N+1}(b) = 0$, both states vanish and Eq.~\eqref{eq: Z-blocck same} is true.  If only one vanishes, Eq.~\eqref{eq: Z-blocck same} is not true. If both are non-zero, Eq.~\eqref{eq: Z-blocck same} is true if and only if their stabilizer generators coincide. We emphasize that this test involves only the stabilizer generators, i.e., $A$ and $c$: the two $\bbZ$-rows are treated as identical once their generators coincide, even if the traces $t_{N+1}(b)$ and $t'_{N+1}(b)$ (in the case both are non-zero) are not the same. Overall, the certification of Eq.~\eqref{eq: Z-blocck same} follows these steps:

\begin{enumerate}
    \item \textit{Comparison of $A$.---}The binary matrices $A_{N+1}$ and $A'_{N+1}$ are independent of $b$. We simulate their evolution (costing at most $O(Nn^3)$ bit operations) and verify if $A_{N+1} = A'_{N+1}$. If not, the $\bbZ$-rows differ. 

    \item \textit{Comparison of supports.---}If the test of $A_{N+1} = A'_{N+1}$ passes, we then check whether the affine subspaces $\mathbb{S} = \{b : t_{N+1}(b) \neq 0\}$ and $\mathbb{S}' = \{b : t'_{N+1}(b) \neq 0\}$ are the same. Each deterministic projector in $\bar{\calB}^\dagger$ contributes one linear constraint on $b$, e.g., Eq.~\eqref{eq: projector constraint}, stacking into a linear system that constrains $b$ into the subspace $\mathbb{S}$. Solving the linear system via Gaussian elimination yields a base point $s_0$ and a linear basis $\{v_1, \dots, v_d\}$ for the subspace, providing an affine basis $\mathbb{T} = \{s_0, s_0 \oplus v_1, \dots, s_0 \oplus v_d\}$ with $d \le n$. Similarly, we obtain the affine basis $\mathbb{T}^\prime = \{s^\prime_0, s^\prime_0 \oplus v^\prime_1, \dots, s^\prime_0 \oplus v^\prime_{d^\prime}\}$ with $d^\prime \le n$. Since Gaussian elimination yields a unique canonical form, $\mathbb{S} = \mathbb{S}^\prime$ if and only if $\mathbb{T} = \mathbb{T}^\prime$. If they differ, the $\bbZ$-rows differ.

    \item \textit{Comparison of $c$.---}If the supports coincide, $\mathbb{S} = \mathbb{S}'$, it remains to check whether $c_{N+1}(b) = c'_{N+1}(b)$ holds for all $b \in \mathbb{S}$; the values at $b \notin \mathbb{S}$ are irrelevant, since both states vanish there. Since $c_{N+1}(b)$ and $c'_{N+1}(b)$ are affine in $b$, their restriction to the affine subspace $\mathbb{S}$ is fixed by their values on the affine basis $\mathbb{T}$, so it suffices to verify $c_{N+1}(b) = c'_{N+1}(b)$ at the $d+1 \le n+1$ points $b \in \mathbb{T}$. If they all agree, the $\bbZ$-rows coincide; otherwise, they differ.
\end{enumerate}

The overall computational complexity of this algorithm is bounded by $O(Nn^4)$ bit operations, where $O(Nn^3)$ is due to the simulation of $\bar{\calB}^\dagger$ which contains $O(N)$ Clifford operations, and the factor of $n$ is because the Clifford circuit simulation is repeated for every point of the affine basis.
}

\textit{Acknowledgements---}We thank Qingyue Zhang, Zhou You, and Feng Xu for useful discussions.  D.Q.$\&$Y.Z. acknowledge the support from the Innovation Program for Quantum Science and Technology Grant Nos.~2024ZD0301900 and 2021ZD0302000, the National Natural Science Foundation of China (NSFC) Grant Nos.~12205048 and 12575012, the Shanghai Science and Technology Innovation Action Plan Grant No.~24LZ1400200, Shanghai Pilot Program for Basic Research - Fudan University 21TQ1400100 (25TQ003), and the start-up funding of Fudan University.
Y.L. is supported by the National Natural Science Foundation of China (Grant Nos. 12225507, 12088101) and NSAF (Grant No. U1930403).

\section{Discussion}

This work establishes classical noise inversion (CNI) as a potent and versatile framework for quantum error mitigation (QEM). By fundamentally shifting the computational overhead from quantum to classical resources, CNI directly addresses two critical bottlenecks in the field: the prohibitive circuit-sampling cost and the reliance on unrealistic, gate-independent noise models. 
We then develop a rigorous framework characterizing the classical tractability of quantum noise. {To address scenarios where classically intractable noise persists, we introduce partial CNI, which leverages a refined probabilistic error cancellation technique to suppresses the classically intractable gate-dependent noise, at a circuit-sampling cost controlled solely by the classically intractable noise component, rather than the full noise model.}

Furthermore, we propose noise compression to enhance the efficiency of practical quantum error mitigation. This method achieves the minimal sampling overhead by treating distinct noise components with identical effects as indistinguishable. The development of CNI-based robust shadow estimation provides a practical and superior protocol for reliable quantum state learning on noisy devices, as confirmed by our theoretical and numerical analyses. 

{
  The proposed protocols in this paper are applicable to what we term error-propagatable circuits. Classically simulating the computational outputs of such circuits incurs exponential overhead with increasing qubit number and circuit depth, whereas simulating error propagation throughout such circuits scales polynomially with these two parameters. Many circuits with critical applications are error-propagatable, including those used in Clifford+T fault-tolerant computation, measurement-based quantum computing, Hamiltonian evaluation via simultaneous estimation of commuting observables, and a wide range of randomized measurement protocols for characterizing quantum systems. We therefore anticipate CNI to become a valuable tool across near-term and fault-tolerant quantum applications, ranging from quantum benchmarking and variational algorithms to quantum simulations. 
}

Looking forward, several research avenues promise to further amplify the impact of the CNI paradigm. First, while our implementation employed unbiased Monte Carlo methods, future work could leverage more powerful classical simulators, such as tensor networks, to model the noise inversion deterministically. This could enhance accuracy, suppress statistical variance, and improve the scalability of the protocol. Second, the noise compression technique introduced here is a standalone innovation. Generalizing its principles to non-Clifford and deeper quantum circuits presents a compelling direction for reducing the overhead of quantum error mitigation at large. Third, it is intriguing to extend the framework and learning protocols introduced here to continuous systems, such as bosonic systems~\cite{mele2024CV}, and to their qudit analogs~\cite{Mao2025}. In addition, nonlinear observables are also worthy of study~\cite{liu2024auxiliary, zhang2025measuring}, as our framework may mitigate the noise in the joint measurements required for their estimation. 

Beyond its immediate technical contributions, this work underscores a pivotal conceptual shift: the strategic use of classical computational resources to manage and invert quantum errors. As quantum processors continue to scale, we envision that such co-designed, classical-quantum solutions will be indispensable for unlocking the full potential of quantum technologies, bridging the gap between the noisy devices of today and the fault-tolerant computers of the future.

\bibliography{CNI-arxiv-20260806.bbl}

\clearpage
\appendix
\onecolumngrid

\section{Performance of classical noise inversion}
\label{app: variance of error-mitigated estimators}
In this section, we analyze the performance of classical noise inversion (CNI) and compare it with that of conventional probabilistic error cancellation. We prove Theorem~\ref{th: perf CNI} (establishing CNI's performance guarantee) in Sec.~\ref{app: proof of variance of CNI} and derive the variance bound for conventional probabilistic error cancellation (cPEC) in Sec.~\ref{app: variance of conventional error mitigation } to contextualize CNI's advantage.

We restate the setup here. The noisy quantum circuit (including state preparation and measurement) is $\calM_{\bbZ}\tilde{\calU} \dket{\rho}$ where $\dket{\rho}$ is the input state, $\tilde{\calU}$ represents the operation of the noisy unitary, and $\calM_{\bbZ}$ is the channel of measurement in the computational basis. We denote $\tilde{\calU} = \calE\calU$ where $\calU$ is the ideal unitary and $\calE = \tilde{\calU}\calU^\dagger$ represents the overall effect of both gate noise and measurement noise. The expectation value of a linear function (e.g., $F=\sum_b f(b)\ketbra{b}{b}$) is estimated using the measurement outcomes. The ideal expectation value of $f$ is
\begin{equation}
f = \dbra{F} \calM_{\bbZ} \calU \dket{\rho}. 
\end{equation}

\subsection{Performance of CNI}
\label{app: proof of variance of CNI}
We prove Theorem~\ref{th: perf CNI} in this subsection. In CNI, we implement the inverse of the compressed error by classically sampling $\calB$ according to 
\begin{equation}
  \calE^{\prime -1} = \sum_{\calB} Q_\calB^\prime \calB^\prime = \gamma^\prime \sum_{\calB} \Prob{\calB} \operatorname{sgn}(\calB) \calB, 
\end{equation}
where $\gamma^\prime = \sum_{\calB}|Q_{\calB}^\prime|$, $\Prob{\calB}^\prime = |Q_{\calB}^\prime|/\gamma^\prime$ and $\operatorname{sgn}(\calB) = Q_{\calB}^\prime/ |Q_{\calB}^\prime|$. 
With one-shot measurement outcome $b$, and one instance of $\calB$, the estimator of $f$ is
\begin{equation}
    \hat{f}_{b, \calB} = \gamma'\operatorname{sgn}(\calB) \dbra{F}\calB\dket{b}. 
\end{equation}

Suppose we measure the noisy circuit for $K$ shots and generate $L$ instances of $\calB$ for each shot. The total estimator of $f$ using CNI is
\begin{equation} 
  \hat{f}_{\operatorname{tot}} = \frac{1}{K} \sum_{k=1}^K \left(\frac{1}{L} \sum_{l=1}^L \gamma^\prime \operatorname{sgn}(\calB_{k, l}) \dbra{F}\calB_{k, l}\dket{b_k}\right),
\end{equation} 
where $b_k$ is the measurement outcome of the $k$-th shot, and $\calB_{k, l}$ is the $l$-th instance of $\calB$ for the $k$-th shot.
$\hat{f}_{\operatorname{tot}}$ is an unbiased estimator of $f$:
\begin{eqnarray}
  \bbE[\hat{f}_{\operatorname{tot}}] &=& \bbE_{\calB, b} [\gamma^\prime \operatorname{sgn}(\calB) \dbra{F}\calB\dket{b}] \nonumber\\
  &=& \sum_{\calB} \sum_{b} \Prob{\calB} \gamma^\prime \operatorname{sgn}(\calB) \dbra{F}\calB\dket{b}\dbra{b}\tilde{\calU}\dket{\rho} \nonumber\\
  &=& \sum_{b} \dbra{F} \calE^{\prime -1}\dket{b} \dbra{b}\tilde{\calU} \dket{\rho} \nonumber\\
  &=& \dbra{F}  \calM_{\bbZ} \calU \dket{\rho} = f,
\end{eqnarray} where we have used $\calM_{\bbZ} = \sum_{b} \dketbra{b}{b}$, $\tilde{\calU} = \calE \calU$ and $\calE^\prime \calM_{\bbZ} = \calM_{\bbZ} \calE$. The variance of the estimator is 
  \begin{eqnarray}
    \Var{\hat{f}_{\operatorname{tot}}} &=& \frac{1}{K}\operatorname{Var}\left[\frac{1}{L} \sum_{l=1}^L \gamma^\prime \operatorname{sgn}(\calB_l) \dbra{F}\calB_l\dket{b}\right] \nonumber\\
    &\leq& \frac{1}{K}\bbE_{b, \{\calB_l\}}\left[\left(\frac{1}{L} \sum_{l=1}^L \gamma^\prime \operatorname{sgn}(\calB_l) \dbra{F}\calB_l\dket{b}\right)^2\right] \nonumber\\
    &=& \frac{1}{K} \bbE_{b, \{\calB_l\}}\left[\frac{\gamma^{\prime 2}}{L^2} \sum_{l=1}^L \dbra{F}\calB_l\dket{b}^2 + \frac{\gamma^{\prime 2}}{L^2}\sum_{l\neq l^\prime } \operatorname{sgn}(\calB_l)\dbra{F}\calB_l\dket{b} \operatorname{sgn}(\calB_{l^\prime}) \dbra{F}\calB_{l^\prime}\dket{b}\right] \nonumber\\
    &=& \frac{\gamma^{\prime 2}}{K} \left(\frac{1}{L} \bbE_{b, \calB}[\dbra{F}\calB\dket{b}^2] + (1 - \frac{1}{L})\bbE_{b, \calB, \calB^\prime}[(\operatorname{sgn}(\calB)\dbra{F}\calB\dket{b})(\operatorname{sgn}(\calB^\prime)\dbra{F}\calB^\prime\dket{b})]\right) \nonumber\\
    &=& \frac{\gamma^{\prime 2}}{K} \left(\frac{1}{L} \bbE_{b, \calB}[\dbra{F}\calB\dket{b}^2] + (1 - \frac{1}{L})\bbE_{b}\left[ \left(\bbE_{\calB}\left[\operatorname{sgn}(\calB)\dbra{F}\calB\dket{b}\right]\right)^2 \right]\right).
    \label{eq: var F}
  \end{eqnarray}
Define the following semi-norms regarding the expectations in the above equation:
\begin{eqnarray}
     \norm{F}_{\star} = \max_{\rho} \left(\bbE_{b, \calB}[\dbra{F}\calB\dket{b}^2]\right)^{1/2} &=& \max_{\rho} \left(\sum_{b}\dbra{b}\calE \calU \dket{\rho} \sum_{\calB} \Prob{\calB} \dbra{F}\calB\dket{b}^2\right)^{1/2}, \label{eq: norm F star}\\
      \norm{F}_{\circ} = \max_{\rho} \left(\bbE_{b}\left[ \left(\bbE_{\calB}\left[\operatorname{sgn}(\calB)\dbra{F}\calB\dket{b}\right]\right)^2 \right]\right)^{1/2}
    &=& \max_{\rho} \left(\sum_{b} \dbra{b} \calE\calU\dket{\rho} \left(\sum_{\calB}\Prob{\calB} \operatorname{sgn}(\calB) \dbra{F}\calB\dket{b}\right)^2\right)^{1/2} \label{eq: norm F o}   
\end{eqnarray} which are semi-norms rather than norms, as they do not guarantee the definiteness condition that $\norm{F} = 0$ if and only if $F=0$. The semi-norms satisfy $\norm{F}_{\circ}\leq \norm{F}_{\star}\leq \norm{F}_{\infty}$, where the infinity norm $\norm{\bullet}_{\infty}$ corresponds to the largest absolute singular value. Substituting Eq.~\eqref{eq: norm F star} and Eq.~\eqref{eq: norm F o} into Eq.~\eqref{eq: var F}, we obtain Eq.~\eqref{eq: variance of CNI} in Theorem~\ref{th: perf CNI}.

\subsection{Performance of conventional error mitigation }
\label{app: variance of conventional error mitigation }
In conventional quantum error mitigation, specifically probabilistic error cancellation, we randomly generate $\calB$ according to the quasi-probability decomposition of $\calE^{-1}$:
\begin{equation}
  \calE^{-1} = \sum_{\calB} Q_\calB \calB = \gamma \sum_{\calB} \Prob{\calB} \operatorname{sgn}(\calB) \calB, \label{eq: inv CEM}
\end{equation} where $\gamma = \sum_{\calB}|Q_{\calB}|$, $\Prob{\calB} = |Q_{\calB}|/\gamma$ and $\operatorname{sgn}(\calB) = Q_{\calB}/ |Q_{\calB}|$, and execute the noisy circuit $\calB\tilde{\calU}\dket{\rho}$. The estimator of $f$ given by conventional probabilistic error cancellation is $\gamma \operatorname{sgn}(\calB)\dbraket{F}{b}$. Suppose we generate $M$ random noisy circuits, and each circuit is measured $K$ times. The estimator of $f$ using conventional error mitigation is
\begin{equation}
  \hat{f}_{\operatorname{CEM}} = \frac{1}{M} \sum_{m=1}^M \gamma \operatorname{sgn}(\calB_m)\left(\frac{1}{K}\sum_{k=1}^K \dbraket{F}{b_{m ,k}}\right), 
\end{equation} where $b_{m ,k}$ is the outcome of the $k$-th shot of the $m$-th random circuit. $\hat{f}_{\operatorname{CEM}}$ is an unbiased estimator of $f$:
\begin{eqnarray}
  \bbE[\hat{f}_{\operatorname{CEM}}] &=& \bbE [\gamma \operatorname{sgn}(\calB) \dbraket{F}{b}] \nonumber\\
  &=& \sum_{\calB} \sum_{b} \Prob{\calB} \gamma \operatorname{sgn}(\calB) \dbra{b}\calB \calE {\calU}\dket{\rho} \dbraket{F}{b} \nonumber\\
  &=& \sum_{b} \dbraket{F}{b} \dbra{b} {\calU}\dket{\rho} = f. 
\end{eqnarray} 
The variance of $\hat{f}_{\operatorname{CEM}}$ is 
\begin{eqnarray}
\text{Var} [\hat{f}_{\operatorname{CEM}}] &=& \frac{1}{M} \operatorname{Var} \left[ \frac{1}{K} \sum_{k=1}^K \gamma \operatorname{sgn}(\calB) \dbraket{F}{b_k} \right] \nonumber\\
  &\leq& \frac{1}{M}\bbE_{\calB, \{b_k\}}\left[\left(\frac{1}{K} \sum_{k=1}^K \gamma \operatorname{sgn}(\calB) \dbraket{F}{b_k}\right)^2\right] \nonumber\\
  &=& \frac{\gamma^2}{M} \bbE_{\calB, \{b_k\}}\left[\frac{1}{K^2} \sum_{k=1}^K \dbraket{F}{b_k}^2 + \frac{1}{K^2}\sum_{k\neq k^\prime }^K \dbraket{F}{b_k}\dbraket{F}{b_{k^\prime}}\right] \nonumber\\
  &=& \frac{\gamma^2}{M} \left(\frac{1}{K} 
  \bbE_{\calB, b}[\dbraket{F}{b}^2]  
  + (1 - \frac{1}{K})
  \bbE_{\calB, b, b^\prime}[\dbraket{F}{b}\dbraket{F}{b^\prime}]
  \right). 
\end{eqnarray} 

The expectations in the above equation are
\begin{eqnarray}
\bbE_{\calB, b}[\dbraket{F}{b}^2] & =& \sum_{\calB} \Prob{\calB} \sum_{b} \dbra{b} \calB \tilde{\calU} \dket{\rho} \dbraket{F}{b}^2, \\
\bbE_{\calB, b, b^\prime}[\dbraket{F}{b}\dbraket{F}{b^\prime}] & =& 
\sum_{\calB} \Prob{\calB} \left(\sum_{b} \dbra{b} \calB \tilde{\calU} \dket{\rho} \dbraket{F}{b}\right)^2. 
\end{eqnarray}

\section{Channel propagation}
\label{app: channel propagation}
In this section, we discuss the error propagation through the quantum gates and the measurement. Any channel can propagate through a unitary gate, that is, there always exists a unique $\BcalE$ such that $\BcalE \calU = \calU \calE$ for any channel $\calE$ and unitary gate $\calU$. The propagated channel is $\BcalE = \calU \calE \calU^\dagger$. However, not every channel can propagate through the measurement. The following theorem proves the necessary and sufficient condition for propagating a quantum channel through the measurement, and provides the propagated channel.

\begin{theorem}
  \label{the: the channel propagation}
  A quantum channel $\calE$ can propagate through the measurement if and only if 
  \begin{equation}
    \dbra{\sigma}\calE \dket{\sigma^\prime} = 0, \; \forall\; \sigma \in \bbZ, \sigma^\prime \notin \bbZ. \label{eq: can propagated through measurement}
  \end{equation}The propagated channel can be any channel that admits the following Pauli transfer matrix:
  \begin{equation}
    \calEeff = \begin{pmatrix}
      \calE_{00} & A \\
      0 & B
      \end{pmatrix}, \label{eq: propagate channel m}
  \end{equation} where $\calE_{00}$ is the $\bbZ$-rows of $\calE$, $A, B$ are arbitrary with compatible dimensions.
\end{theorem}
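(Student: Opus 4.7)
The plan is to work entirely in the Pauli transfer matrix (PTM) representation and reduce the propagation relation $\calE'\calM_{\bbZ}=\calM_{\bbZ}\calE$ to an entry-wise block identity. The PTM of $\calM_{\bbZ}=\sum_{\sigma\in\bbZ}\dketbra{\sigma}{\sigma}$ is a diagonal projector onto the $\bbZ$-indexed Pauli subspace. Hence left-multiplication by $\calM_{\bbZ}$ keeps the rows of $\calE$ labelled by $\sigma\in\bbZ$ and kills the remaining rows, while right-multiplication by $\calM_{\bbZ}$ keeps the columns of $\calE'$ labelled by $\sigma'\in\bbZ$ and kills the remaining columns. This is the key structural observation that makes the rest of the argument essentially bookkeeping.

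The next step is to partition both PTMs into $2\times 2$ blocks with the ordering $(\bbZ,\,\bar\bbZ)$ on rows and columns, writing
\begin{equation}
\calE=\begin{pmatrix}\calE_{00}&\calE_{01}\\ \calE_{10}&\calE_{11}\end{pmatrix},\qquad \calE'=\begin{pmatrix}\calE'_{00}&\calE'_{01}\\ \calE'_{10}&\calE'_{11}\end{pmatrix}.
\end{equation}
The propagation condition then reads
\begin{equation}
\begin{pmatrix}\calE'_{00}&0\\ \calE'_{10}&0\end{pmatrix}=\begin{pmatrix}\calE_{00}&\calE_{01}\\ 0&0\end{pmatrix},
\end{equation}
which yields four block equalities: $\calE'_{00}=\calE_{00}$, $\calE_{01}=0$, $\calE'_{10}=0$, and a trivial $0=0$.

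From here both directions follow directly. For necessity, the existence of any $\calE'$ satisfying propagation forces $\calE_{01}=0$, which is exactly $\dbra{\sigma}\calE\dket{\sigma'}=0$ for all $\sigma\in\bbZ,\sigma'\notin\bbZ$, i.e.\ Eq.~\eqref{eq: can propagated through measurement}. For sufficiency, assuming Eq.~\eqref{eq: can propagated through measurement}, I construct an admissible $\calE'$ by fixing its $\bbZ$-block to $\calE_{00}$ and its lower-left block to $0$, while leaving the upper-right and lower-right blocks as arbitrary $A$ and $B$; this is precisely because those two blocks are annihilated by $\calM_{\bbZ}$ on the right and therefore do not enter the propagation equation. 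This yields the form in Eq.~\eqref{eq: propagate channel m} and also displays the nonuniqueness of $\calE'$.

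The main subtlety I anticipate is the physicality of $\calE'$: the theorem as stated only requires a linear map admitting the given PTM, but downstream applications typically want $\calE'$ to be at least trace- and Hermitian-preserving (or even completely positive). This is not an obstacle to the matrix-algebra proof, but it is worth noting that a canonical physically valid choice always exists—for example, taking $A=\calE_{01}$ (which is $0$ under the hypothesis) and $B=\calE_{11}$ makes $\calE'=\calE$ itself a valid propagated channel whenever the condition holds. I will present the proof at the PTM level as the theorem requires, and only remark on the CPTP choice in passing.
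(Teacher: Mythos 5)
Your proposal is correct and follows essentially the same route as the paper: both write $\calM_{\bbZ}$ as the block projector onto the $\bbZ$-subspace, compute $\calE'\calM_{\bbZ}$ and $\calM_{\bbZ}\calE$ in $2\times2$ block form, and read off the conditions $\calE_{01}=0$, $\calE'_{00}=\calE_{00}$, $\calE'_{10}=0$. Your closing remark that $\calE'=\calE$ itself is always an admissible (and physical) choice under the hypothesis is a small but worthwhile addition not made explicit in the paper.
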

\begin{proof}
  A quantum channel $\calE$ can propagate through the measurement means there exists $\calE^\prime$ such that $\calE^\prime \calM_{\bbZ} = \calM_{\bbZ} \calE,$ where 
  \begin{equation}
    \calM_{\bbZ} = \sum_{\sigma\in \bbZ} \dketbra{\sigma}{\sigma} = 
    \begin{pmatrix}
      \mathbb{I} & 0 \\
      0 & 0
      \end{pmatrix}.
  \end{equation}
  Notice that 
  \begin{eqnarray}
    &&\calEeff \MZ = \begin{pmatrix}
      \calEeff_{00} & \calEeff_{01} \\
      \calEeff_{10} & \calEeff_{11}
    \end{pmatrix}
    \begin{pmatrix}
      \id & 0\\
      0 & 0
    \end{pmatrix} = \begin{pmatrix}
      \calEeff_{00} & 0 \\
      \calEeff_{10} & 0
    \end{pmatrix} \\
    &&\MZ \calE = 
    \begin{pmatrix}
      \id & 0\\
      0 & 0
    \end{pmatrix}
    \begin{pmatrix}
      \calE_{00} & \calE_{01} \\
      \calE_{10} & \calE_{11}
    \end{pmatrix}
    = \begin{pmatrix}
      \calE_{00} & \calE_{01} \\
      0 & 0
    \end{pmatrix},
  \end{eqnarray}
the necessary and sufficient condition for $\calE^\prime \calM_{\bbZ} = \calM_{\bbZ} \calE$ is: $\calE_{01} = 0,\; \calE^{\prime}_{00} = \calE_{00},\; \calE^\prime_{10} = 0$. The first condition $\calE_{01} = 0$ is an alternative expression for Eq.~\eqref{eq: can propagated through measurement}. The second and third conditions are satisfied by any quantum channel whose PTM admits the form Eq.~\eqref{eq: propagate channel m}. Thus, the theorem is proved.
\end{proof}

\section{Performance of CNI-based robust shadow estimation}

We begin this section by reviewing the variance bounds of classical shadow tomography in noiseless settings~\cite{Huang2020a,Zhou2023}. Subsequently, we establish the performance guarantees of our robust shadow protocol with CNI in Appendix~\ref{app: proof of robust shadow with CNI} (i.e., we prove Theorem~\ref{th:perf_robust_shadow}), and compare the noise shadow norms with the ideal shadow norms in Appendix~\ref{app: Comparison of noisy shadow norms and ideal shadow norm}. For comparative analysis, Appendix~\ref{app: variance of robust shadow with cPEC} derives the variance of robust shadow tomography using conventional error mitigation. 

The shadow estimation protocol applies a random unitary $\calU$ channel from some ensemble $\bbU$ to the input state, followed by computational basis measurement. Subsequently, during post-processing, the inverse $\calM^{-1}$ of the measurement channel $\calM = \bbE_{U\in\bbU}\calU^\dagger\calM_{\bbZ}\calU$ is applied to the sampled shadow. Note that $\calM_{\bbZ} = \sum_{b} \dketbra{b}{b}$.

The multi-shot classical shadow estimator for  $o = \dbraket{O}{\rho}$ is given by 
\begin{equation}
  \hat{o}_{\operatorname{tot}} = \frac{1}{MK} \sum_{m=1}^M \sum_{k=1}^K \dbra{O} \calM^{-1}\calU_{m}^\dagger\dket{b_{m, k}},
\end{equation}
where $\{\calU_{m}|m=1,\ldots,M\}$ are $M$ i.i.d. random unitaries and $\{b_{m, k}|k=1,\ldots,K\}$ are the outputs of measuring the $m$-th circuit $\calU_{m}\dket{\rho}$ for total $K$ shots. $\hat{o}_{\operatorname{tot}}$ is an unbiased estimator for $\dbraket{O}{\rho}$:
\begin{eqnarray}
  \bbE\left[\hat{o}_{\operatorname{tot}}\right] &=& \bbE_{\calU, b} \dbra{O}  \calM^{-1} \calU^\dagger \dket{b} \nonumber\\
  &=& \bbE_{\calU}\sum_{b}  \dbra{O}  \calM^{-1} \calU^\dagger \dket{b} \dbra{b} \calU \dket{\rho} \nonumber\\
  &=& \dbra{O}  \calM^{-1} \calM \dket{\rho} = \dbraket{O}{\rho}, 
\end{eqnarray} where we have used the definitions $\calM_{\bbZ} = \sum_{b}\dketbra{b}{b}$ and $\calM = \bbE[\calU^\dagger \calM_{\bbZ} \calU]$. 
The variance of $\hat{o}_{\operatorname{tot}}$ is upper bounded by 
\begin{equation}
  \Var{\hat{o}_{\operatorname{tot}}} \leq \frac{1}{M}\left[\frac{1}{K}\left\|O\right\|_{\text {S }}^2+\left(1-\frac{1}{K}\right)\left\|O\right\|_{\text {XS }}^2\right],
  \end{equation} where the shadow norm $\norm{O}_{\text{S}}$~\cite{Huang2020a} and the cross-shadow norm $\norm{O}_{\text{XS}}$~\cite{Zhou2023} are defined as
\begin{eqnarray}
  \norm{O}_{\text{S} } &=& \max_{\rho}\left( \bbE_{\calU}\sum_{b} \dbra{\calM^{-1}(O)}\calU^\dagger \dket{b}^2 \dbra{\rho} \calU^\dagger \dket{b}\right)^{1/2}, \label{eq: shadow norm}\\
  \norm{O}_{\text{XS} } &=& \max_{\rho} \left(\bbE_{\calU} \sum_{b, b^\prime} \dbra{O}\calM^{-1}\calU^\dagger \dket{b}\dbra{O}\calM^{-1}\calU^\dagger \dket{b^\prime} \dbra{b} \calU \dket{\rho} \dbra{b^\prime} \calU \dket{\rho}\right)^{1/2}. \label{eq: cross-shadow norm}
\end{eqnarray}

\subsection{CNI-based robust shadow estimation}
\label{app: proof of robust shadow with CNI}

We prove Theorem~\ref{th:perf_robust_shadow} in this subsection.
In our method, we randomly generate quantum channel $\calU$ from an ensemble $\bbU$, implement the noisy quantum circuit $\tilde{\calU}\dket{\rho}$ on a noisy quantum computer and measure it in the computational basis to record a measurement outcome $b$. All the mitigation and reconstruction are left in the post-processing stage, which is the core difference compared to conventional error mitigation shown in the last section. In the post-processing step, we randomly generate a basis channel $\calB$ according to 
\begin{equation}
  \calE^{\prime -1}_{\calU} = \gamma^\prime_{_\calU} \sum_{\calB} \Prob{\calB\vert\calU} \operatorname{sgn}(\calB) \calB,  
\end{equation} 
where $\calE^{\prime -1}$ is the compressed noise inversion and satisfies $\calE^{\prime -1}_{\calU} \calM_{\bbZ}\calE_{\calU} =  \calM_{\bbZ}$, then we classically compute 
\begin{equation}
  \hat{o} = \gamma^\prime \operatorname{sgn}(\calB)\dbra{F}\calM^{-1}\calU^\dagger \calB \dket{b}.
\end{equation}
Suppose we have generated $M$ noisy random circuits $\{\calE_m \calU_m\}$, and we have measured $\calE_m \calU_m\dket{\rho}$ for $K$ times. The key difference between our method and the previous method is that we mitigate the error in the post-processing step, which allows us to leverage classical resources to reduce the variance of the estimator. Suppose for each random circuit $\calE_m \calU_m$ and each measurement outcome $b_{m, k}$, we have generated $L$ random basis channels $\{\calB_{m,k,l}\vert l=1,2,...,L \}$. The total estimator is 
  \begin{equation}\label{eqapp:oRshadow}
    \hat{o}_{\operatorname{tot}} = \frac{1}{M} \sum_{m=1}^M \frac{1}{K} \sum_{k=1}^{K} \frac{1}{L}\sum_{l=1}^{L} \gamma^\prime_{m} \operatorname{sgn}(\calB_{m,k,l})  \dbra{O}\calM^{-1}\calU_m^\dagger \calB_{m,k,l} \dket{b_{m,k}},
  \end{equation} 
which is an unbiased estimator:
\begin{eqnarray}
  \bbE[\hat{o}_{\operatorname{tot}}] &=& \bbE \left[ \gamma^\prime_{_\calU} \operatorname{sgn}(\calB)  \dbra{O}\calM^{-1}\calU^\dagger \calB \dket{b} \right] \nonumber\\
  &=& \bbE_{\calU} \sum_{b} \sum_{\calB} \Prob{{\calB|\calU}} \gamma^\prime_{_\calU}  \operatorname{sgn}(\calB) \dbra{O}\calM^{-1}\calU^\dagger \calB \dket{b} \dbra{b} \tilde{\calU}\dket{\rho} \nonumber \\
  &=& \bbE_{\calU} \sum_{b} \dbra{O}\calM^{-1}\calU^\dagger \calE^{\prime -1}_{\calU} \dket{b} \dbra{b} \calE_{\calU}\calU\dket{\rho} = \dbraket{O}{\rho}. 
\end{eqnarray}
 Next, we derive the variance of the estimator. For simplicity, we place the summation over $l$ inside the expectation and denote  
\begin{equation}
  \hat{\calE}^{\prime - 1}_{\calU_m, b_{m,k}} = \frac{1}{L} \sum_{l=1}^{L} \gamma^\prime_{m} \operatorname{sgn}(\calB_{m,k,l}) \calB_{m,k,l}, 
\end{equation} which allows us to write the estimator in \eqref{eqapp:oRshadow} as  
\begin{equation}
  \hat{o}_{\operatorname{tot}} = \frac{1}{M} \sum_{m=1}^M \frac{1}{K} \sum_{k=1}^{K}  \dbra{O}\calM^{-1}\calU_m^\dagger \hat{\calE}^{\prime - 1}_{\calU_m, b_{m,k}} \dket{b_{m,k}}. 
\end{equation} The variance of the estimator is 
  \begin{eqnarray}
  \text{Var}[\hat{o}_{\operatorname{tot}}] &=& \frac{1}{M} \text{Var}\left[\frac{1}{K} \sum_{k=1}^{K}  \dbra{O}\calM^{-1}\calU^\dagger \hat{\calE}^{\prime - 1}_{\calU,b_k} \dket{b_{k}}\right] \label{eq: var cni shadow 1}\\ 
    &\leq& \frac{1}{M} \left( \underset{\calU, \{b_k\}, \{\calB_{k,l}\}}{\bbE} \left[ \frac{1}{K^2} \sum_{k=1}^{K}  \dbra{O}\calM^{-1}\calU^\dagger \hat{\calE}^{\prime - 1}_{\calU, b_k} \dket{b_{k}}^2 \right]+ \right. \nonumber\\
    &&\left.\underset{\calU, \{b_k\}, \{\calB_{k,l}\}}{\bbE}\left[ \frac{1}{K^2} \sum_{k\neq k^\prime} \dbra{O}\calM^{-1}\calU^\dagger \hat{\calE}^{\prime - 1}_{\calU, b_k} \dket{b_{k}} \dbra{O}\calM^{-1}\calU^\dagger \hat{\calE}^{\prime - 1}_{\calU, b_{k^\prime}} \dket{b_{k^\prime}} \right]\right) \label{eq: var cni shadow 2}\\
    &=& \frac{1}{M} \left(\frac{1}{K} \left(\underset{\calU, b, \{\calB_l\}}{\bbE}\left[\dbra{O}\calM^{-1}\calU^\dagger \hat{\calE}^{\prime - 1}_{\calU, b} \dket{b}^2\right]\right) + \nonumber \right.\\
    && \left.(1 - \frac{1}{K}) 
    \underset{\calU, b, b^\prime, \{\calB_l\},\{\calB_{l^\prime}\}}{\bbE}\left[\dbra{O}\calM^{-1}\calU^\dagger \hat{\calE}^{\prime - 1}_{\calU, b} \dket{b}\dbra{O}\calM^{-1}\calU^\dagger \hat{\calE}^{\prime - 1}_{\calU, b^\prime} \dket{b^\prime}\right]\right). \label{eq: var cni shadow 3} 
  \end{eqnarray} 
  Here, we make some clarification about the above derivation. In Eq.~\eqref{eq: var cni shadow 1}, we used that $\{\calU_m\}$ are independently and identically distributed (i.i.d.) samples following a uniform distribution over $\bbU$. Eq.~\eqref{eq: var cni shadow 2} is a direct expansion of Eq.~\eqref{eq: var cni shadow 1}. Here, $\{b_k\}$ denotes a set of independent and identically distributed (i.i.d.) measurement outcomes, which follow the distribution given by $\Prob{b\vert \calU} = \dbra{b}\tilde{\calU}\dket{\rho}$. By leveraging that $\{b_k\}$ are i.i.d., we derive Eq.~\eqref{eq: var cni shadow 3}, where $\hat{\calE}^{\prime - 1}_{\calU, b} = \frac{1}{L} \sum_{l=1}^{L} \gamma^\prime_{\calU} \operatorname{sgn}(\calB_{l}) \calB_{l}$ is the mean of $L$ instance of $\calB$ for the measurement outcome $b$. 
  The first expectation in Eq.~\eqref{eq: var cni shadow 3} reads
  \begin{eqnarray}
    && \bbE_{\calU, b, \{\calB_l\}} \left[\dbra{O}\calM^{-1}\calU^\dagger \hat{\calE}^{\prime - 1}_{\calU, b} \dket{b}^2\right] \nonumber\\
    &&= \bbE_{\calU} \sum_{b} \dbra{b} \tilde{\calU} \dket{\rho} 
    \bbE_{\{\calB_l\}} \left[ \dbra{O}\calM^{-1}\calU^\dagger \left(\frac{1}{L}\sum_{l=1}^{L} \gamma^\prime_{_\calU} \operatorname{sgn}(\calB_{l}) \calB_{l} \right) \dket{b}^2 \right]  \nonumber\\
    &&=  \bbE_{\calU} \sum_{b} \dbra{b} \tilde{\calU} \dket{\rho} 
    \bbE_{\{\calB_l\}} \left[ \frac{\gamma^{\prime 2}_{_\calU}}{L^2} \sum_{l=1}^{L} \dbra{O}\calM^{-1}\calU^\dagger   \calB_{l} \dket{b}^2 \right] \nonumber\\
    && \quad + \bbE_{\calU} \sum_{b} \dbra{b} \tilde{\calU} \dket{\rho} 
    \bbE_{\{\calB_l\}} \left[ \frac{\gamma^{\prime 2}_{_\calU}  }{L^2} \sum_{l\neq l^\prime}^{L} \dbra{O}\calM^{-1}\calU^\dagger \operatorname{sgn}(\calB_{l})  \calB_{l} \dket{b}  \dbra{O}\calM^{-1}\calU^\dagger \operatorname{sgn}(\calB_{l^\prime}) \calB_{l^\prime} \dket{b} \right] \nonumber\\
    &&=  \bbE_{\calU} \sum_{b} \dbra{b} \tilde{\calU} \dket{\rho} 
    \frac{\gamma^{\prime 2}_{_\calU}}{L} \sum_{\calB} \Prob{{\calB|\calU}} \dbra{O}\calM^{-1}\calU^\dagger \calB \dket{b}^2 \nonumber\\
    && \quad +  \bbE_{\calU} \sum_{b} \dbra{b} \tilde{\calU} \dket{\rho} 
    \frac{L(L-1)}{L^2} \sum_{\calB, \calB^\prime}\Prob{{\calB|\calU}}\Prob{{\calB^\prime|\calU}} \gamma^{\prime 2}_{_\calU} \dbra{O}\calM^{-1}\calU^\dagger \operatorname{sgn}(\calB) \calB \dket{b} \dbra{O}\calM^{-1}\calU^\dagger \operatorname{sgn}(\calB^{\prime}) \calB^{\prime} \dket{b} \nonumber\\
    &&=  \bbE_{\calU} \sum_{b} \dbra{b} \tilde{\calU} \dket{\rho} 
    \frac{\gamma^{\prime 2}_{_\calU}}{L} \sum_{\calB} \Prob{{\calB|\calU}} \dbra{O}\calM^{-1}\calU^\dagger \calB \dket{b}^2 \nonumber\\
    && \quad +  \bbE_{\calU} \sum_{b} \dbra{b} \tilde{\calU} \dket{\rho} 
    \left(1 - \frac{1}{L}\right) \gamma^{\prime 2}_{_\calU} \left(\sum_{\calB}\Prob{{\calB|\calU}}  \dbra{O}\calM^{-1}\calU^\dagger \operatorname{sgn}(\calB) \calB \dket{b} \right)^2\label{eq: tttt}\\
    &&\leq \frac{\gamma^{\prime 2}_{\text{max}}}{L}  \bbE_{\calU} \sum_{b}  \dbra{b} \tilde{\calU} \dket{\rho} \sum_{\calB}  \Prob{{\calB|\calU}}  
   \dbra{O}\calM^{-1}\calU^\dagger \calB \dket{b}^2 \nonumber\\
    && \quad + \left(1 - \frac{1}{L}\right)\gamma^{\prime 2}_{\text{max}}  \bbE_{\calU} \sum_{b} \dbra{b} \tilde{\calU} \dket{\rho} 
     \left(\sum_{\calB}\Prob{{\calB|\calU}}  \dbra{O}\calM^{-1}\calU^\dagger \operatorname{sgn}(\calB) \calB \dket{b} \right)^2,
  \end{eqnarray} where $\gamma^\prime_{\operatorname{max}} = \max_{\calU\in \bbU} \gamma^\prime_{\calU}$ is the maximum value of $\gamma^\prime$ over all the quantum channels $\calU$. 
  Define the noisy shadow norms $\norm{O}_{\text{NS1}}$ and $\norm{O}_{\text{NS2}}$ as
  \begin{eqnarray}
     \norm{O}_{\text{NS1}} &=& \max_{\rho} \left(\bbE_{\calU} \sum_{b} 
     \dbra{b} \tilde{\calU} \dket{\rho}   \sum_{\calB}  \Prob{{\calB|\calU}} 
    \dbra{O}\calM^{-1}\calU^\dagger \calB \dket{b}^2 \label{eq: NS1}\right)^{1/2}\\
    \norm{O}_{\text{NS2}} &=& \max_{\rho} 
      \left(\bbE_{\calU} \sum_{b} \dbra{b} \tilde{\calU} \dket{\rho} 
     \left(\sum_{\calB}\Prob{{\calB|\calU}}  \dbra{O}\calM^{-1}\calU^\dagger \operatorname{sgn}(\calB) \calB \dket{b} \right)^2\right)^{1/2}.  
     \label{eq: NS2} 
  \end{eqnarray} 
  We can show that $\norm{O}_{\text{NS1}} \geq \norm{O}_{\text{NS2}}$ for any $O$, because 
\begin{equation}
  \norm{O}_{\text{NS1}}^2 - \norm{O}_{\text{NS2}}^2 = \bbE_{\calU, b} \left(\operatorname{Var}_{\calB\sim\Prob{\calB|\calU}}\left[ \dbra{O}\calM^{-1}\calU^\dagger \operatorname{sgn}(\calB) \calB \dket{b}\right] \right) \geq 0. 
\end{equation}
  As a result, the first expectation in Eq.~\eqref{eq: var cni shadow 3} is simplified to  
  \begin{eqnarray}
    \bbE\left[\dbra{O}\calM^{-1}\calU^\dagger \hat{\calE}^{\prime - 1}_{\calU, b} \dket{b}^2\right] \leq \gamma^{\prime 2}_{\text{max}} \left(
    \frac{1}{L} \norm{O}_{\text{NS1}}^2 + \left(1 - \frac{1}{L}\right) \norm{O}_{\text{NS2}}^2 \right). 
  \end{eqnarray}
Next, we consider the second expectation term in~Eq.~\eqref{eq: var cni shadow 3}, which is 
\begin{eqnarray}
  && \bbE_{\calU, b, b^\prime, \{\calB_l\}, \{\calB_{l^\prime}\}}\left[\dbra{O}\calM^{-1}\calU^\dagger \hat{\calE}^{\prime - 1}_{\calU, b} \dket{b}\dbra{O}\calM^{-1}\calU^\dagger \hat{\calE}^{\prime - 1}_{\calU, b^\prime} \dket{b^\prime}\right]  \nonumber\\
  &=& \bbE_{\calU} \sum_{b, b^\prime} \dbra{O}\calM^{-1}\calU^\dagger {\calE}^{-1} \dket{b}\dbra{O}\calM^{-1}\calU^\dagger {\calE}^{-1} \dket{b^\prime} \dbra{b} \calE \calU \dket{\rho} \dbra{b^\prime} \calE \calU \dket{\rho}  \nonumber\\
  &=& \bbE_{\calU} \sum_{b, b^\prime} \dbra{O}\calM^{-1}\calU^\dagger \dket{b}\dbra{O}\calM^{-1}\calU^\dagger \dket{b^\prime} \dbra{b} \calU \dket{\rho} \dbra{b^\prime} \calU \dket{\rho}, \nonumber\\
  &\leq& \norm{O}_{\text{XS}}^2. \label{eq: var cni shadow 4}
\end{eqnarray}
To sum it up, the variance is upper bounded by 
\begin{eqnarray}
  \text{Var}[\hat{o}_{\text{tot}}] &\leq& \frac{1}{M}\left[
    \frac{\gamma^{\prime 2}_{\text{max}}}{K}\left(\frac{1}{L} \norm{O}_{\text{NS1}}^2 + \left(1 - \frac{1}{L}\right) \norm{O}_{\text{NS2}}^2\right) 
    + (1 - \frac{1}{K}) \norm{O}_{\text{XS}}^2
  \right]. 
\end{eqnarray}

\subsection{Comparison of noisy shadow norms and ideal shadow norm}
\label{app: Comparison of noisy shadow norms and ideal shadow norm}

Here, we prove Proposition~\ref{prop:NS_bound}. 
As established by Eq.~\eqref{eq: var cni shadow 4}, CNI recovers the ideal cross-shadow norm, demonstrating its theoretical advantage in preserving measurement statistics under noise. We now compare the noisy shadow norms $\norm{\bullet}_{\text{NS1}}$ and $\norm{\bullet}_{\text{NS2}}$ against the ideal shadow norm $\norm{\bullet}_{\text{S}}$.

Next, we focus on the upper bound of $\norm{O}_{\text{NS1}}$.
Since we are always considering gate-dependent noise, it does not matter whether we write $\calB$ on the left or right side of $\calU$. Thus, we can rewrite $\norm{O}_{\text{NS1}}$ in Eq.~\eqref{eq: NS1} as
\begin{eqnarray}
  \norm{O}_{\text{NS1}}^2 &=& 
  \bbE_{\calU} \sum_{b} 
  \dbra{b} {\calU} \dket{\bar{\calE}(\rho)} \sum_{\calB}  \Prob{{\calB|\calU}} 
  \dbra{\bar{\calB}^{\dagger}(O)}\calM^{-1}\calU^\dagger \dket{b}^2, 
\end{eqnarray}
where $\bar{\calE} = \calU^\dagger \calE \calU$ and 
$\bar{\calB} = \calM^{-1}\calU^\dagger \calB \calU \calM$. 
Let $g = \min_{\calU} \Prob{\calI|\calU}$ be the minimum conditional probability of the identity channel $\calI$ over all $\calU$, and denote
\begin{eqnarray}
\operatorname{Pr}^{\prime} (\calI\vert\calU) = \frac{\Prob{\calI|\calU} - g}{1 - g}, \\ \operatorname{Pr}^{\prime} (\calB\vert\calU) = \frac{\Prob{\calB|\calU}}{1 - g},\forall \calB\neq \calI, 
\end{eqnarray}
which satisfies $\sum_{\calB}\operatorname{Pr}^{\prime}(\calB\vert\calU) = 1$ for all $\calU$. Note that $\calB = \calI$ is also included in the summation over $\calB$. Then, we can write $\norm{O}^{2}_{\text{NS1}}$ as
\begin{eqnarray}
  \norm{O}_{\text{NS1}}^2 &=& g \left(\bbE_{\calU} \sum_{b} 
  \dbra{b} {\calU} \dket{\bar{\calE}(\rho)}   \sum_{\calB}
  \dbra{O}\calM^{-1}\calU^\dagger \dket{b}^2 \right) \nonumber\\
  && + (1 - g) \left(\bbE_{\calU} \sum_{b} 
  \dbra{b} {\calU} \dket{\bar{\calE}(\rho)}   \sum_{\calB} \operatorname{Pr}^{\prime}(\calB\vert\calU) 
  \dbra{\bar{\calB}^{\dagger}(O)}\calM^{-1}\calU^\dagger \dket{b}^2 \right) \\
  & \leq & g \left(\bbE_{\calU} \sum_{b} 
  \dbra{b} {\calU} \dket{\bar{\calE}(\rho)}   \sum_{\calB}
  \dbra{O}\calM^{-1}\calU^\dagger \dket{b}^2 \right) \nonumber\\
  && + (1 - g) h \left(\bbE_{\calU} \sum_{b} 
  \dbra{b} {\calU} \dket{\bar{\calE}(\rho)}   \sum_{\calB} \operatorname{Pr}^{\prime}(\calB) 
  \dbra{\bar{\calB}^{\dagger}(O)}\calM^{-1}\calU^\dagger \dket{b}^2 \right), 
\end{eqnarray}
where $\operatorname{Pr}^\prime{(\calB)} = \bbE_{\calU}\operatorname{Pr}^\prime{(\calB\vert \calU)}$ is the marginal distribution of $\calB$ and the factor $h$ is
\begin{equation}
  h = \max_{\calU, \calB} \frac{\operatorname{Pr}^\prime{(\calB\vert \calU)}}{\operatorname{Pr}^\prime{(\calB)}}.
\end{equation} 
Taking the maximum over $\bar{\calE}(\rho)$ for the first expectation, and the maximum over both $\bar{\calE}(\rho)$ and $\calB$ for the second expectation, we obtain that
\begin{eqnarray}
    \norm{O}_{\text{NS1}}^2 &\leq& g \norm{O}_{\text{S}}^2 + (1 - g) h \max_{\bar{\calB}} \norm{\bar{\calB}^\dagger(O)}_{\text{S}}^2.
\end{eqnarray} 
In the following, we discuss the factors $g$ and $h$, and the deviated observable $\bar{\calB}^\dagger(O)$.

The factor $g$ quantifies the effect of noise severity on the shadow norm. In the noiseless case, $g = 1$, and we recover the ideal shadow norm $\norm{O}_{\text{NS1}} =  \norm{O}_{\text{S}}$. In fact,  $g \approx 1-p+o\left(p^2\right)$ for small $p$ which we derive in the following. Consider $\calE = (1-p)\calI + p\calN$ is the noise channel attached to $\calU$ with minimum error rate $p$. The approximate noise inverse is $\calE_0^{-1} = \frac{1}{(1-p)^2} \left[ (1-p)\calI - p \calN \right]$, which results in $\Prob{\calI|\calU} = 1 - p$ and is able to mitigate the noise to the level of $o(p^2)$: $\calE_0^{-1}\calE = \calI - (\frac{p}{1-p})^2 \calN$.

The factor $h$ quantifies the ceiling of the correlation between $\calU$ and $\calB$. In the best case, where $\calB$ is independent of $\calU$, we have $h = 1$. In the worst case, where $\calB$ is determined by $\calU$ and different $\calB$ corresponds to different $\calU$, $h = |\bbU|$. In fact, $\log(h)$ is an upper bound on the mutual information between $\calU$ and $\calB$:
\begin{equation}
  I(\calU; \calB) = \sum_{\calU, \calB} \Prob{\calU, \calB} \log\frac{\Prob{\calB | \calU}}{\Prob{\calB}} \leq \sum_{\calU, \calB} \Prob{\calU, \calB} \log\frac{\operatorname{Pr}^\prime(\calB | \calU)}{\operatorname{Pr}^\prime(\calB)} \leq \log(h), 
\end{equation} where we have used (assume $\Prob{\calI} \neq g$, which is valid since it means $\Prob{\calI|\calU}$ are the same for all $\calU$) 
\begin{equation}
  \frac{\operatorname{Pr}^\prime(\calU | \calB)}{\operatorname{Pr}^\prime(\calB)} = \frac{\Prob{\calU | \calB}}{\Prob{\calB}}, \;
  \frac{\operatorname{Pr}^\prime(\calI | \calB)}{\operatorname{Pr}^\prime(\calB)} = \frac{\Prob{\calI | \calB} - g}{\Prob{\calI} - g} \geq \frac{\Prob{\calI | \calB}}{\Prob{\calB}}. 
\end{equation}
Note that, with the above equations, we can show that the relation between $h$ and the distribution $\Prob{\calU, \calB}$ is given by
\begin{equation}
  h=\max \left\{\max _{\calU, \calB}\left\{\frac{\operatorname{Pr}(\calB \mid \calU)}{\operatorname{Pr}(\calB)}\right\}, \max_{\calU}\left\{\frac{\operatorname{Pr}(\calI \mid \calU)-g}{\operatorname{Pr}(\calI)-g}\right\}\right\}. 
\end{equation}

The upper bound of the shadow norm of the deviated observable $\norm{\bar{\calB}^\dagger(O)}_{\text{S}}$ is the same as that of the original observable $\norm{O}_{\text{S}}$ for both random Clifford measurement and random Pauli measurement under the Pauli noise model.
Since $\bar{\calB} = \calM^{-1}\calU^\dagger \calB \calU \calM$ and $\calM$ is self-adjoint, the deviated observable reads $\bar{\calB}^\dagger(O) = \calM \calU^\dagger \calB^\dagger \calU \calM^{-1}(O)$. 
For random Clifford measurement, the map reads 
  \begin{equation}
    \calM (O) = \frac{1}{2^n + 1} \left( O + \Tr(O) \bbI \right),
  \end{equation} 
which commutes with any trace-preserving $(\Tr(\calC(O)) = \Tr(O))$ and unital channel ($\calC(\bbI) = \bbI$), i.e. $\calM \calC (O) = \calC \calM (O)$. Both $\calU$ (Clifford channel) and $\calB$ (Pauli channel) commute with $\calM$, so we have $\bar{\calB}^\dagger(O) =\calU^\dagger \calB^\dagger \calU(O)$. Using that $\norm{O}_{\text{S}}^2 \leq \Tr(O^2)$~\cite{Huang2020a}, we have 
\begin{equation}
  \norm{\bar{\calB}^\dagger(O)}_{\text{S}}^2 = \norm{\calU^\dagger \calB^\dagger \calU(O)}_{\text{S}}^2 \leq
  \Tr ((\calU^\dagger \calB^\dagger \calU(O))^2) = \Tr(O^2).   
\end{equation} 
For random Pauli measurements, we consider that $O=O_1 \otimes \cdots \otimes O_k \otimes \mathbb{I}^{\otimes(n-k)}$. Since the map $\calM$ is now a tensor product of local channels, and $\calU^\dagger \calB \calU$ is a tensor product of local Pauli channels, the locality of $\bar{\calB}^\dagger(O)$ is the same as that of $O$. As a result, the upper bound of the shadow norm of $\bar{\calB}^\dagger (O)$ is the same as that of $O$, i.e. $\norm{\bar{\calB}^\dagger(O)}_{\text{S}}^2 \leq 4^k \norm{O}_{\infty}^2$.

\subsection{Classical shadow with conventional error mitigation}
\label{app: variance of robust shadow with cPEC}
The procedure to implement error-mitigated classical shadow tomography with conventional error mitigation is: generate a random unitary channel $\calU$ from the ensemble $\bbU$ and a random basis channel $\calB$ according to the probability $\Prob{{\calB|\calU}}$ by the decomposition
\begin{equation}
  \calE^{-1}_{\calU} = \gamma_{_\calU} \sum_{\calB} \Prob{{\calB|\calU}} \operatorname{sgn}(\calB) \calB, \label{eq: err inv U}
\end{equation} and then measure the noisy circuit from the evolution $\calB \tilde{\calU} \dket{\rho}$ in the computational basis. Note that the general form of the noisy evolution can be written as $\tilde{\calU} = \calE_{\calU}\calU$, where the error channel $\calE_{\calU}$ may depend on $\calU$. Suppose we have generated $M$ noisy circuits $\{\calB_m\tilde{\calU}_m|m=1,\ldots,M\}$ and measured each circuit for $K$ shots, the estimator given by classical shadow with conventional error mitigation is 
  \begin{equation}
    \hat{o}_{_\text{CEM}} = \frac{1}{M} \sum_{m=1}^M \gamma_{m} \operatorname{sgn}(\calB_m) \frac{1}{K}\sum_{k=1}^{K} \dbra{O}\calM^{-1}\calU_m^\dagger \dket{b_{m,k}},
  \end{equation} 
  where $b_{m,k}$ is the $k$-th measurement outcome of the $m$-th circuit $\calB_m\tilde{\calU}_m\dket{\rho}$. $\hat{o}_{_\text{CEM}}$ is also an unbiased estimator:
  \begin{eqnarray}
    \bbE[\hat{o}_{_\text{CEM}}]  &=& \bbE_{\calU, \calB, b} [ \gamma_{\calU} \operatorname{sgn}(\calB)\dbra{O}  \calM^{-1} \calU^\dagger \dket{b}] \nonumber \\
     &=& \bbE_{\calU} \sum_{\calB} \Prob{{\calB|\calU}} \gamma_{_\calU} \operatorname{sgn}(\calB) \sum_{b} \dbra{O}\calM^{-1}\calU^\dagger \dket{b}\dbra{b}\calB \tilde{\calU} \dket{\rho} \nonumber\\
    &=& \bbE_{\calU} \sum_{\calB} \Prob{{\calB|\calU}}  \gamma_{_\calU} \operatorname{sgn}(\calB) \dbra{O}\calM^{-1}\calU^\dagger  \calM_{\bbZ} \calB \tilde{\calU} \dket{\rho}\nonumber \\
    &=& \bbE_{\calU} \dbra{O}\calM^{-1}\calU^\dagger  \calM_{\bbZ} \calE^{-1}_{\calU} \calE_{\calU} {\calU} \dket{\rho} \nonumber\\
    &=& \dbra{O}\calM^{-1}\calM \dket{\rho} = \dbraket{O}{\rho}. 
  \end{eqnarray}

  Denote $\gamma_{\text{max}} = \max_{\calU\in \bbU} \gamma_\calU$ in \eqref{eq: err inv U}, the variance of the above estimator is 
  \begin{eqnarray}
    \text{Var}[\hat{o}_{_\text{CEM}}] &=& \frac{1}{M} \text{Var} \left[\frac{\gamma_{\text{max}}}{K}\operatorname{sgn}(\calB)\sum_{k=1}^{K} \dbra{O}\calM^{-1}\calU^\dagger \dket{b_k}\right]  \nonumber\\ 
    &\leq&  \frac{\gamma_\text{max}^2}{M} \bbE\left[\left(\frac{1}{K}\sum_{k=1}^{K} \dbra{O}\calM^{-1}\calU^\dagger \dket{b_k}\right)^2\right] \nonumber\\ 
    &=&  \frac{\gamma_\text{max}^2}{M} \bbE\left[\frac{1}{K^2}\sum_{k=1}^{K} \dbra{O}\calM^{-1}\calU^\dagger \dket{b_k}^2+\frac{1}{K^2}\sum_{k\neq k^\prime}\dbra{O}\calM^{-1}\calU^\dagger \dket{b_k}\dbra{O}\calM^{-1}\calU^\dagger \dket{b_{k^\prime}}\right]  \nonumber\\ 
    &=& \frac{\gamma_\text{max}^2}{M}\left( \frac{1}{K} \bbE_{\calU}\sum_{\calB}\Prob{{\calB|\calU}}\sum_{b} \dbra{b}\calB\tilde{\calU}\dket{\rho} \dbra{O}\calM^{-1}\calU^\dagger \dket{b}^2  \right.  \nonumber\\
    && +  \left. \frac{K(K-1)}{K^2} \bbE_{\calU}\sum_{\calB}\Prob{{\calB|\calU}}\sum_{b,b^\prime} 
    \dbra{b}\calB\tilde{\calU}\dket{\rho}
    \dbra{O}\calM^{-1}\calU^\dagger \dket{b}  
    \dbra{b^\prime}\calB\tilde{\calU}\dket{\rho} \dbra{O}\calM^{-1}\calU^\dagger \dket{b^\prime}  \right) \nonumber\\
    &=& \frac{\gamma_\text{max}^2}{M}\left( \frac{1}{K} \bbE_{\calU}\sum_{\calB}\Prob{{\calB|\calU}}\sum_{b} \dbra{b}\calB\tilde{\calU}\dket{\rho} \dbra{O}\calM^{-1}\calU^\dagger \dket{b}^2  \right.  \nonumber\\
    &&  +  \left.  \left(1 - \frac{1}{K}\right) \bbE_{\calU}\sum_{\calB}\Prob{{\calB|\calU}}
    \left(\sum_{b} 
     \dbra{b}\calB\tilde{\calU}\dket{\rho}
    \dbra{F}\calM^{-1}\calU^\dagger \dket{b} \right)^2  \right)
  \end{eqnarray} 
In the above derivation, because there is only one random $\calB$ for each $\calU$ and $\operatorname{sgn}(\calB)^2 = 1$, the sign vanishes in the first equality. Define the noisy shadow norm and the noisy cross shadow norm respectively as 
  \begin{eqnarray}
    \norm{O}^2_{\text{NS} } &=& \max_{\rho} \bbE_{\calU}\sum_{\calB}\Prob{{\calB|\calU}}\sum_{b} \dbra{b}\calB\tilde{\calU}\dket{\rho} \dbra{O}\calM^{-1}\calU^\dagger \dket{b}^2 \label{eq: NS norm}\\
    \norm{O}^2_{\text{NXS} } &=& \max_{\rho} \bbE_{\calU}\sum_{\calB}\Prob{{\calB|\calU}}\left(\sum_{b} 
     \dbra{b}\calB\tilde{\calU}\dket{\rho}
    \dbra{O}\calM^{-1}\calU^\dagger \dket{b} \right)^2 , \label{eq: NXS norm}
  \end{eqnarray} 
  we simplify the expression of the variance to  
  \begin{equation}
    \text{Var}[\hat{o}_{\text{CEM}}] \leq \frac{\gamma_\text{max}^2}{M}\left[\frac{1}{K} \norm{O}^2_{\text{NS} } + \left(1- \frac{1}{K}\right)\norm{O}^2_{\text{NXS} } \right]. 
  \end{equation}

\section{Efficient simulation of universal basis channels}
\label{sec: Efficient simulation of universal basis channels}

Efficient simulation of error propagation and $\bbZ$-rows comparison relies on the simulation of the universal basis channels~\cite{Endo2018}. These channels comprise a set of Clifford operations along with the projectors $\dketbra{0}{0}$ and $\dketbra{1}{1}$. Here, we present the corresponding simulation algorithms, which are largely identical to those described in Ref.~\cite{Aaronson2004}, with the key distinction that we introduce an additional factor $g$ to store the trace of the simulated state.

A stabilizer state is represented by its stabilizer generators, which are Pauli operators with a global sign $\pm 1$. We use the isomorphism between stabilizer generators and $2n+1$-bit binary strings:
\begin{equation}
P_{\bsa}=(-1)^{a_0} \prod_{r=1}^n(-i)^{a_r a_{r+n}} Z_r^{a_r} X_r^{a_{r+n}}, 
\end{equation} where $\bsa = \{a_0, a_1, \cdots, a_{2n-1}, a_{2n}\}$ is a binary string of $2n+1$ bits, $Z_r = I^{\otimes (r-1)} \otimes Z \otimes I^{\otimes (n-r)}$ and $X_r = I^{\otimes (r-1)} \otimes X \otimes I^{\otimes (n-r)}$. We can represent a stabilizer state as 
\[
A =
\begin{pmatrix}
a^{(1)}_{0} & a^{(1)}_{1} & \cdots & a^{(1)}_{2n}\\[4pt]
a^{(2)}_{0} & a^{(2)}_{1} & \cdots & a^{(2)}_{2n}\\[4pt]
\vdots      & \vdots      & \ddots & \vdots\\[4pt]
a^{(n)}_{0} & a^{(n)}_{1} & \cdots & a^{(n)}_{2n}
\end{pmatrix}. 
\]
The Clifford gates update $A$ as following:
\begin{itemize}
  \item Hadamard gate ($H$) on $r$-th qubit: for all $l\in [n]$, let $a^{(l)}_0=a^{(l)}_0 + a^{(l)}_r a^{(l)}_{r+n} \operatorname{mod} 2$ and swap $a^{(l)}_r$ and $a^{(l)}_{r+n}$, 
  \item Phase gate ($S$) on $r$-th qubit: for all $l\in [n]$, let $a^{(l)}_0=a^{(l)}_0 + a^{(l)}_r a^{(l)}_{r+n} \operatorname{mod} 2$ and let $a^{(l)}_r=a^{(l)}_r + a^{(l)}_{r+n} \operatorname{mod} 2$, 
  \item CNOT on $r_1$-th (control) and $r_2$-th (target) qubits: for all $l\in [n]$, let $a^{(l)}_0=a^{(l)}_0+a^{(l)}_{r_2} a^{(l)}_{r_1+n}$ mod 2, $a^{(l)}_{r_1}=a^{(l)}_{r_1}+a^{(l)}_{r_2}$ mod 2, $a^{(l)}_{r_2+n}=a^{(l)}_{r_2+n}+a^{(l)}_{r_1+n}$ mod 2. 
    \item Computational basis measurement ($\dketbra{0}{0}+\dketbra{1}{1}$) on $r$-th qubit, the output state depends on the input state:
    \begin{itemize}
      \item (a) $\exists \bbL \subseteq [n]$, such that $a_{r+n}^{(l)}=1 $ for all $l \in \bbL$. We update A as following: performing row multiplication such that only one row ($l$-th) satisfying $a_{r+n}^{(l)}=1$, then letting $a^{(l)}_0 = 0 $ or $1$ with probability $\frac{1}{2}$. 
      \item (b) if $\forall l \in [n]$, $a_{r+n}^{(l)}=0$, the computational basis measurement does not change the state. We can determine if the outcome is $\dket{0}$ or $\dket{1}$ as follows: first, performing row Gaussian elimination on $A$, such that $\forall 1 \leq k, \exists t \in [n]$ such that $a_{t+n}^{(l)}=1$, and $a_{t+n}^{(l)}=0$ $\forall l>k, \forall t \in[n]$, second performing row Gaussian elimination on rows from $k+1$ to $n$, such that $a_r^{(n)}=1$, and $a_t^{(n)}=0\; \forall t \in [2n]\setminus\{r\}$, third the measurement outcome is $\dket{0}$ if $a_r^{(n)}=0$, and $\dket{1}$ if $a_r^{(n)}=1$. 
    \end{itemize}
  \item Projectors $\dketbra{0}{0}$ and $\dketbra{1}{1}$ on $r$-th qubit. The simulation is similar to the computational basis measurement, but we need an extra factor to record the trace of the resultant state, since the projectors are not trace preserving. We denote the factor by $g$, whose initial value is $1$. The output state also depends on the input state:
      \begin{itemize}
        \item (a) $\exists \bbL \subseteq [n]$, such that $a_{r+n}^{(l)}=1 $ for all $l \in \bbL$. This is similar to the case (a) in the above, except that, instead of ``letting $a^{(l)}_0 = 0 $ or $1$ with probability $\frac{1}{2}$", we let $g\leftarrow \frac{1}{2} g$ and let $a^{(l)}_0 = 0 $ if the projector is $\dketbra{0}{0}$, and $a^{(l)}_0 = 1 $ if the projector is $\dketbra{1}{1}$. 
        \item (b) if $\forall l \in [n]$, $a_{r+n}^{(l)}=0$. This is similar to the case (b) in the above, except that, instead of ``the measurement outcome is $\dket{0}$ if $a_r^{(n)}=0$, and $\dket{1}$ if $a_r^{(n)}=1$", we let $g=0$ if the projector is $\dketbra{0}{0}$ and $a^{(r)}_n=1$, and let $g=0$ if the projector is $\dketbra{1}{1}$ and $a^{(r)}_n=0$. 
      \end{itemize}
\end{itemize}

\section{Truncation of the total noise model}
\label{app: series expansion}

The aim of this section is to demonstrate that the noise model presented in Eq.~\eqref{eq: noise global-1} adequately captures the effects of gate-wise noise, and to further show that the global approach can be effectively applied to uncorrelated noise models—for this purpose, we adopt the truncated noise model as the input for the global approach. 

We consider a quantum circuit consisting of $N$ gates: $\calU = \calU_{N}\calU_{N-1}\cdots \calU_1$,  and its noisy implementation on a realistic quantum computer: $\tilde{\calU} = \calE_{N+1} \tilde{\calU}_{N} \tilde{\calU}_{N-1}\cdots\tilde{\calU}_1$, where $\tilde{\calU}_{i} = \calU_i \calE_i$ with $\calE_i$ being the noise on the $i$-th gate and  $\calE_{N+1}$ is the noise on the upcoming measurement after $\calU_N$. Without loss of generality, we consider each $\calU_i$ to be a two-qubit gate for all $i$, and $\calE_i$ is supported on the qubits that the corresponding $\calU_i$ acts on.

To better elaborate on the process of the noise model phase, we first introduce the process matrix formalism.
 Suppose $\calU_i$ acts on $l$-th and $(l+1)$-th qubits, the process matrix representation of $\calE_i$ is 
\begin{equation}
  \calE_i(\bullet) = \sum_{P,S\in \bbP^{(i)}} \chi_{P,S}^{(i)} P(\bullet) S, 
\end{equation}
where $\bbP^{(i)} =\calI^{\otimes (l-1)} \otimes \{\calI, \calZ, \calX, \calY\}^{\otimes 2}\otimes \calI^{\otimes (n-l-1)}$ and $\chi$ is Hermitian i.e. $\chi_{P, S}^{(i)} = \chi_{S, P}^{(i)*} $. We note that, for Pauli error models (e.g., Pauli twirling is implemented on $\calE_i$), we can express it as $\calE_i (\bullet)= \sum_{P\in\bbP^{(i})} \chi^{(i)}_{P,P} P(\bullet)P$, because Pauli twirling will transform $P(\bullet)S$ to zero if $P\neq S$: 
\begin{equation}
  \frac{1}{|\bbP|} \sum_{R\in \bbP} \chi_{P, S} RPR(\bullet)RSR = \frac{1}{|\bbP|}\left( \sum_{R\in \bbP} (-1)^{{<R, P> + <R, S>}} \right) P(\bullet)S , \label{eq: proc mat+twir}
\end{equation} where the coefficient $\sum_{R\in \bbP}(-1)^{<R, P> + <R, S>} = \sum_{R\in \bbP}(-1)^{<R, PS>} = 0$ if $PS\neq \bbI$.

Compared with PTM, the advantage of the process matrix lies in that the number of summation terms in the propagated noise model remains unchanged. To be more specific, propagating $\calE_i$ through $\bar{\calU}_i = \calU_{N+1}\calU_{N}\cdots\calU_i$ to the end of the circuit, the noise becomes 
\begin{equation}
  \bar{\calE}_{i} (\bullet) =  \sum_{P, S\in \bbP^{(i)}} \chi_{P,S}^{(i)} \bar{\calU}_i(P)(\bullet) \bar{\calU}_i (S). 
\end{equation} The total noise model is the concatenation of propagated gate errors, i.e. 
\begin{equation}
  \calE (\bullet) = \bar{\calE}_{N+1} \bar{\calE}_{N} \bar{\calE}_{N-1}\cdots\bar{\calE}_1 (\bullet), 
\end{equation} where $\bar{\calE}_{N+1} = \calE_{N+1} $. 

We now perform an order expansion on the propagated noise. In the truncation step, we only retain the first several orders of the expansion. 
The order expansion of the propagated total noise model is  
\begin{eqnarray}
  \calE (\bullet) &=& \left(\prod_{i=1}^{N+1} \chi_{\bbI, \bbI}^{(i)}\right) (\bullet) \nonumber \\
  &+& \sum_{j = 1}^{N+1} \left(\prod_{i\neq j} \chi_{\bbI, \bbI}^{(i)}\right) 
  \left(\sum_{P, S \in \bbP^{(j)}}^{\backsim} \chi_{P,S}^{(j)} \bar{\calU}_{j}(P)(\bullet)  \bar{\calU}_{j}(S) \right)\nonumber\\
  &+& \sum_{j_1 > j_2} \left(\prod_{i\notin\{j_1, j_2\}} \chi_{\bbI, \bbI}^{(i)} \right)\left(
  \sum_{P_{j_1}, S_{j_1} \in \bbP^{(j_1)}}^{\backsim} 
  \sum_{P_{j_2}, S_{j_2} \in \bbP^{(j_2)}}^{\backsim} 
  \chi_{P_{j_1},S_{j_1}}^{(j_1)} 
  \chi_{P_{j_2},S_{j_2}}^{(j_2)}
  \bar{\calU}_{j_1}(P_{j_1})\bar{\calU}_{j_2}(P_{j_2}) (\bullet) \bar{\calU}_{j_2}(S_{j_2})\bar{\calU}_{j_1}(S_{j_1})  \right) \nonumber\\
  &+& \cdots \nonumber \\
  &+& \sum_{j_1 > j_2 > \cdots > j_k} \left(\prod_{i\notin\{j_1, j_2, \cdots, j_k\}} \chi_{\bbI, \bbI}^{(i)} \right) \times \nonumber \\
  && \left(
  \sum_{P_{j_1}, S_{j_1} \in \bbP^{(j_1)}}^{\backsim} 
  \sum_{P_{j_2}, S_{j_2} \in \bbP^{(j_2)}}^{\backsim} \cdots 
  \sum_{P_{j_k}, S_{j_k} \in \bbP^{(j_k)}}^{\backsim}
  \left( \prod_{l=1}^{k} \chi_{P_{j_l}, S_{j_l}}^{(j_l)} \right)
  \left(\prod_{l=1}^{k} \bar{\calU}_{j_l}(P_{j_l})\right)\left(\bullet\right)
  \left(\prod_{l=k}^{1} \bar{\calU}_{j_l}(S_{j_l})\right)
  \right) \nonumber\\
  &+& \cdots ,
\end{eqnarray} where $\backsim$ over $\sum$ indicates that the case of $P=S=\bbI$ is excluded from the summation. Similar to Eq.~\eqref{eq: proc mat+twir}, Pauli twirling will transform $
  \left(\prod_{l=1}^{k} \bar{\calU}_{j_l}(P_{j_l})\right)\left(\bullet\right)
  \left(\prod_{l=k}^{1} \bar{\calU}_{j_l}(S_{j_l})\right)$ to zero if $
  \left(\prod_{l=1}^{k} \bar{\calU}_{j_l}(P_{j_l})\right)
  \left(\prod_{l=k}^{1} \bar{\calU}_{j_l}(S_{j_l})\right) \neq \bbI$, which results in $\calE$ becoming a Pauli error model. In the $k$-th order, only $k$ gates have noise, while the other gates are noise-free. As we will demonstrate next, when the total error rate is less than a certain value, the probability of $k$ gates failing simultaneously decreases exponentially with $k$. Therefore, we can approximate $\calE$ to a sufficient precision by retaining only the first few orders. To ensure normalization, all remaining terms can be treated as the identity map. 

Finally, we analyze the required truncation order to achieve a given precision and its corresponding computational complexity. We consider that gate errors are all two-qubit errors; then the $k$-th order term has $\begin{pmatrix}
   N+1 \\ k
\end{pmatrix} c^k \leq (c(N+1))^k$ sub-terms with the form of $P(\bullet)S$, where $c = 255$. Suppose $\chi_{\bbI, \bbI}^{(i)}\sim   1 - \epsilon$ and the others $\chi_{P, S}^{(i)}\sim \epsilon^\prime$, where $\epsilon^\prime \sim \frac{\epsilon}{15}$, the coefficient of each sub-term is $\sim(1-\epsilon)^{n-k}\epsilon^{\prime k}$. Therefore, the impact of the $k$-th order term is quantified by $\begin{pmatrix}
   N+1 \\ k
\end{pmatrix} c^k (1-\epsilon)^{n-k}\epsilon^{\prime k} \leq  (c\epsilon^\prime (N+1))^k$. If Pauli twirling is implemented, $c$ will become a number between 15 and 255 ($c=15$ if local Pauli twirling is used, and $c\geq 15$ if global Pauli twirling is used). We consider implementing local Pauli twirling. Then, denote the total error rate as $\epsilon_{tot} = (N+1)\epsilon$, the impact of the $k$-th order term is $\sim (\epsilon_{tot})^k$ which decreases exponentially with $k$ if $\epsilon_{tot}< 1$. Therefore, to approximate $\calE$ to an error smaller than $\delta$, we need to truncate the series expansion to the order of $k = {\log\delta}/{\log\epsilon_{tot}}$. Considering that the number of sub-terms in the $k$-th order term is smaller than $(c(N+1))^k$, the computing cost (the number of terms in the truncated noise model) is $(c(N+1))^{{\log\delta}/{\log\epsilon_{tot}}}$. 

\section{Simulating the noise inversion without explicit inversion}
\label{app: Simulating the noise inversion without explicit inversion}

Denote the overall noise model 
\begin{equation} 
    \calE = \sum_{\calB\in {\bbB}} C_{\calB} \calB = \eta \sum_{\calB\in \bbB} \Prob{\calB} \operatorname{sgn}(\calB) \calB, 
\end{equation} where the quasi-probabilities $C_{\calB}$ satisfy $\sum_{\calB\in {\bbB}} C_{\calB} =1$, $\eta = \sum_{\calB\in \bbB}|C_{\calB}|$, $\Prob{\calB} = |C_{\calB}|/\eta$, and $\operatorname{sgn}(\calB) = C_{\calB} / |C_{\calB}|$. 
Suppose we have access to the noise model $\calE$, meaning that the factor $\eta$ is known and we are able to sample $\operatorname{sgn}(\calB)\calB$ with probability $\Prob{\calB}$. We highlight that the ``access" does not require the explicit value of $\Prob{\calB}$, but only an efficient encoding of the probabilities, such as a neural-network or tensor-network representation. 

We can simulate the noise inversion $\calE^{-1}$ without finding the explicit inverse of $\calE$. The derivation of our method is as follows. We rewrite the noise model as
\begin{equation}
  \calE = \eta(\Prob{\calI}\calI + \Prob{\calN} \calN), 
  \label{eq: noise I and N}
\end{equation}
where 
\begin{equation}
  \Prob{\calN} = \sum_{\calB\in \bbB\setminus \{\calI\}} \Prob{\calB}, \quad 
  \calN = \sum_{\calB \in {\bbB}\setminus\{ \calI\}}\frac{\Prob{\calB}}{\Prob{\calN}} \operatorname{sgn}(\calB) \calB.  
\end{equation}
Since $\Prob{\calI} + \Prob{\calN} =  1$, Eq.~\eqref{eq: noise I and N} implies the noise channel occurs with probability $\Prob{\calN}$.
We can write the truncated noise model as $\calE = C_\calI \calI + C_\calN \calN$. According to the Neumann series $(\calI - \calA)^{-1} = \sum_{l=0}^{\infty} \calA^{l} $ for bounded operator $\calA$, we can write the inversion of $\calE$ as
\begin{equation}
    \calE^{-1} = \frac{1}{\eta \Prob{\calI}} \sum_{l=0}^\infty \left( - \frac{\Prob{\calN}}{\Prob{\calI}} \calN  \right)^l. 
\end{equation} 
Notice that 
\begin{equation}
  \sum_{l = 0}^{\infty} \left[\frac{\Prob{\calN}}{\Prob{\calI}}\right]^l = \frac{1}{1 - \Prob{\calN}/\Prob{\calI}}, 
\end{equation}
we can write $\calE^{-1}$ as 
\begin{equation}
    \calE^{-1} = \frac{1}{\eta \left[\Prob{\calI} - \Prob{\calN}\right]} \sum_{l=0}^\infty \left[ 1 - \frac{\Prob{\calN}}{\Prob{\calI}}\right] \left[\frac{\Prob{\calN}}{\Prob{\calI}}\right]^l (-1)^l \calN^l. 
    \label{eq: inversion neumann decomp}
\end{equation} 
The above derivation ensures that the coefficients in the summation in Eq.~\eqref{eq: inversion neumann decomp} are probabilities. 
According to Eq.~\eqref{eq: inversion neumann decomp}, we are able to simulate the noise inversion $\calE^{-1}$ without explicitly inverting $\calE$ as follows: 
\begin{enumerate}
    \item Generate a random number $l$ with probability $\left[ 1 - \frac{\Prob{\calN}}{\Prob{\calI}}\right] \left[\frac{\Prob{\calN}}{\Prob{\calI}}\right]^l$, which can be done using inverse transform sampling. 
    \item Generate $l$ random signed basis channels $\operatorname{sgn}(\calB_1)\calB_1, \operatorname{sgn}(\calB_2)\calB_2, \dots, \operatorname{sgn}(\calB_l)\calB_l$ with probability $\frac{\Prob{\calB}}{\Prob{\calN}}$. The concatenation of them $\prod_{i=1}^l \operatorname{sgn}(\calB_i) \calB_i$ is an unbiased estimator of $\calN^l$. 
    \item $\frac{1}{\eta \left[\Prob{\calI} - \Prob{\calN}\right]}(-1)^l \prod_{i=1}^l \operatorname{sgn}(\calB_i) \calB_i$ is an unbiased estimator of $\calE^{-1}$. 
\end{enumerate} This method introduces a sampling overhead of
\begin{equation}
  \gamma = \frac{1}{\eta \left[\Prob{\calI} - \Prob{\calN}\right]} = \frac{1}{|C_{\calI}| - \sum_{\calB\in \bbB\setminus \{\calI\}} |C_{\calB}|}  
\end{equation}
If the compression procedure is employed, the sampling overhead becomes 
\begin{equation}
  \gamma^\prime = \frac{1}{|C^\prime_{\calI}| - \sum_{\calB\in \bbB^\prime\setminus \{\calI\}} |C^\prime_{\calB}|},  
\end{equation} where $\bbB^\prime \subseteq \bbB$. 
 We can find that $\gamma^\prime \leq \gamma$ because $|C_{\calI}| \leq |C^{\prime}_{\calI}|$ and  $\sum_{\calB\in \bbB\setminus \{\calI\}} |C^\prime_{\calB}| \leq \sum_{\calB\in \bbB\setminus \{\calI\}} |C_{\calB}|$.  

\section{Discussion on the compression ratio}
\label{app: discussion on the compression ratio}

\begin{figure}
  \begin{center}
    \includegraphics{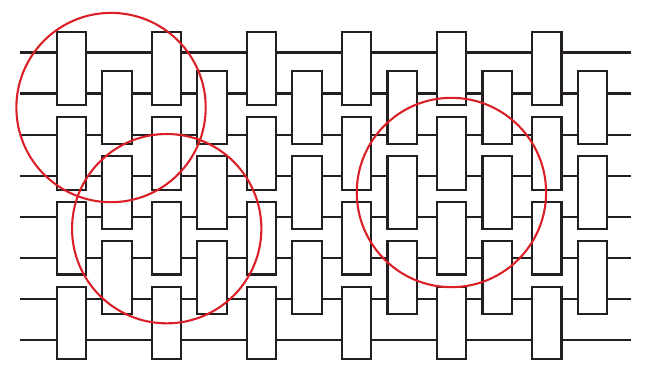}
    \caption{Regional inter-gate error auto-cancellation. The effect of an error on one gate has a chance of being canceled by an error on another gate. In each of the small regions, exemplified by the red circles, the probability of two errors canceling each other is independent of the circuit size.}
  \end{center}
\end{figure}

Here, we discuss the efficacy of noise compression. Specifically, we address the question of how the compression ratio scales with the number of qubits. The compression ratio is defined as the error rate before compression divided by the error rate after compression; a lower value indicates better efficacy.
We argue that for the local approach, when the number of qubits is large, the compression ratio can approach 1—meaning the method becomes increasingly ineffective as the qubit count grows. In contrast, for the global approach, the compression ratio remains high even as the number of qubits increases, ensuring sustained efficacy at scale. 

We first discuss the compression ratio for the local approach. Consider a Pauli error occurring on a gate within the quantum circuit. Such an error can be compressed, meaning it may be treated as the identity channel and does not require additional resource expansion to cancel, if the corresponding propagated error lies in $\bbZ$. For a sufficiently random quantum circuit, the probability of being in $\bbZ$ is $\frac{|\bbZ|}{|\bbP|} = \frac{1}{2^n}$. Therefore, for the local approach, the compression ratio drops to one exponentially with the number of qubits. 

This drawback of a low compression ratio in systems with a large number of qubits does not apply to the global approach. The reason is that the global approach accounts for two key factors: not only whether a propagated error falls within $\bbZ$, but also the auto-cancellation of inter-gate errors. Specifically, when errors occur on two distinct gates, there is a probability that these errors cancel each other out after propagation. Importantly, the probability of this auto-cancellation with errors on regional gates is independent of both the number of qubits and the circuit depth. This auto-cancellation of regional inter-gate errors thus supports the retention of a high compression ratio even in large-scale quantum circuits.

\section{{Partial Classical Noise Inversion}} 
\label{app: tractable intractable decomposition}

{
We work in the $\bbZ\oplus\bbZ^\perp$ partition of Pauli-transfer-matrix space, where $\bbZ=\{I,Z\}^{\otimes n}$ spans the $2^n$-dimensional $\bbZ$-subspace and $\bbZ^\perp$ its orthogonal complement. Writing $\Pi_\bbZ=\sum_{\sigma\in\bbZ}\dketbra\sigma\sigma$ and $\Pi_\perp=\id-\Pi_\bbZ$ for the two projectors, we block a map $\calE$ as
\begin{equation}
  \calE=\begin{pmatrix}\calE_{00}&\calE_{01}\\[2pt]\calE_{10}&\calE_{11}\end{pmatrix},\qquad
  \calE_{00}=\Pi_\bbZ\calE\Pi_\bbZ,\ \ \calE_{01}=\Pi_\bbZ\calE\Pi_\perp,\ \ \calE_{10}=\Pi_\perp\calE\Pi_\bbZ,\ \ \calE_{11}=\Pi_\perp\calE\Pi_\perp.
\end{equation}
The \textit{classically intractable part} of $\calE$ is its upper-right block $\calE_{01}$ [Fig.~\ref{fig: validity and optimality}(a)]: by Proposition~\ref{prop:valid}, CNI mitigates $\calE$ without bias if and only if $\calE_{01}=0$. Here, we prove Proposition~\ref{prop: Quantum-classical factorization} of the main text, which splits any $\calE$ into two factors $\calE_q$ and $\calE_c$, where $\calE_c$ is the classical (classically tractable) factor and $\calE_q$ is the quantum (classically intractable) factor. In addition, we provide detailed derivation of partial CNI, where $\calE_c$ is mitigated using CNI and $\calE_q$ is mitigated via a refined probabilistic error cancellation method resilient to gate-dependent noise. We also perform a rigorous analysis about the cost of partial CNI. } 


{
\begin{proof}[Proof of Proposition~\ref{prop: Quantum-classical factorization}]
Block multiplication gives
\begin{equation}
  \calE_c\calE_q=\begin{pmatrix}\calE_{00}&\calE_{00}\calE_{00}^{-1}\calE_{01}\\ \calE_{10}&\calE_{10}\calE_{00}^{-1}\calE_{01}+\calE_{11}-\calE_{10}\calE_{00}^{-1}\calE_{01}\end{pmatrix}=\calE.
\end{equation}
By construction the upper-right block of $\calE_c$ is zero, so $\calE_c$ obeys the validity condition Eq.~\eqref{eq: condition}. For trace preservation, recall that the PTM of any trace-preserving map has $I$-row (the first row) equal to $(1,0,\dots,0)$; applied to $\calE$ this forces the $I$-row of $\calE_{01}$ to vanish and the $I$-row of $\calE_{00}$ (hence of $\calE_{00}^{-1}$) to be $(1,0,\dots,0)$, so $\calN=\calE_{00}^{-1}\calE_{01}$ has vanishing $I$-row and $\calE_q=\calI+\calN$ is trace-preserving; consequently $\calE_c=\calE\calE_q^{-1}$ is trace-preserving as well. 
\end{proof}
}

{
Since $\calE_c$ is block-lower-triangular and $\calE_q$ is unit upper-triangular, their inverses share the same properties. According to Proposition~\ref{prop:valid},  $\calE_c$ can be mitigated with only classical post-processing (via CNI), while $\calE_q$ must be mitigated inside of the quantum circuit (via sampling random circuit). 
For $\calE_c$,  suppose the quasi-probability decomposition of its inverse is 
\begin{equation}
  \calE_c^{-1}=\sum_{\calB\in\bbB}Q_{\calB, c}\calB,
  \label{eq: quasi dep tmp}
\end{equation}
CNI enables us to mitigate $\calE_c$ with a measurement shot cost at $O(\gamma_c^2)$ with 
\begin{equation}
  \gamma_c = \sum_{\calB\in \bbB}\absLR{Q_{\calB, c}}. 
\end{equation} In practice, a classically simulable channel $\calB^\prime$ satisfying $\calB^\prime \calM_{\bbZ} = \calM_{\bbZ} \calB$ is sampled with probability $|Q_{\calB, c}|/\gamma_c$, realizing a classical channel $\calE_{c}^{\prime - 1}$ satisfying $\calE_{c}^{\prime - 1} \calM_{\bbZ} = \calM_{\bbZ}\calE_{c}^ {- 1}$. 
To mitigate $\calE_q$, we have to tackle the problem that the inserted error-mitigating gates (e.g., $\calB$) themselves are subjected to gate-dependent noise ($\calE_\calB$). To resolve this problem, we incorporate the noise into the quasi-probability decomposition:
\begin{equation}
  \calE_q^{-1}  = \sum_{\calB\in\bbB} Q_{\calB, q} \calE_{\calB} \calB,
  \label{eq: noisy quasi dep}
\end{equation}
which can be equivalently written as
\begin{equation}
  \sum_{\calB\in\bbB} Q_{\calB, q} \calE_{\calB} \calB  \calE_q = \calI.
\end{equation}
In the PTM view, the above becomes the linear system of equations:
\begin{equation}
    \sum_{\calB\in\bbB} Q_{\calB, q} \bigl[\calE_{\calB} \calB  \calE_q \bigr]_{i, j} = \bigl[\calI\bigr]_{i, j} \quad \forall i, j,
\end{equation} where $[\cdot]_{i, j}$ denote the $i,j$-th entry of a PTM. 
With prior knowledge on the noise models $\calE_q$ and $\calE_{\calB}$, we can obtain the quasi-probabilities $\{Q_{\calB, q}\}$ by solving the above linear system of equations. The solution to this linear system exists if the set of noisy operations $\{\calE_{\calB} \calB  \calE_q: \forall \calB\in \bbB\}$ is complete, and the solution is unique if the noisy operations are linearly independent. This existence and uniqueness holds when the error rate is small.  The sampling cost for mitigating $\calE_q$ is $O(\gamma_q^2)$ with 
\begin{equation}
  \gamma_q = \sum_{\calB\in\bbB}\absLR{Q_{\calB, q}}. 
\end{equation}

The overhead is sub-multiplicative under composition, i.e.
\begin{equation}
  \gamma\;\le\;\gamma_c\,\gamma_q. \label{eq: gamma submult}
\end{equation} 
This follows because the optimal sampling overhead $\gamma$ to realize $\calE^{-1}$ can be smaller than $\gamma_c\gamma_q = \sum_{\calB,\calB^\prime}\absLR{Q_{\calB, c}Q_{\calB^\prime, q}}$. Specifically, the product decomposition is rarely optimal, because the composed channels $\calB\calB^\prime$ may coincide or partially cancel. The factorization is nonetheless \textit{near}-optimal---the gap is bounded by a constant that depends only on the intractable shear $\calE_q$ and tends to one as the intractable part vanishes. We prove this in the following subsection. 
}

{

\subsection{The penalty on the cost due to the factorization}

\label{app: penalty}

The factorization of the total noise into classically tractable part $\calE_c$ and classically intractable part $\calE_q$ introduces a penalty in the total sampling overhead. In this subsection, we show that the penalty is upper bounded by $c = 1+O(\norm{\calN}_\diamond)$, which is small when $\calE_q$ is negligible compared to $\calE_c$. 

\begin{proposition}[Near-optimality of the factorization]
\label{prop: near optimal factorization}
Let $\Gamma(\calM)=\min\{c_++c_-:\calM=c_+\Lambda_+-c_-\Lambda_-,\ \Lambda_\pm\ \mathrm{CPTP},\ c_\pm\ge0\}$ be the simulation overhead of a map $\calM$. 
It is satisfied that 
\begin{equation}
  \Gamma(\calE^{-1})\;\le\;\Gamma(\calE_c^{-1})\,\Gamma(\calE_q^{-1})\;\le\; c\Gamma(\calE^{-1}), \label{eq: two sided gamma}
\end{equation}
where the constant $c:=\Gamma(\calE_q)\Gamma(\calE_q^{-1})\ge1$ and it depends only on the intractable shear $\calE_q=\calI+\calN$ (recall $\calE_q^{-1}=\calI-\calN$, as $\calN^2=0$). Moreover $c=1$ if and only if $\calE$ is classically tractable ($\calN=0$). Each factor's overhead is separately bounded by the diamond norm of the classically intractable shift,
\begin{equation}
  \Gamma(\calE_q^{-1})\;\le\;1+\norm{\calN}_\diamond,\qquad \Gamma(\calE_q)\;\le\;1+\norm{\calN}_\diamond,
  \label{eq: gamma_q diamond bound}
\end{equation}
so that circuit-sampling cost alone is controlled by $\norm{\calN}_\diamond$, and the penalty inherits
\begin{equation}
  c\;\le\;\big(1+\norm{\calN}_\diamond\big)^2\;=\;1+O(\norm{\calN}_\diamond). \label{eq: c bound}
\end{equation}
\end{proposition}

\begin{proof}
The left inequality of Eq.~\eqref{eq: two sided gamma} is Eq.~\eqref{eq: gamma submult}. For the right one, inverting $\calE=\calE_q\calE_c$ gives $\calE_c^{-1}=\calE^{-1}\,\calE_q$, so sub-multiplicativity of $\Gamma$ yields $\Gamma(\calE_c^{-1})=\Gamma(\calE^{-1}\,\calE_q)\le\Gamma(\calE^{-1})\,\Gamma(\calE_q)$; multiplying by $\Gamma(\calE_q^{-1})$ gives $\Gamma(\calE_c^{-1})\Gamma(\calE_q^{-1})\le\Gamma(\calE_q)\Gamma(\calE_q^{-1})\,\Gamma(\calE^{-1})=c\,\Gamma(\calE^{-1})$. Every quasi-probability decomposition of a trace-preserving map has coefficients summing to one, so $\Gamma\ge1$ and hence $c\ge1$. 

It remains to bound $\Gamma(\calE_q)=\Gamma(\calI+\calN)$ and $\Gamma(\calE_q^{-1})=\Gamma(\calI-\calN)$ through the diamond norm, defined for a Hermiticity-preserving map by
\begin{equation}
  \norm{\calN}_\diamond:=\sup_{\rho}\norm{(\calN\otimes\calI)(\rho)}_1 , \label{eq: diamond def}
\end{equation}
the supremum running over density operators $\rho$ of the system together with an arbitrary ancilla. The connection is the representation of the (trace-annihilating, Hermiticity-preserving) shift as a scaled difference of two channels,
\begin{equation}
  \calN=\nu\,(\Lambda_1-\Lambda_2),\qquad \Lambda_{1,2}\ \mathrm{CPTP},\quad \nu\ge0 . \label{eq: channel difference}
\end{equation}
Given Eq.~\eqref{eq: channel difference}, $\calI+\calN=(1+\nu)\,\tfrac{\calI+\nu\Lambda_1}{1+\nu}-\nu\,\Lambda_2$ exhibits $\calI+\calN$ as a quasi-probability mixture of the CPTP maps $\tfrac{\calI+\nu\Lambda_1}{1+\nu}$ (a convex combination of $\calI$ and $\Lambda_1$) and $\Lambda_2$, with overhead $1+2\nu$; the same holds for $\calI-\calN$, so $\Gamma(\calE_q),\Gamma(\calE_q^{-1})\le1+2\nu$ and $c\le(1+2\nu)^2$. We now show that the least admissible $\nu$ is exactly $\tfrac12\norm{\calN}_\diamond$.

{(i) $\nu\ge\tfrac12\norm{\calN}_\diamond$.} Apply the definition \eqref{eq: diamond def} to any representation \eqref{eq: channel difference}. For every $\rho$, the channels $\Lambda_{1,2}\otimes\calI$ send $\rho$ to density operators of unit trace norm, so the triangle inequality gives
\begin{equation}
  \norm{(\calN\otimes\calI)(\rho)}_1=\nu\,\norm{(\Lambda_1\otimes\calI)(\rho)-(\Lambda_2\otimes\calI)(\rho)}_1\le\nu\,(1+1)=2\nu .
\end{equation}
Taking the supremum over $\rho$ yields $\norm{\calN}_\diamond\le2\nu$.

{(ii) $\nu\le\tfrac12\norm{\calN}_\diamond$.} Let $Z\succeq0$ be any operator with $Z\succeq J(\calN)$, where $J(\calN)$ is the Choi operator of the shift, and let $\mathcal{K}_+,\mathcal{K}_-$ be the completely positive maps with Choi operators $Z$ and $Z-J(\calN)$, i.e., $J(\mathcal{K}_+)=Z, J(\mathcal{K}_-)=Z-J(\calN)$, so that $\mathcal{K}_+-\mathcal{K}_-=\calN$. Trace-annihilation, $\Tr_{\mathrm{out}}J(\calN)=0$, makes their output marginals coincide: $\Tr_{\mathrm{out}}J(\mathcal{K}_+)=\Tr_{\mathrm{out}}J(\mathcal{K}_-)=\Tr_{\mathrm{out}}Z$. Setting $\lambda=\norm{\Tr_{\mathrm{out}}Z}_\infty$ and adding to $\mathcal{K}_{\pm}$ a common CP map $\mathcal{R}$ with Choi $J(\mathcal{R}) = \tau\otimes(\lambda\id-\Tr_{\mathrm{out}}Z)\succeq0$ (any output state $\tau$) gives us the CPTP maps $\Lambda_{1,2}=(\mathcal{K}_\pm+\mathcal{R})/\lambda$, with $\calN=\lambda(\Lambda_1-\Lambda_2)$. Thus Eq.~\eqref{eq: channel difference} holds with $\nu\le\norm{\Tr_{\mathrm{out}}Z}_\infty$ for every feasible $Z$, and the semi-definite-programming characterization of the diamond norm, $\tfrac12\norm{\calN}_\diamond=\min_{Z\succeq0,\,Z\succeq J(\calN)}\norm{\Tr_{\mathrm{out}}Z}_\infty$, gives $\nu\le\tfrac12\norm{\calN}_\diamond$.

Combining (i) and (ii), the least admissible $\nu$ equals $\tfrac12\norm{\calN}_\diamond$, so $c\le(1+2\nu)^2=(1+\norm{\calN}_\diamond)^2$, which is Eq.~\eqref{eq: c bound}.
\end{proof}

\subsubsection{Expanded derivation of $\nu\le\tfrac12\norm{\calN}_\diamond$}

Recall that to a linear map $\Phi$ on operators of a $d$-dimensional system one associates its {Choi operator}
\begin{equation}
  J(\Phi)=(\Phi\otimes\calI)\Big(\textstyle\sum_{i,j=1}^{d}\ketbra{i}{j}\otimes\ketbra{i}{j}\Big),
\end{equation} 
an operator on $\calH_{\rm out}\otimes\calH_{\rm in}$ that depends {linearly} on $\Phi$. Conversely, the map is recovered from its Choi operator by the output state $\Phi(\rho)=\Tr_{\mathrm{in}}\big[J(\Phi)\,(\id_{\rm out}\otimes\rho^{T})\big]$, where $\rho^{T}$ is the transpose in the $\{|i\rangle\}$ basis and $\Tr_{\mathrm{in}}$ traces out the input system. Two standard facts translate the properties of a quantum channel into linear-algebraic conditions on $J(\Phi)$:
\begin{equation}
  \Phi\ \text{is completely positive}\iff J(\Phi)\succeq0,\qquad
  \Phi\ \text{is trace-preserving}\iff \Tr_{\mathrm{out}}J(\Phi)=\id_{\rm in}, \label{eq: choi facts}
\end{equation}
where $\Tr_{\mathrm{out}}$ traces out the output system. To verify the trace-preservation equivalence, take the output partial trace of the Choi operator:
\begin{equation}
  \Tr_{\mathrm{out}}J(\Phi)=\sum_{i,j}\Tr\!\big[\Phi(\ketbra{i}{j})\big]\,\ketbra{i}{j} . \label{eq: choi tp proof}
\end{equation}
Trace preservation, $\Tr[\Phi(X)]=\Tr X$ for all $X$, is---by linearity and because $\{\ketbra{i}{j}\}$ is a basis of operators---equivalent to $\Tr[\Phi(\ketbra{i}{j})]=\Tr[\ketbra{i}{j}]=\delta_{ij}$ for all $i,j$. Substituting these coefficients into Eq.~\eqref{eq: choi tp proof} gives $\Tr_{\mathrm{out}}J(\Phi)=\sum_i\ketbra{i}{i}=\id_{\rm in}$; conversely, reading off the coefficients of $\ketbra{i}{j}$ from $\Tr_{\mathrm{out}}J(\Phi)=\id_{\rm in}$ returns $\Tr[\Phi(\ketbra{i}{j})]=\delta_{ij}$, whence $\Tr[\Phi(X)]=\Tr X$ for every $X$. A {channel} (CPTP map) is therefore one whose Choi operator is positive semi-definite and has output marginal equal to the identity. 

We now consider the Choi form of $\calN$. Because $\calN$ is Hermiticity-preserving (a Hermitian operator can be written as a linear combination of Pauli operators, while $\calN$ maps a Pauli operator either to zero or to another Pauli operator), $J(\calN)$ is a Hermitian operator. Because $\calN$ is trace-annihilating, i.e.\ $\Tr[\calN(X)]=0$ for every $X$, its Choi operator has {zero} output marginal:
\begin{equation}
  \Tr_{\mathrm{out}}J(\calN)=\sum_{i,j}\Tr\!\big[\calN(\ketbra{i}{j})\big]\,\ketbra{i}{j}=0 .
\end{equation}

{We now split $\calN$ into a difference of two CP maps.} Pick {any} Hermitian operator $Z$ obeying the two conditions
\begin{equation}
  Z\succeq0\qquad\text{and}\qquad Z\succeq J(\calN)\ \ \text{(equivalently } Z-J(\calN)\succeq0\text{)} . \label{eq: feasible Z}
\end{equation}
Such operators always exist---for instance $Z$ is the positive part of $J(\calN)$. Define two maps $\mathcal{K}_+$ and $\mathcal{K}_-$ by declaring their Choi operators to be
\begin{equation}
  J(\mathcal{K}_+)=Z,\qquad J(\mathcal{K}_-)=Z-J(\calN).
\end{equation}
By Eq.~\eqref{eq: feasible Z} both are positive semi-definite, so by Eq.~\eqref{eq: choi facts} both $\mathcal{K}_\pm$ are completely positive. Subtracting and using linearity of the Choi map,
\begin{equation}
  J(\mathcal{K}_+-\mathcal{K}_-)=Z-\big(Z-J(\calN)\big)=J(\calN)\quad\Longrightarrow\quad \mathcal{K}_+-\mathcal{K}_-=\calN .
\end{equation}
We have thus written $\calN$ as a difference of two CP maps. They are not yet channels because they need not be trace-preserving: their output marginals, using $\Tr_{\mathrm{out}}J(\calN)=0$ from Step~1, are
\begin{equation}
  \Tr_{\mathrm{out}}J(\mathcal{K}_+)=\Tr_{\mathrm{out}}Z,\qquad
  \Tr_{\mathrm{out}}J(\mathcal{K}_-)=\Tr_{\mathrm{out}}Z-\Tr_{\mathrm{out}}J(\calN)=\Tr_{\mathrm{out}}Z .
\end{equation}
Both equal the {same} positive operator $G:=\Tr_{\mathrm{out}}Z\succeq0$. 

{We now pad both maps with a common map to fix the output marginal to identity, thereby ensuring the trace preservation.} Let $\lambda:=\norm{G}_\infty$ be the largest eigenvalue of $G$; then $\lambda\id-G\succeq0$. Fix any output state $\tau$ ($\tau\succeq0$, $\Tr\tau=1$) and define a map $\mathcal{R}$ through the Choi operator
\begin{equation}
  J(\mathcal{R})=\tau\otimes(\lambda\id-G)\succeq0 ,
\end{equation}
which is CP (its Choi operator is a tensor product of positive operators, hence positive) and has output marginal $\Tr_{\mathrm{out}}J(\mathcal{R})=\Tr(\tau)\,(\lambda\id-G)=\lambda\id-G$. Now add the {same} map $\mathcal{R}$ to each of $\mathcal{K}_\pm$ and rescale by $1/\lambda$:
\begin{equation}
  \Lambda_1:=\frac{\mathcal{K}_++\mathcal{R}}{\lambda},\qquad \Lambda_2:=\frac{\mathcal{K}_-+\mathcal{R}}{\lambda}.
\end{equation}
Each $\Lambda_{1,2}$ is CP, being a positive multiple of a sum of CP maps. Its output marginal is, by linearity,
\begin{equation}
  \Tr_{\mathrm{out}}J(\Lambda_{1,2})=\frac{\Tr_{\mathrm{out}}J(\mathcal{K}_\pm)+\Tr_{\mathrm{out}}J(\mathcal{R})}{\lambda}=\frac{G+(\lambda\id-G)}{\lambda}=\id .
\end{equation}
By Eq.~\eqref{eq: choi facts}, $\Lambda_1$ and $\Lambda_2$ are therefore genuine channels (CPTP).

Adding the {same} $\mathcal{R}$ to both maps cancels in the difference:
\begin{equation}
  \Lambda_1-\Lambda_2=\frac{(\mathcal{K}_++\mathcal{R})-(\mathcal{K}_-+\mathcal{R})}{\lambda}=\frac{\mathcal{K}_+-\mathcal{K}_-}{\lambda}=\frac{\calN}{\lambda}\quad\Longrightarrow\quad \calN=\lambda\,(\Lambda_1-\Lambda_2).
\end{equation}
This is precisely the channel-difference representation \eqref{eq: channel difference}, realized with $\nu=\lambda=\norm{\Tr_{\mathrm{out}}Z}_\infty$. Since the construction works for {every} $Z$ obeying Eq.~\eqref{eq: feasible Z}, the least admissible $\nu$ satisfies $\nu\le\norm{\Tr_{\mathrm{out}}Z}_\infty$ for all such $Z$.

Minimizing the bound over all feasible $Z$ gives
\begin{equation}
  \nu\;\le\;\min_{Z\succeq0,\ Z\succeq J(\calN)}\norm{\Tr_{\mathrm{out}}Z}_\infty .
\end{equation}
The quantity on the right is a semi-definite program, and it is exactly half the diamond norm of $\calN$ (proved explicitly in the next subsubsection):
\begin{equation}
  \min_{Z\succeq0,\ Z\succeq J(\calN)}\norm{\Tr_{\mathrm{out}}Z}_\infty=\tfrac12\norm{\calN}_\diamond , \label{eq: diamond sdp}
\end{equation}
which is the standard semi-definite-programming formula for the diamond norm of a (here trace-annihilating) Hermiticity-preserving map (illustrated in the following subsection). Hence $\nu\le\tfrac12\norm{\calN}_\diamond$, as claimed. 

\subsubsection{The semi-definite programming for diamond norm}

We now record the semi-definite-programming (SDP) characterization of the diamond norm and show that Eq.~\eqref{eq: diamond sdp} is its specialization to a trace-annihilating shift. For an arbitrary Hermiticity-preserving map $\Phi$ with Choi operator $J(\Phi)$, Watrous~\cite{Watrous} established that the diamond norm is the value of the SDP
\begin{equation}
  \norm{\Phi}_\diamond=\min_{Z_0,Z_1\succeq0}\Big\{\tfrac12\norm{\Tr_{\mathrm{out}}Z_0}_\infty+\tfrac12\norm{\Tr_{\mathrm{out}}Z_1}_\infty:\ \begin{pmatrix}Z_0&-J(\Phi)\\[2pt]-J(\Phi)^\dagger&Z_1\end{pmatrix}\succeq0\Big\}, \label{eq: watrous block sdp}
\end{equation}
the minimization running over positive-semi-definite operators $Z_0,Z_1$ on the output space. As a consistency check, for a channel ($\Phi$ trace-preserving) the choice $Z_0=Z_1=J(\Phi)$ is feasible---since $\big(\begin{smallmatrix}1&-1\\-1&1\end{smallmatrix}\big)\otimes J(\Phi)\succeq0$---and, by the trace-preservation identity in Eq.~\eqref{eq: choi facts}, attains the objective $\tfrac12(\norm{\id}_\infty+\norm{\id}_\infty)=1=\norm{\Phi}_\diamond$.

Our shift $\calN$ is Hermiticity-preserving, $J(\calN)=J(\calN)^\dagger$, and trace-annihilating, $\Tr_{\mathrm{out}}J(\calN)=0$; under these two properties Eq.~\eqref{eq: watrous block sdp} collapses to Eq.~\eqref{eq: diamond sdp} as follows. We first symmetrize Eq.~\eqref{eq: watrous block sdp}. Both the objective and the constraint of Eq.~\eqref{eq: watrous block sdp} are invariant under the exchange $Z_0\leftrightarrow Z_1$, so by convexity an optimum with $Z_0=Z_1=:Z_{\mathrm{sym}}$ exists. Conjugating the constraint by a Hadamard rotation block-diagonalizes it,
\begin{equation}
  \begin{pmatrix}Z_{\mathrm{sym}}&-J(\calN)\\[2pt]-J(\calN)&Z_{\mathrm{sym}}\end{pmatrix}\succeq0\iff Z_{\mathrm{sym}}\succeq J(\calN)\ \ \text{and}\ \ Z_{\mathrm{sym}}\succeq -J(\calN),
\end{equation}
whence the SDP reduces to the symmetric two-sided form
\begin{equation}
  \norm{\calN}_\diamond=\min\big\{\norm{\Tr_{\mathrm{out}}Z_{\mathrm{sym}}}_\infty:\ Z_{\mathrm{sym}}\succeq J(\calN),\ Z_{\mathrm{sym}}\succeq -J(\calN)\big\}. \label{eq: sdp two sided}
\end{equation} Then, by substituting $  Z=\tfrac12\bigl(Z_{\mathrm{sym}}+J(\calN)\bigr)$ into the above, we arrive at
\begin{equation}
    \norm{\calN}_\diamond=\min\big\{2\norm{\Tr_{\mathrm{out}}Z}_\infty:\ Z\succeq 0,\ Z\succeq J(\calN)\big\},
\end{equation}
which is precisely Eq.~\eqref{eq: diamond sdp}. Note that we have used trace annihilation, $\Tr_{\mathrm{out}}J(\calN)=0$, which gives $\Tr_{\mathrm{out}}Z_{\mathrm{sym}}=2\,\Tr_{\mathrm{out}}Z$.



Equation~\eqref{eq: two sided gamma} sandwiches $\Gamma(\calE_c^{-1})\Gamma(\calE_q^{-1})$ between $\Gamma(\calE^{-1})$ and $c\,\Gamma(\calE^{-1})$: the block-$UL$ factorization is optimal up to the factor $c$, and, crucially, $c$ is governed {solely} by the classically intractable shear $\calE_q$. However large the tractable cost $\gamma_c$, the split wastes nothing beyond the penalty $c=1+O(\norm{\calN}_\diamond)$. {In the case that noise has only a small intractable part ($\norm{\calN}_\diamond\ll1$), one has $\Gamma(\calE_c^{-1})\Gamma(\calE_q^{-1})\approx\Gamma(\calE^{-1})$, and so isolating the intractable cost $\Gamma(\calE_q^{-1})$---the irreducible price of the classically intractable block---comes essentially for free.} 

}

{

\subsection{Variance of Partial CNI}
\label{app: hybrid variance}


We consider implementing CNI with partial quantum sampling using three nested sampling parameters: $M$, the number of random circuits (independent draws of the random circuit specified by $\calB_q$, each a distinct physical circuit); $K$, the number of measurement shots taken on each of the $M$ circuits; and $L$, the number of classical basis-channel instances $\calB_c$ drawn per measurement shot. Labeling the draws $\calB_q^{(m)}$, the outcomes $b_{m,k}$, and the classical instances $\calB_c^{(m,k,l)}$, the hybrid estimator is
\begin{equation}
  \hat f_{\mathrm{tot}}=\frac1M\sum_{m=1}^M \underbrace{\gamma_q\operatorname{sgn}\!\big(Q_{\calB_q^{(m)}}\big)}_{\text{weight of random circuit}}\;\frac1K\sum_{k=1}^K\;\frac1L\sum_{l=1}^L\;\underbrace{\gamma_c\operatorname{sgn}\!\big(Q_{\calB_c^{(m,k,l)}}\big)\dbra F \calB_c^{(m,k,l)}\dket{b_{m,k}}}_{\text{classical CNI estimator }} .
\end{equation}
$\hat f_{\mathrm{tot}}$ is unbiased, $\bbE[\hat f_{\mathrm{tot}}]=f$: averaging over $\calB_q$ realizes $\calE_q^{-1}$ in the circuit (Eq.~\ref{eq: noisy quasi dep}), leaving effective noise $\calE_c$, which is inverted by the CNI average over $\calB_c$ (Eq.~\ref{eq: quasi dep tmp}).

We now derive the variance bound of Theorem~\ref{th: coarse grained variance}. We follow the same steps as the variance of CNI-based shadow estimation, Eqs.~\eqref{eq: var cni shadow 1}--\eqref{eq: var cni shadow 4}. Collect the $L$ classical instances of shot $(m,k)$ into the CNI inverse estimator
\begin{equation}
  \hat{\calE}_{c;\,m,k}^{-1}=\frac1L\sum_{l=1}^L\gamma_c\operatorname{sgn}\!\big(Q_{\calB_c^{(m,k,l)}}\big)\,\calB_c^{(m,k,l)} ,
\end{equation}
so the estimator is $\hat f_{\mathrm{tot}}=\frac1M\sum_m \gamma_q\operatorname{sgn}\!\big(Q_{\calB_q^{(m)}}\big)\frac1K\sum_k\dbra F\hat{\calE}_{c;m,k}^{-1}\dket{b_{m,k}}$. The $M$ random circuits are i.i.d.; bounding the variance by the second moment, 
\begin{align}
  \Var{\hat f_{\mathrm{tot}}}
  &=\frac1M\,\Var{\gamma_q\operatorname{sgn}\!\big(Q_{\calB_q}\big)\frac1K\sum_{k=1}^K\dbra F\hat{\calE}_{c;m,k}^{-1}\dket{b_{m,k}}}
  \ \le\ \frac{\gamma_q^2}{M}\,\bbE\!\left[\Big(\frac1K\sum_{k=1}^K\dbra F\hat{\calE}_{c;m,k}^{-1}\dket{b_{m,k}}\Big)^{\!2}\right] \nonumber\\[4pt]
  &\le\frac{\gamma_q^2}{M}\left\{\frac1{K^2}\sum_{k=1}^K\bbE\!\left[\dbra F\hat{\calE}_{c;m,k}^{-1}\dket{b_{m,k}}^2\right]
  +\frac1{K^2}\sum_{k\neq k'}\bbE\!\left[\dbra F\hat{\calE}_{c;m,k}^{-1}\dket{b_{m,k}}\,\dbra F\hat{\calE}_{c;m,k'}^{-1}\dket{b_{m,k'}}\right]\right\} \nonumber\\[4pt]
  &=\frac{\gamma_q^2}{M}\,\frac1K\,\bbE\!\left[\dbra F\hat{\calE}_{c}^{-1}\dket{b}^2\right]
  +\frac{\gamma_q^2}{M}\Big(1-\frac1K\Big)\,\bbE\!\left[\dbra F\hat{\calE}_{c}^{-1}\dket{b}\,\dbra F\hat{\calE}_{c}^{\prime-1}\dket{b'}\right] ,
  \label{eq: hybrid var expand}
\end{align}
where we used $\operatorname{sgn}^2=1$ to replace the variance by the second moment, expanded the square over the $K$ shots into its $k=k'$ and $k\neq k'$ parts (which contain $K$ and $K(K-1)$ terms, hence the prefactors $1/K$ and $1-1/K$), and used that the shots and their classical instances are i.i.d.\ given the circuit specified by $\calB_q$. In the last line, $\hat{\calE}_c^{-1}$ and $\hat{\calE}_c^{\prime-1}$ are two independent CNI processes for different shots $b$ and $b^\prime$ of the same circuit.

We first focus on the {Diagonal ($k=k'$) term.} Expanding a single-shot estimator over its $L$ classical instances and separating $l=l'$ from $l\neq l'$,
\begin{align}
  \bbE\!\left[\dbra F\hat{\calE}_c^{-1}\dket b^2\right]
  &=\bbE\!\left[\Big(\frac1L\sum_{l=1}^L\gamma_c\operatorname{sgn}\!\big(Q_{\calB_c^{(l)}}\big)\dbra F\calB_c^{(l)}\dket b\Big)^{\!2}\right] \nonumber\\[4pt]
  &=\frac{\gamma_c^2}{L^2}\sum_{l=1}^L\bbE\!\left[\dbra F\calB_c^{(l)}\dket b^2\right]
  +\frac{\gamma_c^2}{L^2}\sum_{l\neq l'}\bbE\!\left[\operatorname{sgn}\!\big({\calB_c^{(l)}}\big)\operatorname{sgn}\!\big({\calB_c^{(l')}}\big)\dbra F\calB_c^{(l)}\dket b\,\dbra F\calB_c^{(l')}\dket b\right] \nonumber\\[4pt]
  &=\frac{\gamma_c^2}{L}\,\bbE\!\left[\dbra F\calB_c\dket b^2\right]
  +\Big(1-\frac1L\Big)\gamma_c^2\,\bbE\!\left[\dbra F \operatorname{sgn}\!\big({\calB_c}\big) \calB_c\dket b\,\dbra F \operatorname{sgn}\!\big({\calB_c^{^\prime}}\big)\calB_c'\dket b\right] .
\end{align}
Taking the expectation explicitly---the circuit $\calB_q$ is drawn with probability $\absLR{Q_{\calB_q}}/\gamma_q$, the outcome $b$ with the Born weight $\dbra b\calE_c\calE_q\calB_q\calU\dket\rho$, and each classical instance $\calB_c$ with probability $\absLR{Q_{\calB_c}}/\gamma_c$---and absorbing the quasi-probability signs into these $\ell_1$-normalized sampling weights defines the two semi-norms
\begin{align}
  \norm F_{\star,'}^2 &:= \sum_{\calB_q}\frac{\absLR{Q_{\calB_q}}}{\gamma_q}\sum_b\dbra b\calE_c\calE_q\calB_q\calU\dket\rho\;\sum_{\calB_c}\frac{\absLR{Q_{\calB_c}}}{\gamma_c}\dbra F\calB_c\dket b^2 , \label{eq: Fstar def}\\
  \norm F_{\circ,'}^2 &:= \sum_{\calB_q}\frac{\absLR{Q_{\calB_q}}}{\gamma_q}\sum_b\dbra b\calE_c\calE_q\calB_q\calU\dket\rho\;\Big(\sum_{\calB_c}\frac{\absLR{Q_{\calB_c}}}{\gamma_c} \operatorname{sgn}\!\big({\calB_c}\big)  \dbra F\calB_c\dket b\Big)^{\!2} , \label{eq: Fcirc def}
\end{align}
so that
\begin{equation}
  \bbE\!\left[\dbra F\hat{\calE}_{c}^{-1}\dket{b}^2\right]\ \le\ \frac{\gamma_c^2}{L}\,\norm F_{\star,'}^2+\Big(1-\frac1L\Big)\gamma_c^2\,\norm F_{\circ,'}^2 . \label{eq: hybrid diag}
\end{equation}

We now focus on the {Off-diagonal ($k\neq k'$) term.} The two shots are independent given the circuit, and---the key point---the CNI average over the classical instances reproduces the exact inverse, $\bbE_{\calB_c}[\hat{\calE}_c^{-1}]=\calE_c^{-1}$. Writing out the two independent Born outcomes $b,b'$ of the circuit $\calB_q$,
\begin{align}
  \bbE\!\left[\dbra F\hat{\calE}_{c}^{-1}\dket b\,\dbra F\hat{\calE}_{c}^{\prime-1}\dket{b'}\right]
  &=\sum_{\calB_q}\frac{\absLR{Q_{\calB_q}}}{\gamma_q}\sum_{b,b'}\dbra b\calE_c\calE_q\calB_q\calU\dket\rho\,\dbra{b'}\calE_c\calE_q\calB_q\calU\dket\rho\;\dbra F\calE_c^{-1}\dket b\,\dbra F\calE_c^{-1}\dket{b'} \nonumber\\[4pt]
  &=\sum_{\calB_q}\frac{\absLR{Q_{\calB_q}}}{\gamma_q}\dbra F\calE_c^{-1}\Big(\sum_b\dketbra b b\Big)\calE_c\calE_q\calB_q\calU\dket\rho\;\dbra F\calE_c^{-1}\Big(\sum_{b'}\dketbra{b'}{b'}\Big)\calE_c\calE_q\calB_q\calU\dket\rho \nonumber\\[4pt]
  &=\sum_{\calB_q}\frac{\absLR{Q_{\calB_q}}}{\gamma_q}\big(\dbra F\MZ\calE_q\calB_q\calU\dket\rho\big)^2\ \eqqcolon \ \norm F_{\square,'}^2 , \label{eq: hybrid offdiag}
\end{align}
where we used $\sum_b\dketbra b b=\MZ$ and the CNI cancellation $\calE_c^{-1}\MZ\calE_c=\MZ$; the CNI overhead $\gamma_c$ {drops out} of this term entirely.

We finally obtain the \textit{Refined bound.} Substituting Eqs.~\eqref{eq: hybrid diag} and \eqref{eq: hybrid offdiag} into Eq.~\eqref{eq: hybrid var expand} gives
\begin{equation}
  \Var{\hat f_{\mathrm{tot}}}\ \le\ \frac{\gamma_q^2}{M}\left[\Big(1-\frac1K\Big) \norm F_{\square,'}^2+\frac1K\Big(\frac{\gamma_c^2}{L}\norm F_{\star,'}^2+\Big(1-\frac1L\Big)\gamma_c^2\norm F_{\circ,'}^2\Big)\right] ,
  \label{eq: hybrid variance refined}
\end{equation}
where the semi-norms obey $ \norm F_{\square,'}^2 \le \norm F_{\circ,'}\le\norm F_{\star,'}\le\norm F_\infty$. 
 Bounding every semi-norm by $\norm F_\infty$, using $\gamma_c^2/L+(1-1/L)\gamma_c^2=\gamma_c^2$ and $1-1/K\le1$, yields the \textit{coarse-grained bound} of Theorem~\ref{th: coarse grained variance},
\begin{equation}
  \boxed{\ \Var{\hat f_{\mathrm{tot}}}\ \le\ \frac{\gamma_q^2}{M}\Big(1+\frac{\gamma_c^2}{K}\Big)\norm F_\infty^2.\ } \label{eq: hybrid variance}
\end{equation}

We now discuss the coarse-grained variance bound given by Eq.~\eqref{eq: hybrid variance}.
First, {when classically intractable noise persists, it introduces a circuit-sampling overhead $\gamma_q$, which is only related to the classically intractable factor $\calE_q$ of the total noise model and is much smaller than $\gamma_c$ (and $\gamma$) whenever the intractable part is relatively small.} The variance floor $\gamma_q^2\norm F_\infty^2/M$ is the irreducible cost of quantum circuit sampling required to cancel $\calE_q$, and is reduced only by drawing more random circuits (more shots do not help). The (possibly large) $\gamma_c$ enters only the cheap term $\gamma_q^2\gamma_c^2\norm F_\infty^2/(MK)$, which can be reduced using more shots.

Second, {when the classically intractable part vanishes, the circuit-sampling cost vanishes and the bound recovers pure CNI.} The floor in Eq.~\eqref{eq: hybrid variance} is an artefact of upper-bounding the variance by the second moment; noticing that $\gamma_q = 1$ when the classically intractable part vanishes and retaining the $-f^2$ terms which are dropped in Eq.~\eqref{eq: hybrid var expand} when deriving the upper bound, one can find that the circuit sampling cost collapses to zero when $\calE_q=\calI$ (the draw $\calB_q = \calI$ is deterministic). In that limit Eq.~\eqref{eq: hybrid variance} reduces exactly to the pure-CNI variance of Theorem~\ref{th: perf CNI}: the noise-induced cost is imposed only on measurement shots $K$ instead of circuit samples $M$.

Third, {the bound is not tight.} Equation~\eqref{eq: hybrid variance} is designed to expose how the classically intractable noise introduces the circuit-sampling overhead $M$, and is intentionally coarse on the classical side. In particular, although $L$ does not appear explicitly in the coarse-grained bound in Eq.~\ref{eq: hybrid variance}, the refined bound Eq.~\ref{eq: hybrid variance refined} shows that the variance can still be reduced by increasing $L$ exactly as in pure CNI. 
}

\section{{Minimum-cost partial quantum sampling via direct optimization}}
\label{app: minimum cost in-circuit}

{
Proposition~\ref{prop: Quantum-classical factorization} gives a constructive route to compensate CNI with partial quantum sampling: form the factorization $\calE=\calE_c\calE_q$ and invert the intractable factor $\calE_q$ inside of the circuit through the quasi-probability decomposition of Eq.~\eqref{eq: noisy quasi dep}. This route is expensive: $\calE$ is a $4^n\times4^n$ Pauli transfer matrix, and the factorization requires the inverse $\calE_{00}^{-1}$ of its $2^n\times2^n$ $\bbZ$-block, which is generically dense even when $\calE$ is a propagated local noise channel. We give an alternative to circumvent this factorization and, moreover, returns the decomposition with cheapest circuit-sampling overhead. 
When $\calE$ is propagated local noise, or a truncated noise model (Appendix~\ref{app: series expansion}), only a small basis set $\bbB$ is involved and the optimal split can be searched for directly.

We split the total inverse into a quantum factor $\calE_q^{\star-1}$ and a classical factor $\calE_c^{\star-1}$,
\begin{equation}
  \calE^{-1}=\calE_q^{\star-1}\,\calE_c^{\star-1},
  \label{eq: opt split}
\end{equation}
the stars distinguishing these optimized factors from the canonical $\calE_c,\calE_q$ of Proposition~\ref{prop: Quantum-classical factorization}. The quantum factor is realized by sampling the noisy basis operations $\{\calE_\calB\calB\}_{\calB\in\bbB}$ (each a basis channel $\calB$ dressed by its gate-dependent noise $\calE_\calB$). The classical factor is realized by CNI and must therefore be classically tractable, i.e., its upper-right block must vanish so that it propagates through the computational-basis measurement (Proposition~\ref{prop:valid}, Eq.~\eqref{eq: condition}). So we decompose it over a set $\bbB_c=\{\calC_k\}$ of {tractable} maps. Crucially, such tractable maps are generally {not} individual universal (Clifford+projector) basis channels: those typically have non-vanishing upper-right blocks, and a generic tractable map is instead a real linear combination of them whose upper-right blocks cancel. For instance, on a single qubit the tractable map $\dketbra{I}{I}+\dketbra{X}{Z}$ is realized as $\tfrac12(\calH+\calS_y)$, the average of the Hadamard channel $\calH$ and the $\tfrac\pi2$ rotation $\calS_y$ about $Y$, neither of which is tractable on its own.  Minimizing the circuit-sampling overhead $\gamma_q$ over all valid splits gives
}
\begin{equation}
\begin{aligned}
  \gamma_q^\star\ =\ \min_{\{Q_{\calB,q}\},\,\{Q_{k,c}\}}\quad & \sum_{\calB\in\bbB}\absLR{Q_{\calB,q}}\\[2pt]
  \mathrm{s.t.}\quad & \calE_q^{\star-1}=\sum_{\calB\in\bbB}Q_{\calB,q}\,\calE_\calB\calB,\qquad \calE_c^{\star-1}=\sum_{\calC_k\in\bbB_c}Q_{k,c}\,\calC_k,\\[2pt]
   & \calE_q^{\star-1}\,\calE_c^{\star-1}\,\calE=\calI .
\end{aligned}
\label{eq: gamma_q sdp}
\end{equation}

{
Three features make Program~\eqref{eq: gamma_q sdp} tractable. First, decomposing $\calE_c^{\star-1}$ over the tractable basis $\bbB_c$ guarantees its tractability {by construction}, so no constraint on the full $2^n\times(4^n-2^n)$ off-diagonal block of any map need ever be evaluated. Second, the coupling is imposed in the form $\calE_q^{\star-1}\calE_c^{\star-1}\calE=\calI$, equivalent to Eq.~\eqref{eq: opt split} but involving only $\calE$ and the basis operations---never the dense inverse $\calE^{-1}$ or $\calE_{00}^{-1}$---and for propagated local or truncated noise these products act on a small support. Third, the objective is the $\ell_1$ norm of the in-circuit quasi-probabilities and the problem is posed entirely over the small bases $\bbB$ and $\bbB_c$, being linear in either factor's coefficients with the other held fixed; the coupling to the fixed, invertible $\calE$ excludes the trivial all-zero solution.
}

{
In the following proposition, we show that the circuit-sampling cost $\gamma_q^\star$ is no more than the cost of the explicit factorization given by Proposition~\ref{prop: Quantum-classical factorization}, and is typically cheaper.
}
\begin{proposition}[The optimization improves on the explicit factorization]
\label{prop: gamma_q opt}
Let $\gamma_q$ be the circuit-sampling overhead of the explicit factorization $\calE=\calE_c\calE_q$ of Proposition~\ref{prop: Quantum-classical factorization}---the cost of decomposing its intractable inverse $\calE_q^{-1}$ over $\{\calE_\calB\calB\}$. Whenever this factorization is realizable in the chosen bases, the optimum of Eq.~\eqref{eq: gamma_q sdp} satisfies
\begin{equation}
  \gamma_q^\star\ \le\ \gamma_q. 
  \label{eq: gamma_q opt bound}
\end{equation}
\end{proposition}

\begin{proof}
By Proposition~\ref{prop: Quantum-classical factorization}, $\calE^{-1}=\calE_q^{-1}\calE_c^{-1}$ with $\calE_c^{-1}$ tractable, hence representable over $\bbB_c$. Setting $(\calE_q^{\star-1},\calE_c^{\star-1})=(\calE_q^{-1},\calE_c^{-1})$, with $\calE_q^{-1}$ decomposed over $\{\calE_\calB\calB\}$ at cost $\gamma_q$, is therefore feasible for Eq.~\eqref{eq: gamma_q sdp} and attains objective value $\gamma_q$. The minimum can only be smaller, giving Eq.~\eqref{eq: gamma_q opt bound}.
\end{proof}

{
The inequality is strict whenever a tractable residual cheaper than the canonical $\calE_c$ exists. Proposition~\ref{prop: Quantum-classical factorization} fixes the split by demanding that $\calE_q$ be the unit-triangular shear; relaxing this---keeping only that $\calE_c^\star$ be tractable---leaves the gauge freedom $\calE_c\mapsto\calE_c\calG,\ \calE_q\mapsto\calG^{-1}\calE_q$ for any tractable $\calG$, and Eq.~\eqref{eq: gamma_q sdp} selects the gauge of least circuit-sampling cost. The optimization thus matches or undercuts the explicit factorization while working only with the small basis $\bbB$, never with the dense inverse $\calE_{00}^{-1}$.
}
\end{document}